\newcommand{\bsub}{\begin{subequations}}
\newcommand{\esub}{\end{subequations}$\!$}
\newcommand{\e}{{\rm e}}
\newcommand{\R}{{\mathbb{R}}}
\newcommand{\eps}{{\displaystyle \varepsilon}}
\newcommand{\lam}{{\lambda}}
\newcommand{\p}{\prime}
\newcommand{\xb}{{\bf{x}}}
\newcommand{\xbo}{\mbox{$0$}}
\newcommand{\lb}{{\pmb l}}
\newcommand{\pbi}{{\pmb P}_i}
\newcommand{\pb}{{\pmb P}}
\newcommand{\etai}{{\pmb \eta}_i}
\newcommand{\etaip}{{\pmb \eta}_{i}^{\perp}}
\newcommand{\kb}{{\pmb k}}
\newcommand{\kapb}{{\pmb \kappa}}
\newcommand{\yb}{{\bf{y}}}
\newcommand{\zero}{\mbox{$0$}}
\newtheorem{example}{Example}[section]
\newcommand{\bex}{\begin{example}\rm}
\newcommand{\eex}{\end{example}}
\def\R{{\mathbb{R}}}
\def\Z{{\mathbb{Z}}}
\def\C{{\mathbb{C}}}
\def\im{{\rm Im}}
\def\re{{\rm Re}}
\def\<{\langle}
\def\>{\rangle}
\def\l{\pmb{l}}
\numberwithin{equation}{section}
\renewcommand{\theequation}{\arabic{section}.\arabic{equation}}
\newtheorem{lemma}{Lemma}[section]
\newtheorem{conjecture}{Conjecture}[section]
\newtheorem{result}{Principal Result}[section]
\newtheorem{remark}{Remark}[section]
\title[The Stability of Periodic Patterns of Localized Spots for
  Reaction-Diffusion Systems]
 {Logarithmic Expansions and the Stability of Periodic Patterns of
   Localized Spots for Reaction-Diffusion Systems in $\R^2$}
\author[D. Iron, J. Rumsey, M. J. Ward, J. Wei ]{%
  D.\ns I\ls R\ls O\ls N, \ns J. \ns R\ls U\ls M\ls S\ls E\ls Y, \ns
   M.\ns J. \ns W\ls A\ls R\ls D, \ns
  \and  J.\ns W\ls E\ls I
}
\affiliation{David Iron; Department of Mathematics,
 Dalhousie University, Halifax, Nova Scotia, B3H 3J5, Canada,

 John Rumsey; Faculty of Management,
 Dalhousie University, Halifax, Nova Scotia, B3H 3J5, Canada,

 Michael Ward; Department of Mathematics,  University of British Columbia,
 Vancouver, British Columbia, V6T 1Z2, Canada,

Juncheng Wei, Department of Mathematics,  University of British Columbia,
 Vancouver, British Columbia, V6T 1Z2, Canada and Department of Mathematics, Chinese University of Hong Kong, Shatin, New Territories, Hong Kong.}
\date{\today}
\begin{document}

\label{firstpage}
\maketitle

\baselineskip=12pt

\begin{abstract}

The linear stability of steady-state periodic patterns of localized
spots in $\R^2$ for the two-component Gierer-Meinhardt (GM) and
Schnakenburg reaction-diffusion models is analyzed in the semi-strong
interaction limit corresponding to an asymptotically small diffusion
coefficient $\eps^2$ of the activator concentration. In the limit
$\eps\to 0$, localized spots in the activator are centered at the
lattice points of a Bravais lattice with constant area $|\Omega|$.  To
leading order in $\nu={-1/\log\eps}$, the linearization of the
steady-state periodic spot pattern has a zero eigenvalue when the
inhibitor diffusivity satisfies $D={D_0/\nu}$, for some $D_0$
independent of the lattice and the Bloch wavevector $\kb$.  From a
combination of the method of matched asymptotic expansions,
Floquet-Bloch theory, and the rigorous study of certain nonlocal
eigenvalue problems, an explicit analytical formula for the continuous
band of spectrum that lies within an ${\mathcal O}(\nu)$ neighborhood
of the origin in the spectral plane is derived when $D={D_0/\nu} +
D_1$, where $D_1={\mathcal O}(1)$ is a de-tuning parameter. The
periodic pattern is linearly stable when $D_1$ is chosen small enough
so that this continuous band is in the stable left-half plane
$\mbox{Re}(\lambda)<0$ for all $\kb$. Moreover, for both the
Schnakenburg and GM models, our analysis identifies a model-dependent
objective function, involving the regular part of the Bloch Green's
function, that must be maximized in order to determine the specific
periodic arrangement of localized spots that constitutes a linearly
stable steady-state pattern for the largest value of $D$. From a
numerical computation, based on an Ewald-type algorithm, of the
regular part of the Bloch Green's function that defines the objective
function, it is shown within the class of oblique Bravais lattices
that a regular hexagonal lattice arrangement of spots is optimal for
maximizing the stability threshold in $D$.

\end{abstract}

\noindent Key words: singular perturbations, localized spots,
logarithmic expansions, Bravais lattice, Floquet-Bloch theory, Green's
function, nonlocal eigenvalue problem.

\baselineskip=16pt

\setcounter{equation}{0}
\setcounter{section}{0}
\section{Introduction} \label{section:1}

Spatially localized spot patterns occur for various classes of
reaction-diffusion (RD) systems with diverse applications to
theoretical chemistry, biological morphogenesis, and applied physics.
A survey of experimental and theoretical studies, through RD modeling,
of localized spot patterns in various physical or chemical contexts is
given in \cite{vanag}. Owing to the widespread occurrence of localized
patterns in various scientific applications, there has been
considerable focus over the past decade on developing a theoretical
understanding of the dynamics and stability of localized solutions to
singularly perturbed RD systems.  A brief survey of some open
directions for the theoretical study of localized patterns in various
applications is given in \cite{kn_survey}. More generally, a wide
range of topics in the analysis of far-from-equilibrium patterns
modeled by PDE systems are discussed in \cite{book_Nishiura}.

In this broad context, the goal of this paper is to analyze the linear
stability of steady-state periodic patterns of localized spots in
$\R^2$ for two-component RD systems in the semi-strong interaction
regime characterized by an asymptotically large diffusivity ratio.
For concreteness, we will focus our analysis on two specific
models. One model is a simplified Schnakenburg-type system
\begin{equation}
v_{t}    =\eps^{2} \Delta v-v+ uv^{2} \,, \qquad \tau u_t =
 D \Delta u+a-\eps^{-2} uv^{2} \,, \label{1:sc}
\end{equation}
where $0<\eps\ll 1$, $D>0$, $\tau>0$, and $a>0$, are parameters.  The
second model is the prototypical Gierer-Meinhardt (GM) model
formulated as
\begin{equation}
v_{t}    =\eps^{2} \Delta v-v+ {v^{2}/u}\,, \qquad
\tau u_t =  D \Delta u - u +\eps^{-2} v^{2} \,,
 \label{1:gm}
\end{equation}
where $0<\eps\ll 1$, $D>0$, and $\tau>0$, are parameters.

Our linear stability analysis for these two models will focus on the
semi-strong interaction regime $\eps\to 0$ with $D={\mathcal O}(1)$.
For $\eps \to 0$, the localized spots for $v$ are taken to be centered
at the lattice points of a general Bravais lattice $\Lambda$, where
the area $|\Omega|$ of the primitive cell is held constant.  A brief
outline of lattices and reciprocal lattices is given in \S
\ref{2:lattice}.  Our main goal for the Schnakenburg and GM models is
to formulate an explicit objective function to be maximized that will
identify the specific lattice arrangement of localized spots that is a
linearly stable steady-state pattern for the largest value of
$D$. Through a numerical computation of this objective function we
will show that it is a regular hexagonal lattice arrangement of spots that
yields this optimal stability threshold.

For the corresponding problem in 1-D, the stability of periodic
patterns of spikes for the GM model was analyzed in \cite{vpd} by
using the geometric theory of singular perturbations combined with
Evans-function techniques.  On a bounded 1-D domain with homogeneous
Neumann boundary conditions, the stability of $N$-spike steady-state
solutions was analyzed in \cite{iww} and \cite{ww} through a detailed
study of certain nonlocal eigenvalue problems. On a bounded $2-D$
domain with Neumann boundary conditions, a leading order in
$\nu={-1/\log\eps}$ rigorous theory was developed to analyze the
stability of multi-spot steady-state patterns for the GM model (cf.  \cite{w},
\cite{WGM1}), the Schnakenburg model (cf.~\cite{survey_Wei:2008}), and
the Gray-Scott (GS) model (cf.~\cite{WGS2}), in the parameter regime where
$D={D_0/\nu}\gg 1$.  For the Schnakenburg and GM models, the
leading-order stability threshold for $D_0$ corresponding to a zero
eigenvalue crossing was determined explicitly. A hybrid
asymptotic-numerical theory to study the stability, dynamics, and
self-replication patterns of spots, that is accurate to all
powers in $\nu$, was developed for the Schnakenburg model in
\cite{KWW_schnak} and for the GS model in \cite{cw_1}. In \cite{MO1}
and \cite{MO3}, the stability and self-replication behavior of a one-spot
solution for the GS model was analyzed.

One of the key features of the finite domain problem in comparison
with the periodic problem is that the spectrum of the linearization of
the former is discrete rather than continuous. As far as we are aware,
to date there has been no analytical study of the stability of
periodic patterns of localized spots in $\R^2$ on Bravais lattices for
singularly perturbed two-component RD systems. In the weakly nonlinear
Turing regime, an analysis of the stability of patterns on Bravais lattices
in $\R^3$ using group-theoretic tools of bifurcation theory with symmetry
was done in \cite{call_1} and \cite{call_2}.

By using the method of matched asymptotic expansions, in the limit
$\eps\to 0$ a steady-state localized spot solution is constructed for
(\ref{1:sc}) and for (\ref{1:gm}) within the fundamental Wigner-Seitz
cell of the lattice. The solution is then extended periodically to all
of $\R^2$. The stability of this solution with respect to ${\mathcal
  O}(1)$ time-scale instabilities arising from zero eigenvalue
crossings is then investigated by first using the Floquet-Bloch
theorem (cf.~\cite{KR}, \cite{K}) to formulate a singularly perturbed
eigenvalue problem in the Wigner-Seitz cell $\Omega$ with
quasi-periodic boundary conditions on $\partial\Omega$ involving the
Bloch vector $\kb$. In \S~\ref{2:gr_lattice}, the Floquet-Bloch theory
is formulated and a few key properties of the Bloch Green's function
for the Laplacian are proved. In \S~\ref{schnak} and \S~\ref{gm}, the
spectrum of the linearized eigenvalue problem is analyzed by using the
method of matched asymptotic expansions combined with a spectral
analysis based on perturbations of a nonlocal eigenvalue problem. More
specifically, to leading-order in $\nu={-1/\log\eps}$ it is shown that
a zero eigenvalue crossing occurs when $D\sim{D_0/\nu}$, where $D_0$
is a constant that depends on the parameters in the RD system, but is
independent of the lattice geometry except through the area $|\Omega|$
of the Wigner-Seitz cell. Therefore, to leading-order in $\nu$, the
stability threshold is the same for any periodic spot pattern on a
Bravais lattice $\Lambda$ when $|\Omega|$ is held fixed. In order to
determine the effect of the lattice geometry on the stability
threshold, an expansion to higher-order in $\nu$ must be
undertaken. In related singularly perturbed eigenvalue problems for
the Laplacian in 2-D domains with holes, the leading-order eigenvalue
asymptotics in the limit of small hole radius only depends on the
number of holes and the area of the domain, and not on the arrangement
of the holes within the domain. An analytical theory to calculate
higher order terms in the eigenvalue asymptotics for these problems,
which have applications to narrow-escape and capture phenomena in
mathematical biology, is given in \cite{WHK}, \cite{KTW}, and
\cite{PWPK}.

To determine a higher-order approximation for the stability threshold
for the periodic spot problem we perform a more refined perturbation
analysis in order to calculate the continuous band $\lambda\sim
\nu\lambda_1(\kb,D_1,\Lambda)$ of spectra that lies within an
${\mathcal O}(\nu)$ neighborhood of the origin, i.e that satisfies
$|\lambda(\kb,D_1,\Lambda)|\leq {\mathcal O}(\nu)$, when
$D={D_{0}/\nu} + D_1$ for some de-tuning parameter
$D_1={\mathcal O}(1)$. This band is found to depend on the lattice
geometry $\Lambda$ through the regular part of certain Green's
functions. For the Schnakenburg model, $\lambda_1$ depends on the
regular part $R_{b0}(\kb)$of the Bloch Green's function for the
Laplacian, which depends on both $\kb$ and the lattice. For the GM
Model, $\lambda_1$ depends on both $R_{b0}(\kb)$ and the regular part
$R_{0p}$ of the periodic source-neutral Green's function on $\Omega$. For both
models, this band of continuous spectrum that lies near the origin
when $D-{D_0/\nu}={\mathcal O}(1)$ is proved to be real-valued.

For both the Schnakenburg and GM models, the de-tuning parameter $D_1$
on a given lattice is chosen so that $\lambda_1<0$ for all
$\kb$. Then, to determine the lattice for which the steady-state spot
pattern is linearly stable for the largest possible value of $D$, we
simply maximize $D_1$ with respect to the lattice geometry.  In this
way, for each of the two RD models, we derive a model-dependent
objective function in terms of the regular parts of certain Green's
functions that must be maximized in order to determine the specific
periodic arrangement of localized spots that is linearly stable for
the largest value of $D$. The calculation of the continuous band of
spectra near the origin, and the derivation of the objective function
to be maximized so as to identify the optimal lattice, is done for the
Schnakenburg and GM models in \S~\ref{schnak} and \S~\ref{gm},
respectively.

In \S~\ref{simp:schnak} and \S~\ref{simp:gm} we exhibit a very simple
alternative method to readily identify this objective function for the
Schnakenburg and GM models, respectively. In \S~\ref{simp:gs}, this
simple alternative method is then used to determine an optimal lattice
arrangement of spots for the GS RD model.

In \S~\ref{sec:ewald} we show how to numerically compute the regular
part $R_{b0}(\kb)$ of the Bloch Green's function for the Laplacian
that arises in the objective function characterizing the optimum
lattice. Similar Green's functions, but for the Helmholtz operator,
arise in the linearized theory of the scattering of water waves by a
periodic arrangement of obstacles, and in related wave phenomena in
electromagnetics and photonics.  The numerical computation of Bloch
Green's functions is well-known to be a challenging problem owing to
the very slow convergence of their infinite series representations in
the spatial domain, and methodologies to improve the convergence
properties based on the Poisson summation formula are surveyed in
\cite{Linton} and \cite{Moroz}.  The numerical approach we use to
compute $R_{b0}(\kb)$ is an Ewald summation method, based on the
Poisson summation formula involving the direct and reciprocal
lattices, and follows closely the methodology developed in
\cite{Beyl-1} and \cite{Beyl-2}. Our numerical results show that
within the class of oblique Bravais lattices having a common area
$|\Omega|$ of the primitive cell, it is a regular hexagonal lattice that
optimizes the stability threshold for the Schnakneburg, GM, and GS
models.

Finally, we remark that optimal lattice arrangements of localized
structures in other PDE models having a variational structure, such as
the study of vortices in Ginzburg-Landau theory (cf.~\cite{sandier}),
the analysis of Abrikosov vortex lattices in the magnetic Ginzburg-Landau
system (cf.~\cite{sigal1, sigal2}) and the study of droplets in
diblock copolymer theory (cf.~\cite{chen}), have been identified
through the minimization of certain energy functionals. In contrast,
for our RD systems having no variational structure, the optimal
lattice is identified not through an energy minimization criterion,
but instead from a detailed analysis that determines the spectrum of
the linearization near the origin in the spectral plane when $D$ is
near a critical value.

\setcounter{equation}{0}
\setcounter{section}{1}
\section{Lattices and the Bloch Green's Functions}\label{2:latt_gr}

In this section we recall some basic facts about lattices and we
introduce the Bloch-periodic Green's functions that plays a central
role in the analysis in \S \ref{schnak}--\ref{simp}. A few key lemmas
regarding this Green's function are established.

\subsection{A Primer on Lattices and Reciprocal Lattices}\label{2:lattice}
Let $\lb_1$ and $\lb_2$ be two linearly independent vectors in $\R^2$,
with angle $\theta$ between them, where without loss of generality we
take $\lb_1$ to be aligned with the positive $x$-axis. The Bravais lattice
$\Lambda$ is defined by
\begin{equation}
  \Lambda = \Big{\lbrace}{ m \lb_1 + n \lb_2 \, \Big{\vert} \,\,
    m,\, n \in \mathbb{Z}\Big{\rbrace}} \,, \label{lattice-def}
\end{equation}
where $\mathbb{Z}$ denotes the set of integers. The {\em primitive}
cell is the parallelogram generated by the vectors $\lb_1$
and $\lb_2$ of area $|\lb_1 \times \lb_2|$. We will set the area of
the primitive cell to unity, so that $|\lb_1||\lb_2|\sin\theta=1$.

We can also write ${\pmb l}_1,{\pmb l}_2\in\R^2$ as complex numbers
$\alpha,\beta\in\C$. Without loss of generality we set $\im(\beta)>0$,
$\im(\alpha)=0$, and $\re(\alpha)>0$. In terms of $\alpha$ and
$\beta$, the area of the primitive cell is
$\im(\overline\alpha\,\beta)$, which we set to unity. For a regular hexagonal
lattice, $|\alpha|=|\beta|$, with $\beta=\alpha\,e^{i\theta}$,
$\theta={\pi/3}$, and $\alpha>0$. This yields $\im(\beta)=\alpha
{\sqrt{3}/2}$ and the unit area requirement gives ${\alpha^2\sqrt3/2}
= 1$, which yields $\alpha=\left({4/3}\right)^{1/4}$.  For the square
lattice, we have $\alpha=1$, $\beta=i$, and $\theta={\pi/2}$.

In terms of ${\pmb l}_1,{\pmb l}_2\in\R^2$, we have that ${\pmb l}_1 =
\bigl(\re(\alpha),\im(\alpha)\bigr)$, ${\pmb l}_2=
\bigl(\re(\beta),\im(\beta)\bigr)$ generate the lattice
(\ref{lattice-def}). For a regular hexagonal lattice of unit area
for the primitive cell we have
 \begin{equation}\label{hex-alp-bet}
 {\pmb l}_1 = \left(\left(\frac 43\right)^{1/4},0\right)
 \qquad\text{ and }\qquad
  {\pmb l}_2 = \left(\frac 43\right)^{1/4}\left(\frac 12,
\frac{\sqrt 3}{2}\right).
 \end{equation}
In Fig.~\ref{hex-cells} we plot a portion of the hexagonal lattice
generated with this ${\pmb l}_1,{\pmb l}_2$ pair.

\begin{figure} [htb]
\begin{center}
\psfrag{p}{$\scriptstyle{\bf 0}$}
\psfrag{alpha}{$\scriptstyle{{\pmb l}_1}$}
\psfrag{beta}{$\scriptstyle{{\pmb l}_2}$}
\psfrag{p+b}{$\scriptstyle{{\pmb l}_2}$}
\psfrag{p-b}{$\scriptstyle{-{\pmb l}_2}$}
\psfrag{p+2a+b}{$\scriptstyle{2{\pmb l}_1+{\pmb l}_2}$}
\psfrag{p-2a+b}{$\scriptstyle{-2{\pmb l}_1+{\pmb l}_2}$}
\psfrag{p+a}{$\scriptstyle{{\pmb l}_1}$}
\psfrag{p-a}{$\scriptstyle{-{\pmb l}_1}$}
\psfrag{p-a+b}{$\scriptstyle{-{\pmb l}_1+{\pmb l}_2}$}
\psfrag{p+a-b}{$\scriptstyle{{\pmb l}_1-{\pmb l}_2}$}
\psfrag{p-a-b}{$\scriptstyle{-{\pmb l}_1-{\pmb l}_2}$}
\psfrag{p+a+b}{$\scriptstyle{{\pmb l}_1+{\pmb l}_2}$}
\psfrag{p+2a}{$\scriptstyle{2{\pmb l}_1}$}
\psfrag{p+2a-b}{$\scriptstyle{2{\pmb l}_1-{\pmb l}_2}$}
\psfrag{p+3a-b}{$\scriptstyle{3{\pmb l}_1-{\pmb l}_2}$}
\vskip-20truemm
\includegraphics[width=145truemm,angle=-90,origin=bc]
                {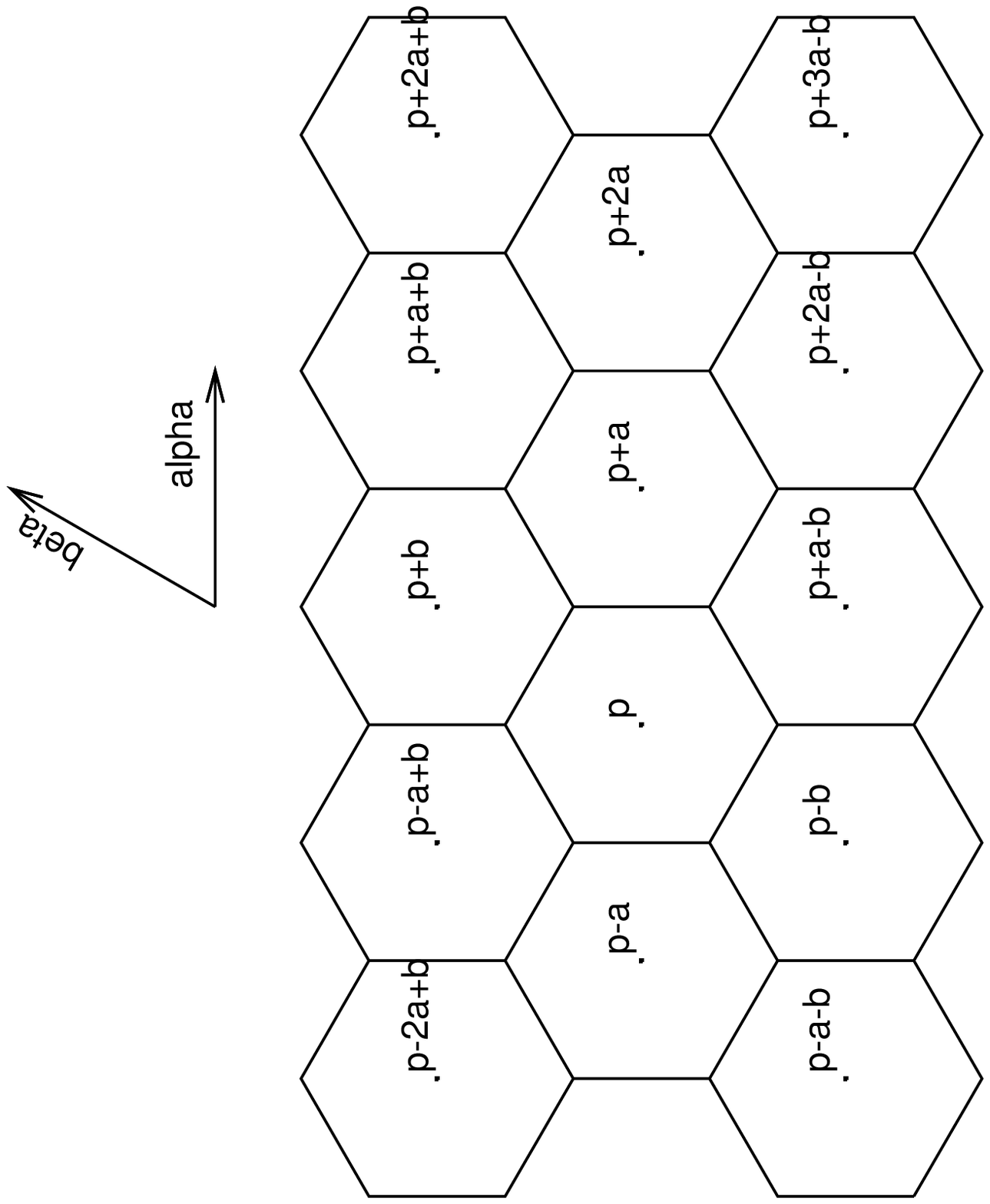}
\vskip-20truemm
\caption{Hexagonal lattice generated by the lattice vectors
  (\ref{hex-alp-bet}). The fundamental Wigner-Seitz cell $\Omega$ for
  this lattice is the regular hexagon centered at the origin. The area
  $\Omega$ and the primitive cell are the same, and are set to unity.}
\label{hex-cells}
\end{center}
\end{figure}

The {\em Wigner-Seitz or Voronoi cell} centered at a given lattice
point of $\Lambda$ consists of all points in the plane that are closer
to this point than to any other lattice point. It is constructed by
first joining the lattice point by a straight line to each of the
neighbouring lattice points. Then, by taking the perpendicular
bisector to each of these lines, the Wigner-Seitz cell is the smallest
area around this lattice point that is enclosed by all the
perpendicular bisectors. The Wigner-Seitz cell is a convex polygon
with the same area $|\lb_1\times\lb_2|$ of the primitive cell ${\cal
  P}$. In addition, it is well-known that the union of the
Wigner-Seitz cells for an arbitrary oblique Bravais lattice with
arbitrary lattice vectors $\lb_1,\lb_2$, and angle $\theta$, tile all of
$\R^2$ (cf.~\cite{am}). In other words, there holds
\begin{equation}
\label{news}
\R^2= \bigcup_{z \in \Lambda}  (z+\Omega) \,.
\end{equation}
 By periodicity and the property (\ref{news}), we need only consider the
Wigner-Seitz cell centered at the origin, which we denote by
$\Omega$. In Fig.~\ref{hex-cells} we show the fundamental Wigner-Seitz
cell for the hexagonal lattice.  In Fig.~\ref{wigner_2} we plot the
union of the Wigner-Seitz cells for an oblique Bravais lattice with
$\lb_1=(1,0)$, $\lb_2=(\cot\theta,1)$ and $\theta=74^{\circ}$.

\begin{figure}[htb]
\begin{center}
\includegraphics[width = 8cm,height=5.5cm,clip]{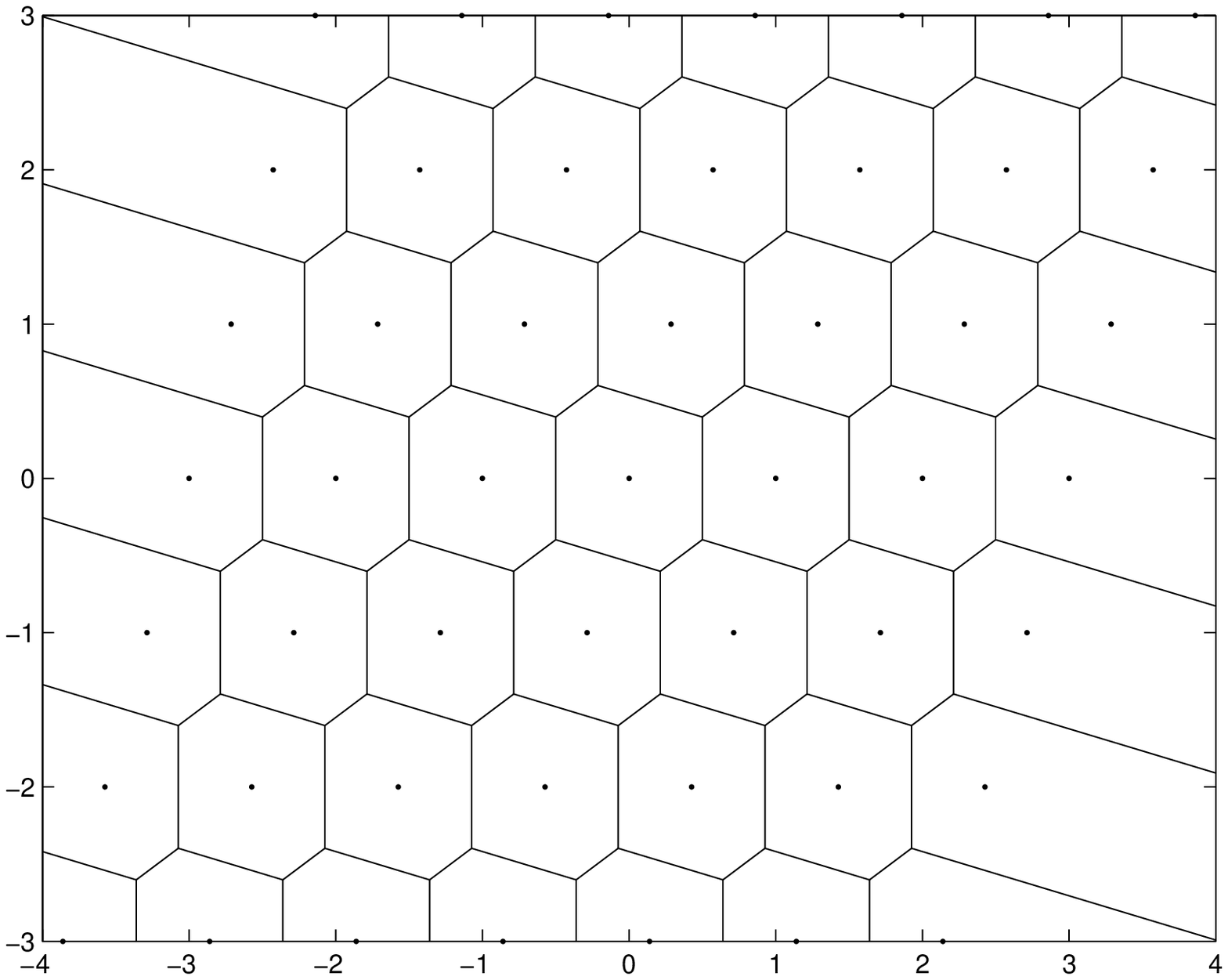}
\caption{Wigner-Seitz cells for an oblique lattice with
 $\lb_1=(1,0)$, $\lb_2=(\cot\theta,1)$, and
  $\theta=74^{\circ}$, so that $|\Omega|=1$. These cells tile the plane.
  The boundary of the Wigner-Seitz cells consist of three
  pairs of parallel lines of equal length.}
\label{wigner_2}
\end{center}
\end{figure}

As in \cite{Beyl-1}, we define the reciprocal lattice
$\Lambda^{\star}$ in terms of the two independent vectors $\pmb d_1$
and $\pmb d_2$, which are obtained from the lattice $\Lambda$ by
requiring that
  \begin{equation}\label{d-recip}
  {\pmb d}_i\cdot{\pmb l}_j = \delta_{ij}\,,
\end{equation}
where $\delta_{ij}$ is the Kronecker symbol.
The reciprocal lattice $\Lambda^{\star}$ is  defined by
 \begin{equation}
 \Lambda^{\star} = \Big{\lbrace}{ m \pmb d_1 + n \pmb d_2 \, \Big{\vert} \,\,
 m,\, n \in \mathbb{Z}\Big{\rbrace}} \,.  \label{lattice-recip}
\end{equation}
The first Brillouin zone, labeled by $\Omega_B$, is defined as the Wigner-Seitz
cell centered at the origin in the reciprocal space.

We remark that other authors (cf.~\cite{Linton}, \cite{Moroz}) define
the reciprocal lattice as $\Lambda^{\star}=\left\{ 2\pi m\, \pmb d_1,
2\pi n\,\pmb d_2\right\}_{m,n\in\Z}$. Our choice (\ref{lattice-recip})
for $\Lambda^{\star}$ is motivated by the form of the Poisson
summation formula of \cite{Beyl-1} given in (\ref{Poisson-sum}) below,
and which is used in \S~\ref{sec:ewald} to numerically compute the
Bloch Green's function.

\begin{figure} [htb]
\begin{center}
\psfrag{alpha}{$\scriptstyle{{\pmb l}_1}$}
\psfrag{beta}{$\scriptstyle{{\pmb l}_2}$}
\psfrag{p}{$\scriptstyle{\ \bf 0}$}
\psfrag{p+a}{$\scriptstyle{{\pmb l}_1}$}
\psfrag{p-a}{$\scriptstyle{-{\pmb l}_1}$}
\psfrag{p+2a}{$\scriptstyle{2{\pmb l}_1}$}
\psfrag{p-2a}{$\scriptstyle{-2{\pmb l}_1}$}
\psfrag{p+3a}{$\scriptstyle{3{\pmb l}_1}$}
\psfrag{p-3a}{$\scriptstyle{-3{\pmb l}_1}$}
\psfrag{p+b}{$\scriptstyle{\ \ {\pmb l}_2}$}
\psfrag{p-b}{$\scriptstyle{-{\pmb l}_2}$}
\psfrag{p+2b}{$\scriptstyle{2{\pmb l}_2}$}
\psfrag{p-2b}{$\scriptstyle{-2{\pmb l}_2}$}
\psfrag{p+3b}{$\scriptstyle{3{\pmb l}_2}$}
\psfrag{p-3b}{$\scriptstyle{-3{\pmb l}_2}$}
\psfrag{p+a+b}{$\scriptstyle{{\pmb l}_1+{\pmb l}_2}$}
\psfrag{p+a-b}{$\scriptstyle{{\pmb l}_1-{\pmb l}_2}$}
\psfrag{p-a+b}{$\scriptstyle{-{\pmb l}_1+{\pmb l}_2}$}
\psfrag{p-a-b}{$\scriptstyle{-{\pmb l}_1-{\pmb l}_2}$}
\psfrag{p-a-2b}{$\scriptstyle{-{\pmb l}_1-2{\pmb l}_2}$}
\psfrag{p-a-3b}{$\scriptstyle{-{\pmb l}_1-3{\pmb l}_2}$}
\psfrag{p+2a+b}{$\scriptstyle{2{\pmb l}_1+{\pmb l}_2}$}
\psfrag{p+2a-b}{$\scriptstyle{2{\pmb l}_1-{\pmb l}_2}$}
\psfrag{p-2a-b}{$\scriptstyle{-2{\pmb l}_1-{\pmb l}_2}$}
\psfrag{p-2a-3b}{$\scriptstyle{-2{\pmb l}_1-3{\pmb l}_2}$}
\psfrag{p-3a+b}{$\scriptstyle{-3{\pmb l}_1+{\pmb l}_2}$}
\psfrag{p-3a-2b}{$\scriptstyle{-3{\pmb l}_1-2{\pmb l}_2}$}
\psfrag{p-3a-3b}{$\scriptstyle{-3{\pmb l}_1-3{\pmb l}_2}$}
\psfrag{p-4a-b}{$\scriptstyle{-4{\pmb l}_1-{\pmb l}_2}$}
\psfrag{p-4a-2b}{$\scriptstyle{-4{\pmb l}_1-2{\pmb l}_2}$}
\psfrag{p-5a-3b}{$\scriptstyle{-5{\pmb l}_1-3{\pmb l}_2}$}
\vskip-5truemm\hskip-10truemm
\subfloat[Lattice $\Lambda$]
{\includegraphics[width=95truemm,angle=-90,origin=bc]
{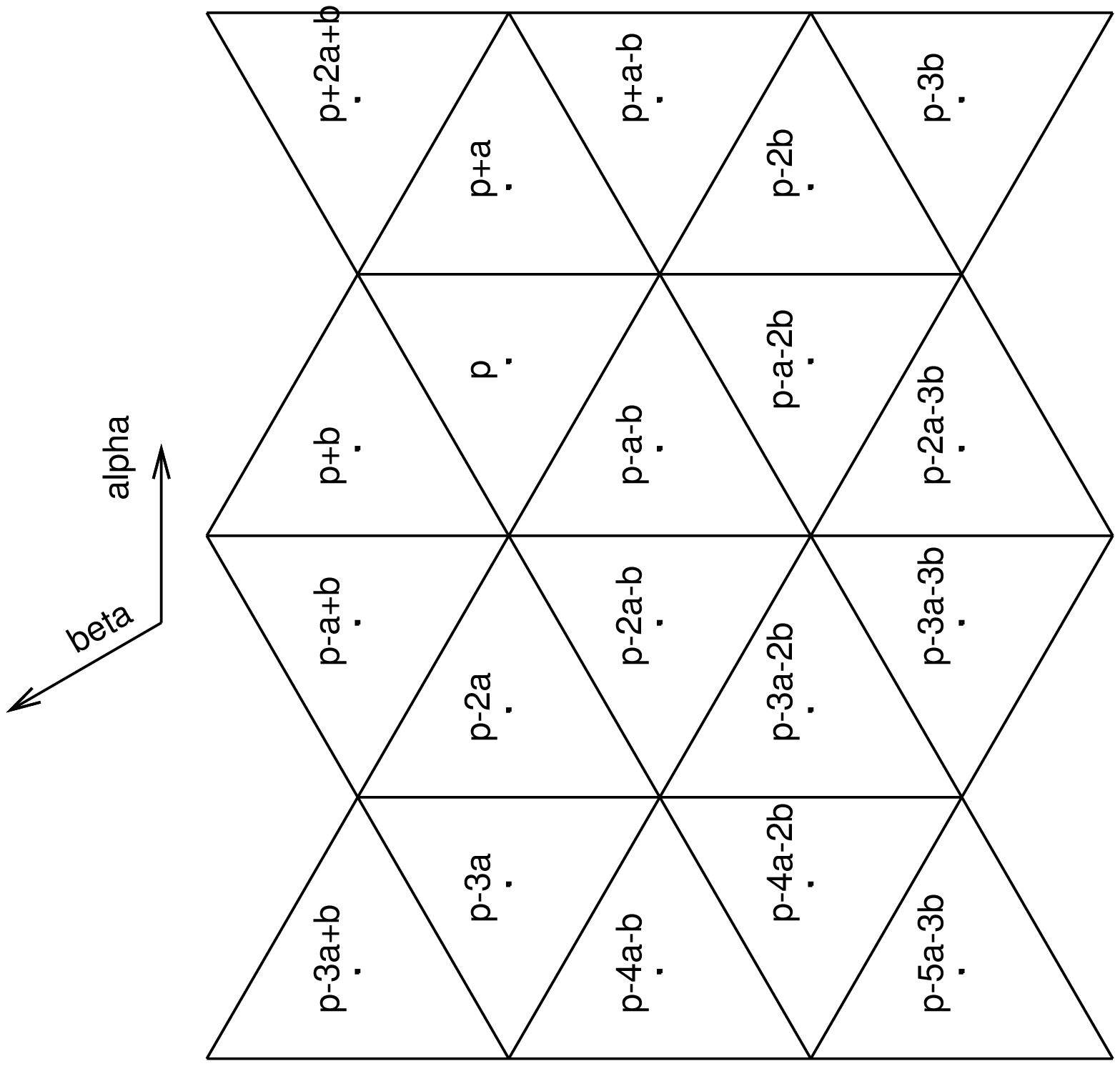} }
\psfrag{a}{$\scriptstyle{{\pmb d}_1}$}
\psfrag{b}{$\scriptstyle{{\pmb d}_2}$}
\psfrag{p}{$\scriptstyle{\ \bf 0}$}
\psfrag{p+a}{$\scriptstyle{{\pmb d}_1}$}
\psfrag{p-a}{$\scriptstyle{-{\pmb d}_1}$}
\psfrag{p+2a}{$\scriptstyle{2{\pmb d}_1}$}
\psfrag{p-2a}{$\scriptstyle{-2{\pmb d}_1}$}
\psfrag{p+3a}{$\scriptstyle{3{\pmb d}_1}$}
\psfrag{p-3a}{$\scriptstyle{-3{\pmb d}_1}$}
\psfrag{p+b}{$\scriptstyle{\ \ {\pmb d}_2}$}
\psfrag{p-b}{$\scriptstyle{-{\pmb d}_2}$}
\psfrag{p+2b}{$\scriptstyle{2{\pmb d}_2}$}
\psfrag{p-2b}{$\scriptstyle{-2{\pmb d}_2}$}
\psfrag{p+3b}{$\scriptstyle{3{\pmb d}_2}$}
\psfrag{p-3b}{$\scriptstyle{-3{\pmb d}_2}$}
\psfrag{p+a+b}{$\scriptstyle{{\pmb d}_1+{\pmb d}_2}$}
\psfrag{p+a+2b}{$\scriptstyle{{\pmb d}_1+2{\pmb d}_2}$}
\psfrag{p+a+3b}{$\scriptstyle{{\pmb d}_1+3{\pmb d}_2}$}
\psfrag{p+a-b}{$\scriptstyle{{\pmb d}_1-{\pmb d}_2}$}
\psfrag{p+a-2b}{$\scriptstyle{{\pmb d}_1-2{\pmb d}_2}$}
\psfrag{p+a-3b}{$\scriptstyle{{\pmb d}_1-3{\pmb d}_2}$}
\psfrag{p-a+b}{$\scriptstyle{-{\pmb d}_1+{\pmb d}_2}$}
\psfrag{p-a+2b}{$\scriptstyle{-{\pmb d}_1+2{\pmb d}_2}$}
\psfrag{p-a-b}{$\scriptstyle{-{\pmb d}_1-{\pmb d}_2}$}
\psfrag{p-a-2b}{$\scriptstyle{-{\pmb d}_1-2{\pmb d}_2}$}
\psfrag{p-a-3b}{$\scriptstyle{-{\pmb d}_1-3{\pmb d}_2}$}
\psfrag{p+2a+b}{$\scriptstyle{2{\pmb d}_1+{\pmb d}_2}$}
\psfrag{p+2a-2b}{$\scriptstyle{2{\pmb d}_1-2{\pmb d}_2}$}
\psfrag{p+2a-b}{$\scriptstyle{2{\pmb d}_1-{\pmb d}_2}$}
\psfrag{p-2a-b}{$\scriptstyle{-2{\pmb d}_1-{\pmb d}_2}$}
\psfrag{p-2a+2b}{$\scriptstyle{-2{\pmb d}_1+2{\pmb d}_2}$}
\psfrag{p-2a+3b}{$\scriptstyle{-2{\pmb d}_1+3{\pmb d}_2}$}
\psfrag{p-2a-3b}{$\scriptstyle{-2{\pmb d}_1-3{\pmb d}_2}$}
\psfrag{p-3a+2b}{$\scriptstyle{-3{\pmb d}_1+2{\pmb d}_2}$}
\psfrag{p-2a+b}{$\scriptstyle{-2{\pmb d}_1+{\pmb d}_2}$}
\psfrag{p-3a-3b}{$\scriptstyle{-3{\pmb d}_1-3{\pmb d}_2}$}
\psfrag{p+4a-3b}{$\scriptstyle{4{\pmb d}_1-3{\pmb d}_2}$}
\psfrag{p+3a-b}{$\scriptstyle{3{\pmb d}_1-{\pmb d}_2}$}
\psfrag{p+3a-2b}{$\scriptstyle{3{\pmb d}_1-2{\pmb d}_2}$}
\psfrag{p-5a-3b}{$\scriptstyle{-5{\pmb d}_1-3{\pmb d}_2}$}
\hskip-15mm
\subfloat[Reciprocal Lattice $\Lambda^*$]
{\includegraphics[width=95truemm,angle=-90,origin=bc]
{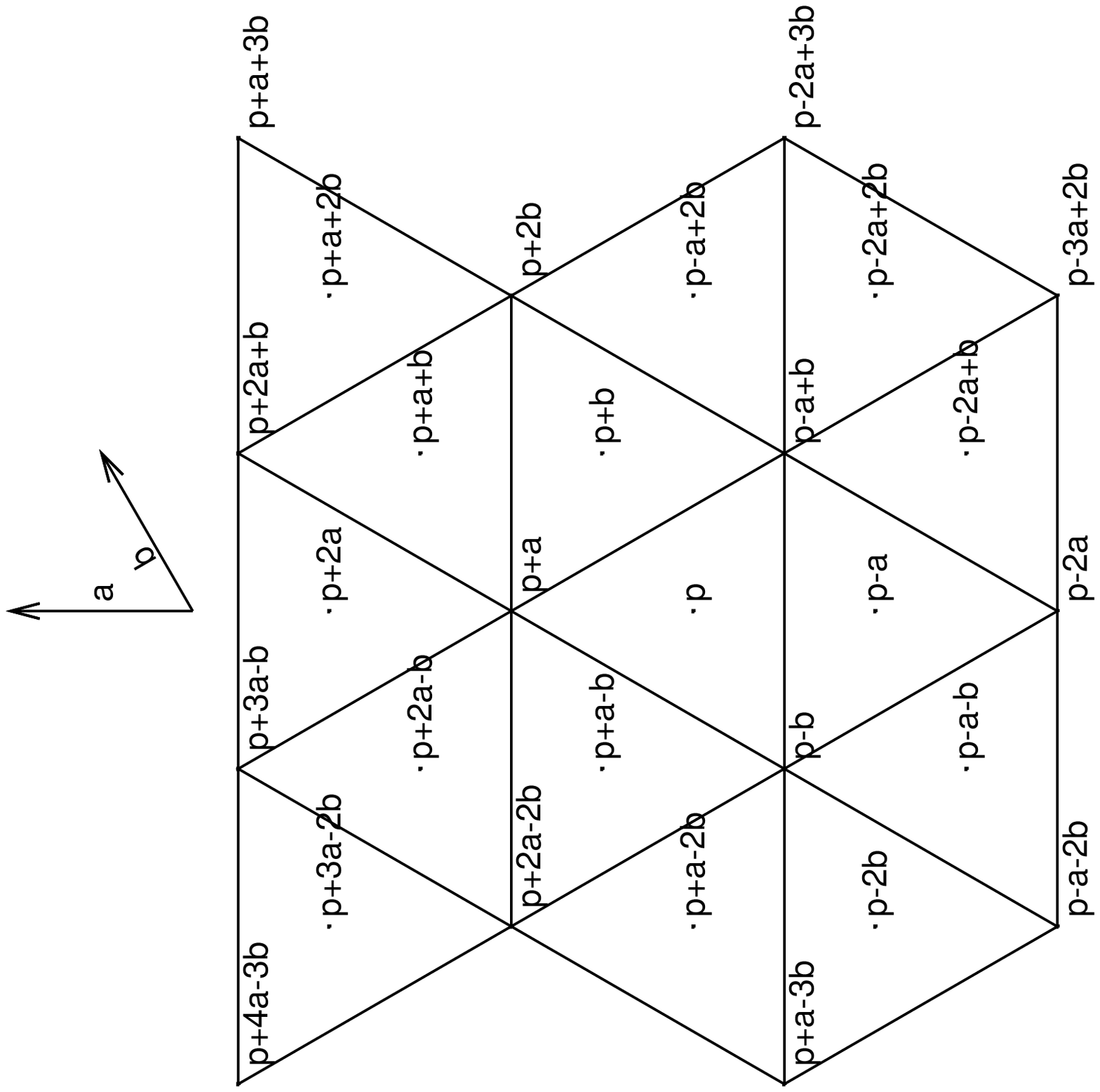} }
\caption{Left panel: Triangular lattice $\Lambda$ with unit area of
  the primitive cell generated by the lattice vectors in
  (\ref{spat-tri}). Right panel: the
  corresponding reciprocal lattice $\Lambda^{*}$ with reciprocal
  lattice vectors as in (\ref{tri:recip}).} \label{triang-cells}
 \end{center}
\end{figure}

Finally we make some remarks on the equilateral triangular lattice which
does not fall into the framework discussed above. As
observed in \cite{chen}, this special lattice requires a
different treatment. For the equilateral triangle lattice,
$\theta={2\pi/3}$ and $\im\!\left(e^{2i\pi/3}\right) = \sqrt3/2$, so that
the unit area requirement of the primitive cell again yields
$\alpha=(4/3)^{1/4}$. Since $\re\!\left(e^{2i\pi/3}\right) =-1/2$, it
follows that in terms of ${\pmb l}_i\in\R^2$ for $i=1,2$, an
equilateral triangle cell structure has
 \begin{equation}\label{spat-tri}
 {\pmb l}_1 =\left(\left(\frac 43\right)^{1/4},0\right)
  \qquad\text{and}\qquad
 {\pmb l}_2 =\left(\frac 43\right)^{1/4}
\left(-\frac 12,\frac{\sqrt 3}{2}\right)\,.
 \end{equation}
This triangular lattice is shown in Fig.~\ref{triang-cells}.  The
centers of the triangular cells are generated by (\ref{lattice-def}),
but there are points in $\Lambda$ which are not cell centers (see
Fig.~\ref{triang-cells}).  For example, $(3n+1){\pmb l}_1 +{\pmb
  l}_2$, $(3n+2){\pmb l}_1$, $3n{\pmb l}_1-{\pmb l}_2$, and
$(3n+1){\pmb l}_1 - 2{\pmb l}_2$ are not centers of cells of
equilateral triangles.  In general, for integers $p$ and $q$ the point
$p\,{\pmb l}_1 + q{\pmb l}_2$ will be a vertex instead of a cell
center when
\begin{equation}\label{trilat}
\left(p\!\!\mod 3\right) + \left(q\!\!\mod 3 \right)= 2\,,
\end{equation}
where the positive representation of the mod function is used,
i.e.~$(-1)\mod 3 = 2.$ Thus, for the equilateral triangular lattice the set
of lattice points is
\begin{equation}
  \Lambda_{tri} = \Big{\lbrace}{ m \lb_1 + n \lb_2 \, \Big{\vert} \,\,
    m,\, n \in \mathbb{Z}\,, \,\,
\left(m\!\!\mod 3\right) + \left(n\!\!\mod 3 \right)\not = 2\,\Big{\rbrace}}
 \,. \label{lattice-def1}
\end{equation}
The corresponding  Wigner-Seitz cell is also an equilateral triangle.

Regarding the reciprocal lattice for the equilateral triangular lattice
with ${\pmb l}_1$ and ${\pmb l}_2$ given by \eqref{spat-tri}, the
defining vectors for $\Lambda^{\star}$ are
 \begin{equation}
  {\pmb d}_1 = \frac 1{12^{1/4}} \left(\sqrt 3,1\right)
 \qquad\text{and}\qquad
  {\pmb d}_2 = \frac 1{12^{1/4}} \left(0,2\right) \,, \label{tri:recip}
 \end{equation}
as can be verified by substitution into \eqref{d-recip}. A plot of a
portion of this reciprocal lattice for the equilateral triangle
lattice is shown in the right panel of Fig.~\ref{triang-cells}. From
this plot it follows that, for integer $p$ and $q$, $p\, {\pmb d}_1 +
 q\, {\pmb d}_2$ will be a vertex, not a centre, when
\begin{equation}\label{trilat-recip}
(p-q)\!\!\mod 3= 1\,.
\end{equation}
Therefore the reduced reciprocal lattice becomes
\begin{equation}
  \Lambda_{tri}^{\star} = \Big{\lbrace}{ m {\pmb d}_1 + n {\pmb d}_2
    \, \Big{\vert} \,\, m,\, n \in \mathbb{Z}\,, \,\,\, (m-n)\!\!\mod 3\not =
    1\,\Big{\rbrace}} \,. \label{lattice-def1a}
\end{equation}

Unfortunately for the equilateral triangular lattice the property
(\ref{news}) does not hold. In other words, the whole $\R^2$ is not the
union of cells translated on the Bravais lattice, and thus one can not
restrict to one Wigner-Seitz cell at the origin. As such, it is
unclear whether the corresponding Poisson summation formula in
(\ref{Poisson-sum}) below still holds. However, if a homogeneous
Neumann boundary condition is imposed on the cell, it is possible to
reflect through the edges and fill the whole ${\mathbb R}^2$. (This
fact has been used in \cite{chen}.) Therefore, the equilibrium
contruction of a periodic spot pattern presented in Section 3.1 and
Section 4.1 still applies for the equilateral triangular
lattice. However, the stability of periodic spot patterns on the
triangular lattice is an open problem.

\subsection{A Few Key Properties of the Bloch Green's Functions}
\label{2:gr_lattice}

In our analysis of the stability of spot patterns in
\S~\ref{schnak:stab} and \S~\ref{gm:stab} below, the Bloch Green's
function $G_{b0}(\xb)$ for the Laplacian plays a prominent role. In
the Wigner-Seitz cell $\Omega$, $G_{b0}(\xb)$ for ${\kb/(2\pi)}\in
\Omega_B$, satisfies
\bsub \label{green:b0}
\begin{equation}
\Delta G_{b0} = -\delta(\xb)\,; \qquad \xb \in \Omega \,, \label{green:b01}
\end{equation}
subject to the quasi-periodicity condition on $\R^2$ that
\begin{equation}
 G_{b0}(\xb + \pmb l) = e^{-i\pmb k\cdot\pmb l}\, G_{b0}(\xb)\,,
 \qquad \pmb l\in\Lambda \,, \label{green:b1}
\end{equation}
where $\Lambda$ is the Bravais lattice (\ref{lattice-def}). As we show
below, (\ref{green:b1}) indirectly yields boundary conditions on the
boundary $\partial\Omega$ of the Wigner-Seitz cell.  The regular part
$R_{b0}(\kb)$ of this Bloch Green's function is defined by
\begin{equation}
     R_{b0}(\kb)\equiv \lim_{\xb\to \xbo} \left( G_{b0}(\xb) +
     \frac{1}{2\pi} \log|\xb| \right) \,. \label{green:b2}
\end{equation}
\esub

In order to study the properties of $G_{b0}(\xb)$ and $R_{b0}(\kb)$, we
first require a more refined description of the Wigner-Seitz cell. To
do so, we observe that there are eight nearest neighbor lattice
points to $\xb=\xbo$ given by the set
\begin{equation}
 P\equiv \lbrace{ \, m\lb_1+ n\lb_2 \, \vert \,\, m\in
   \lbrace{0,1,-1\rbrace}\,, \,\, n\in \lbrace{0,1,-1\rbrace}
   \,,\,\,\, (m,n)\neq 0 \rbrace} \,. \label{gr:nn}
\end{equation}
For each (vector) point $\pbi\in P$, for $i=1,\ldots,8$, we define a
Bragg line $L_i$. This is the line that crosses the point ${\pbi/2}$
orthogonally to $\pbi$. We define the unit outer normal to $L_i$
by $\etai\equiv{\pbi/|\pbi|}$. The convex hull generated by these
Bragg lines is the Wigner-Seitz cell $\Omega$, and the boundary
$\partial\Omega$ of the Wigner-Seitz cell is, generically, the union
of six Bragg lines. For a square lattice, $\partial\Omega$ has four Bragg
lines.  The centers of the Bragg lines generating $\partial\Omega$ are
re-indexed as $\pbi$ for $i=1,\ldots,L$, where
$L\in\lbrace{4,6\rbrace}$ is the number of Bragg lines de-marking
$\partial\Omega$.  The boundary $\partial\Omega$ of $\Omega$ is the
union of the re-indexed Bragg lines $L_i$, for $i=1,\ldots,L$, and is
parametrized segment-wise by a parameter $t$ as
\begin{equation}
   \partial\Omega = \Big{\lbrace}  \xb \in \bigcup_i \lbrace{
  \frac{\pbi}{2} + t \etaip\rbrace} \,\, \Big{\vert} \,\,\,
  -t_i\leq t\leq t_i\,, \,\,\, i=1,\ldots,L\,, \,\,\,
  L=\lbrace{4,6\rbrace} \Big{\rbrace} \,. \label{gr:boundary}
\end{equation}
Here $2t_i$ is the length of $L_i$, and $\etaip$ is the direction perpendicular
to $\pbi$, and therefore tangent to $L_i$.

The following observation is central to the analysis below:
Suppose that $\pb$ is a neighbor of $\xbo$ and that the Bragg line
crossing ${\pb/2}$ lies on $\partial\Omega$. Then, by symmetry, the
Bragg line crossing ${-\pb/2}$ must also lie on $\partial\Omega$. In
other words, Bragg lines on $\partial\Omega$ must come in pairs. This
fact is evident from the plot of the Wigner-Seitz cell for the oblique
lattice shown in Fig.~\ref{wigner_2}. With this more refined
description of the Wigner-Seitz cell, we now state and prove two key
Lemmas that are needed in \S~\ref{schnak:stab} and \S~\ref{gm:stab} below.

\begin{lemma}\label{lemma 2.1} The regular part $R_{b0}(\kb)$ of
the Bloch Green's function $G_{b0}(\xb)$ satisfying
(\ref{green:b0}) is real-valued for $|\kb|\neq 0$.
\end{lemma}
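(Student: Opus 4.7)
The plan is to exploit two complementary symmetries of problem (\ref{green:b0})--(\ref{green:b2}): complex conjugation of $G_{b0}$, and the spatial reflection $\xb\mapsto -\xb$. Each operation preserves the Poisson equation (the Laplacian is real-valued and commutes with $\xb\mapsto-\xb$, while $\delta(-\xb)=\delta(\xb)$), but each flips the sign of the Bloch wavevector in (\ref{green:b1}). Explicitly, both $\overline{G_{b0}(\xb;\kb)}$ and $\widetilde G(\xb):=G_{b0}(-\xb;\kb)$ inherit the quasi-periodicity
\begin{equation*}
F(\xb+\pmb l) \;=\; e^{i\kb\cdot\pmb l}\,F(\xb) \;=\; e^{-i(-\kb)\cdot\pmb l}\,F(\xb)\,, \qquad \pmb l\in\Lambda\,,
\end{equation*}
which is precisely (\ref{green:b1}) with $\kb$ replaced by $-\kb$. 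Hence both functions are Bloch Green's functions for wavevector $-\kb$.

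Next I would invoke uniqueness of (\ref{green:b0}) whenever $|\kb|\neq 0$. Any homogeneous solution $u$ with zero right-hand side and the quasi-periodicity (\ref{green:b1}) can be expanded in the Bloch-Fourier basis $\{e^{i(2\pi\pmb d-\kb)\cdot\xb}\}_{\pmb d\in\Lambda^{\star}}$ adapted to (\ref{green:b1}). The equation $\Delta u=0$ then forces $|2\pi\pmb d-\kb|^{2}c_{\pmb d}=0$ for every $\pmb d\in\Lambda^{\star}$, and every such denominator is strictly positive because $\kb/(2\pi)\in\Omega_{B}\setminus\{0\}$. Therefore $u\equiv 0$, and combining uniqueness with the two symmetries above yields
\begin{equation*}
\overline{G_{b0}(\xb;\kb)} \;=\; G_{b0}(\xb;-\kb) \;=\; G_{b0}(-\xb;\kb)\,.
\end{equation*}

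Finally I would pass to the regular part using the normalization in (\ref{green:b2}). Both $\overline{G_{b0}(\xb;\kb)}$ and $G_{b0}(-\xb;\kb)$ carry the identical singular contribution $-\tfrac{1}{2\pi}\log|\xb|$, since the logarithm is real-valued and $|{-\xb}|=|\xb|$. Subtracting this common singularity from the displayed identity and taking the limit $\xb\to\xbo$ as in (\ref{green:b2}) leaves $\overline{R_{b0}(\kb)} = R_{b0}(\kb)$, which is exactly the claim.

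The step I expect to require the most care is the uniqueness argument: one needs to verify that the quasi-periodicity (\ref{green:b1}), which is prescribed on all of $\R^{2}$, translates into a well-posed boundary value problem on the single Wigner--Seitz cell $\Omega$ via the paired Bragg lines at $\pm\pbi/2$ emphasized just before the lemma, and that the corresponding Bloch-Fourier basis is complete for that boundary-value problem. Once uniqueness is in place, the rest of the proof is just a short symmetry manipulation.
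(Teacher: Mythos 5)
Your argument is correct, but it proceeds by a genuinely different route than the paper. You exploit the two symmetries (complex conjugation and the parity map $\xb\mapsto-\xb$, both of which send the $\kb$-problem to the $-\kb$-problem) together with a uniqueness statement for (\ref{green:b0}) when $|\kb|\neq 0$, proved via the Bloch--Fourier expansion in the basis $e^{i(2\pi\pmb d-\kb)\cdot\xb}$, where $|2\pi\pmb d-\kb|>0$ for ${\kb/(2\pi)}\in\Omega_B\setminus\{0\}$; this yields the pointwise identity $\overline{G_{b0}(\xb;\kb)}=G_{b0}(-\xb;\kb)$ and hence $\overline{R_{b0}(\kb)}=R_{b0}(\kb)$ (note the singular difference near the origin is bounded and harmonic, so it is removable, which is what lets you invoke uniqueness). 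The paper instead multiplies (\ref{green:b01}) by $\bar G_{b0}$, integrates over the punctured cell $\Omega_\rho$, and shows that the boundary integral $\int_{\partial\Omega}\bar G_{b0}\,\partial_\nu G_{b0}\,d\xb$ vanishes because the Bragg lines come in parallel pairs on which the quasi-periodic phases cancel; this leaves the manifestly real representation $R_{b0}=\lim_{\rho\to0}\bigl[\int_{\Omega_\rho}|\nabla G_{b0}|^2\,d\xb+\tfrac{1}{2\pi}\log\rho\bigr]$. Each approach buys something: the paper's Green's-identity computation produces an explicit real formula for $R_{b0}$ that is reused essentially verbatim for the reduced-wave Green's function in Lemma \ref{lemma 2.3}(i), and it does not require a separate uniqueness argument; your symmetry argument is shorter once uniqueness is granted, extends immediately to the reduced-wave case with real $\sigma^2$ (the denominators become $|2\pi\pmb d-\kb|^2+\sigma^2>0$), and gives the extra information that $R_{b0}(\kb)=R_{b0}(-\kb)$ and $G_{b0}(-\xb;\kb)=\overline{G_{b0}(\xb;\kb)}$, which the paper's proof does not record. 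The one step you rightly flag --- completeness of the Bloch basis, i.e.\ writing any quasi-periodic homogeneous solution as $e^{-i\kb\cdot\xb}$ times a smooth $\Lambda$-periodic function and expanding in a Fourier series over $\Lambda^{\star}$ --- is standard and poses no obstruction.
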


\begin{proof}
Let $0<\rho\ll 1$ and define
$\Omega_\rho\equiv \Omega-B_{\rho}(\xbo)$, where $B_{\rho}(\xbo)$ is the
ball of radius $\rho$ centered at $\xb=0$. We multiply
(\ref{green:b01}) by ${\bar G}_{b0}$, where the bar denotes
conjugation, and we integrate over $\Omega_\rho$ using the divergence
theorem to get
\begin{equation}
  \int_{\Omega_\rho} {\bar G}_{b0} \Delta G_{b0} \, d\xb
  +\int_{\Omega_\rho} \nabla {\bar G}_{b0} \cdot \nabla G_{b0} \, d\xb
  = \int_{\partial\Omega_\rho} {\bar G}_{b0} \, \partial_\nu G_{b0}\, d\xb =
\int_{\partial\Omega} {\bar G}_{b0} \, \partial_\nu G_{b0} \, d\xb - \int_{\partial
 B_{\rho}(\xbo)} {\bar G}_{b0} \, \partial_{|\xb|} G_{b0} \, d\xb \,.\label{gr:rd1}
\end{equation}
Here $\partial_\nu G_{b0}$ denotes the outward normal derivative of
$G_{b0}$ on $\partial\Omega$.  For $\rho\ll 1$, we use (\ref{green:b2}) to
calculate
\begin{equation}
 \int_{\partial B_{\rho}(\xbo)} {\bar G}_{b0} \, \partial_{|\xb|} G_{b0} \, d\xb
 \sim \int_{0}^{2\pi}
 \left(-\frac{1}{2\pi} \log\rho + R_{b0}(\kb) + o(1)\right)
 \left(-\frac{1}{2\pi\rho} + {\mathcal O}(1)\right)\, \rho \, d\theta
 \sim \frac{1}{2\pi}\log\rho - R_{b0}(\kb) + {\mathcal
   O}(\rho\log\rho)\,.
 \label{gr:rd2}
\end{equation}
Upon using (\ref{gr:rd2}), together with $\Delta G_{b0}=0$ in $\Omega_\rho$,
in equation (\ref{gr:rd1}), we let $\rho\to 0$ to obtain
\begin{equation}
    R_{b0}(\kb)= -\int_{\partial\Omega} {\bar G}_{b0}(\xb)
  \, \partial_\nu G_{b0}(\xb) \, d\xb +
  \lim_{\rho\to 0} \Big{\lbrack} \int_{\Omega_\rho} \vert \nabla G_{b0}\vert^2 \,
 d\xb + \frac{1}{2\pi} \log\rho \Big{\rbrack} \,. \label{gr:rd3}
\end{equation}

From (\ref{gr:rd3}), to show that $R_{b0}(\kb)$ is real-valued it
suffices to establish that the boundary integral term in (\ref{gr:rd3})
vanishes. To show this, we observe that since the Bragg lines come in pairs,
we have
\begin{equation}
\int_{\partial\Omega} {\bar G}_{b0}(\xb) \,
  \partial_\nu G_{b0}(\xb) \, d\xb =
  \sum_{i=1}^{L/2} \left( \int_{\frac{\pbi}{2}+t\etaip}
  {\bar G}_{b0}(\xb)   \nabla_{\xb} G_{b0}(\xb) \cdot \etai \, d\xb -
\int_{\frac{-\pbi}{2}+t\etaip} {\bar G}_{b0}(\xb)
  \nabla_{\xb} G_{b0}(\xb) \cdot \etai \, d\xb \right) \,. \label{gr:bd1}
\end{equation}
Here we have used the fact that the outward normals to the Bragg line pairs
${\pbi/2}+t\etaip$ and $-{\pbi/2}+t\etaip$ are in opposite directions.
We then translate $\xb$ by $\pbi$ to get
\begin{equation}
\int_{\frac{\pbi}{2}+t\etaip} {\bar G}_{b0}(\xb) \nabla_{\xb} G_{b0}(\xb) \cdot
 \etai \, d\xb =\int_{\frac{-\pbi}{2}+t\etaip + \pbi} {\bar G}_{b0}(\xb)
  \nabla_{\xb} G_{b0}(\xb) \cdot \etai \, d\xb =
\int_{\frac{-\pbi}{2}+t\etaip } {\bar G}_{b0}(\xb+\pbi)
  \nabla_{\xb} G_{b0}(\xb+\pbi) \cdot \etai \, d\xb\,. \label{gr:bd2}
\end{equation}

Then, since $\pbi\in\Lambda$, we have by the quasi-periodicity condition
(\ref{green:b1}) that
\begin{equation*}
  {\bar G_{b0}(\xb+\pbi)} \nabla_{\xb} G_{b0}(\xb+\pbi) =\left(
  {\bar G_{b0}(\xb)} e^{i\kb\cdot\pbi} \right) \left(
 \nabla_{\xb} G_{b0}(\xb) e^{-i\kb\cdot\pbi} \right) =
   {\bar G}_{b0}(\xb) \nabla_{\xb} G_{b0}(\xb) \,.
\end{equation*}
Therefore, from (\ref{gr:bd2}) we conclude that
\begin{equation*}
\int_{\frac{\pbi}{2}+t\etaip} {\bar G}_{b0}(\xb) \nabla_{\xb} G_{b0}(\xb) \cdot
 \etai \, d\xb = \int_{\frac{-\pbi}{2}+t\etaip } {\bar G}_{b0}(\xb)
  \nabla_{\xb} G_{b0}(\xb) \cdot \etai \, d\xb\,,
\end{equation*}
which establishes from (\ref{gr:bd1}) that
$\int_{\partial{\Omega}} \bar{G}_{b0}(\xb) \, \partial_\nu G_{b0}(\xb) \, d\xb=0$.
From (\ref{gr:rd3}) we conclude that $R_{b0}(\kb)$ is real.
\end{proof}

Next, we determine the asymptotic behavior of $R_{b0}(\kb)$ as
$|\kb|\to 0$.  Since (\ref{green:b0}) has no solution if $\kb=0$, it
suggests that $R_{b0}(\kb)$ is singular as $|\kb|\to 0$. To determine the
asymptotic behavior of $G_{b0}$ as $|\kb|\to 0$, we introduce a
small parameter $\sigma\ll 1$, and define $\kb=\sigma\kapb$ where
$|\kapb|={\mathcal O}(1)$. For $\sigma\ll 1$, we expand $G_{b0}(\xb)$
as
\begin{equation}
    G_{b0}(\xb) = \sigma^{-2} {\cal U}_0(\xb) + \sigma^{-1} {\cal U}_1(\xb)
  + {\cal U}_2(\xb) + \cdots \,. \label{gre:exp}
\end{equation}
For any $\lb\in \Omega$, and for $\sigma\ll 1$, we have from
(\ref{green:b1}) that
\begin{equation}
  \frac{{\cal U}_0(\xb+\lb)}{\sigma^2} +  \frac{{\cal U}_1(\xb+\lb)}{\sigma} +
  {\cal U}_2(\xb+\lb) + \cdots = \left[ 1 - i\sigma(\kapb\cdot\lb) -
  \frac{\sigma^2}{2}(\kapb\cdot\lb)^2 + \cdots\right]\left(
  \frac{{\cal U}_0(\xb)}{\sigma^2} + \frac{{\cal U}_1(\xb)}{\sigma} +
  {\cal U}_2(\xb) + \cdots \right)\,. \label{gre:exp1}
\end{equation}
Upon substituting (\ref{gre:exp}) into (\ref{green:b01}), and then equating
powers of $\sigma$ in (\ref{gre:exp1}), we obtain the sequence of problems
\bsub \label{gre:ueq}
\begin{align}
   \Delta {\cal U}_0 &= 0 \,; \qquad {\cal U}_0(\xb+\lb)={\cal U}_0(\xb) \,,
  \label{gre:ueq_1}\\
   \Delta {\cal U}_1 &= 0 \,; \qquad {\cal U}_1(\xb+\lb)={\cal U}_1(\xb)
  - i \left( \kapb \cdot \lb \right) {\cal U}_0(\xb)  \,,
  \label{gre:ueq_2}\\
   \Delta {\cal U}_2 &= -\delta(\xb) \,; \qquad {\cal
     U}_2(\xb+\lb)={\cal U}_2(\xb) -i \left( \kapb \cdot \lb \right)
          {\cal U}_1(\xb) - \frac{ (\kapb \cdot \lb)^2}{2} {\cal
            U}_0(\xb) \,. \label{gre:ueq_3}
\end{align}
\esub

The solution to (\ref{gre:ueq_1}) is that ${\cal U}_0$ is an arbitrary
constant, while the solution to (\ref{gre:ueq_2}) is readily calculated
as ${\cal U}_1(\xb)=-i\left(\kapb \cdot \xb\right){\cal U}_0 + {\cal
  U}_{10}$, where ${\cal U}_{10}$ is an arbitrary constant. Upon
substituting ${\cal U}_0$ and ${\cal U}_1$ into (\ref{gre:ueq_3}), we
obtain for any $\lb\in \Lambda$ that ${\cal U}_2$ satisfies
\begin{equation}
   \Delta {\cal U}_2 = -\delta(\xb) \,; \qquad
 {\cal U}_2(\xb+\lb)={\cal U}_2(\xb)  - \left( \kapb \cdot \lb \right)
  \left(\kapb \cdot \xb\right) {\cal U}_0
  - i\left(\kapb \cdot \lb \right) {\cal U}_{10} - \frac{ (\kapb \cdot \lb)^2}{2}
   {\cal U}_0  \,. \label{gre:nueq_3}
\end{equation}
By differentiating the periodicity condition in (\ref{gre:nueq_3})
with respect to $\xb$, we have for any $\lb\in \Lambda$ that
\begin{equation}
    \nabla_{\xb} {\cal U}_2(\xb+\lb) = \nabla_{\xb} {\cal U}_2(\xb) -
            \kapb \left(\kapb \cdot \lb\right) {\cal U}_0 \,. \label{gre:bc2}
\end{equation}

Next, to determine ${\cal U}_0$, we integrate $\Delta {\cal U}_2=0$
over $\Omega$ to obtain from the divergence theorem and a subsequent
decomposition of the boundary integral over the Bragg line pairs, as
in (\ref{gr:bd1}), that
\begin{equation}
 -1 = \int_{\partial\Omega} \partial_\nu {\cal U}_2 \, d\xb =
  \sum_{i=1}^{L/2} \left( \int_{\frac{\pbi}{2}+t\etaip}
  \nabla_{\xb} {\cal U}_2(\xb) \cdot \etai \, d\xb -
\int_{\frac{-\pbi}{2}+t\etaip}   \nabla_{\xb} {\cal U}_2(\xb) \cdot \etai \, d\xb
 \right) \,. \label{gre:bd1}
\end{equation}
Then, as similar to the derivation in (\ref{gr:bd2}), we calculate the
boundary integrals as
\begin{equation}
\int_{\frac{\pbi}{2}+t\etaip} \nabla_{\xb} {\cal U}_2(\xb) \cdot \etai \, d\xb =
\int_{\frac{-\pbi}{2}+t\etaip + \pbi} \nabla_{\xb} {\cal U}_2(\xb) \cdot \etai \, d\xb
 = \int_{\frac{-\pbi}{2}+t\etaip } \nabla_{\xb} {\cal U}_{2}(\xb+\pbi) \cdot
    \etai \, d\xb\,. \label{gre:bd2}
\end{equation}
Upon using (\ref{gre:bd2}) in (\ref{gre:bd1}), we obtain
\begin{equation}
 -1 = \sum_{i=1}^{L/2} \int_{-\frac{\pbi}{2}+t\etaip}
  \left( \nabla_{\xb} {\cal U}_2(\xb+\pbi)- \nabla_{\xb} {\cal U}_2(\xb)\right)
 \cdot \etai \, d\xb \,. \label{gre:bd3}
\end{equation}
Since $\pbi\in \Lambda$ and $\etai={\pbi/|\pbi|}$, we calculate the integrand
in (\ref{gre:bd3}) by using (\ref{gre:bc2}) as
\begin{equation}
\left( \nabla_{\xb} {\cal U}_2(\xb+\pbi)- \nabla_{\xb} {\cal
    U}_2(\xb)\right) \cdot \etai = -\left(\kapb \cdot \etai\right)
\left(\kapb\cdot \pbi\right){\cal U}_0 = -\left( \kapb\cdot
\pbi\right)^2 \frac{ {\cal U}_0}{|\pbi|}\,.
  \label{gre:bd4}
\end{equation}

Then, upon substituting (\ref{gre:bd4}) into (\ref{gre:bd3}), and by
integrating the constant integrand over the Bragg lines, we obtain
that ${\cal U}_0$ satisfies
\begin{equation}
   -1 = -{\cal U}_0 \sum_{i=1}^{L/2}
 \frac{ \left( \kapb\cdot \pbi\right)^2}{|\pbi|} 2t_i =
-{\cal U}_0 \sum_{i=1}^{L}
 \frac{ \left( \kapb\cdot \pbi\right)^2}{|\pbi|} t_i  =
 -{\cal U}_0 \sum_{i=1}^{L} \left( \kapb\cdot \etai\right)^2 t_i |\pbi| \,,
  \label{gre:solve}
\end{equation}
where $2t_i$ is the length of the Bragg line $L_i$. Upon solving for
${\cal U}_0$, we obtain that
\begin{equation}
   {\cal U}_0 = \frac{1}{\kapb^{T} {\cal Q} \kapb} \,, \qquad \mbox{where} \quad
  {\cal Q} \equiv \sum_{i=1}^{L} \etai \omega_i \etai^T \,, \quad \mbox{and}
 \quad \omega_i \equiv t_i |\pbi| \,. \label{gre:fin}
\end{equation}
Since $\omega_i>0$, for $i=1,\ldots,L$, we have
$\yb^T {\cal Q} \yb = \sum_{i=1}^{L} \left(\etai^T\yb\right)^2 \omega_i>0$ for
any $\yb\neq 0$, which proves that the matrix ${\cal Q}$ is
positive definite.  We summarize the results of this perturbation
calculation in the following (formal) lemma:

\begin{lemma}\label{lemma 2.2} For $|\kb|\to 0$, the regular part $R_{b0}(\kb)$
of the Bloch Green's function of (\ref{green:b0}) has the
leading-order singular asymptotic behavior
\begin{equation}
   R_{b0}(\kb) \sim  \frac{1}{\kb^T {\cal Q} \kb} \,,
\end{equation}
where the positive definite matrix ${\cal Q}$ is defined in terms of the
parameters of the Wigner-Seitz cell by (\ref{gre:fin}).
\end{lemma}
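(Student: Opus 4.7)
The plan is a formal regular perturbation expansion of $G_{b0}$ in the small parameter $\sigma = |\kb|$. The divergent ansatz of order $\sigma^{-2}$ is forced by a consistency check: at $\kb = \mathbf{0}$, problem (\ref{green:b0}) would require a strictly periodic solution of $\Delta G = -\delta(\xb)$, but integrating over $\Omega$ gives $0 = -1$, so $G_{b0}$ must blow up as $|\kb| \to 0$; the exponent $-2$ emerges from the order at which a solvability condition is first encountered. Concretely, I would write $\kb = \sigma\kapb$ with $|\kapb| = \mathcal{O}(1)$ and posit
\begin{equation*}
G_{b0}(\xb) = \sigma^{-2}\mathcal{U}_0(\xb) + \sigma^{-1}\mathcal{U}_1(\xb) + \mathcal{U}_2(\xb) + \cdots.
\end{equation*}
Substituting into $\Delta G_{b0} = -\delta(\xb)$ and expanding the quasi-periodicity condition (\ref{green:b1}) in $\sigma$ yields a hierarchy: $\mathcal{U}_0$ is harmonic and strictly periodic, hence a constant to be determined; $\mathcal{U}_1$ is harmonic with a linear phase jump on $\lb$-translation, giving $\mathcal{U}_1 = -i(\kapb\cdot\xb)\mathcal{U}_0 + \mathcal{U}_{10}$; and $\mathcal{U}_2$ satisfies $\Delta\mathcal{U}_2 = -\delta(\xb)$ with a quasi-periodicity condition mixing $\mathcal{U}_0$, $\mathcal{U}_{10}$, and a $(\kapb\cdot\lb)^2$ piece.

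The constant $\mathcal{U}_0$ is then fixed by the solvability condition for $\mathcal{U}_2$: integrating $\Delta\mathcal{U}_2 = -\delta(\xb)$ over $\Omega$ and applying the divergence theorem gives $-1 = \int_{\partial\Omega}\partial_\nu\mathcal{U}_2\,d\xb$. Differentiating the quasi-periodicity relation in $\xb$ kills the non-$\xb$-dependent terms and leaves $\nabla\mathcal{U}_2(\xb+\lb) - \nabla\mathcal{U}_2(\xb) = -\kapb(\kapb\cdot\lb)\mathcal{U}_0$. I would then reuse the Bragg-line pairing from the proof of Lemma~2.1: each pair $\pbi/2 + t\etaip$ and $-\pbi/2 + t\etaip$ is identified by translation by $\pbi \in \Lambda$, and the jump relation above (with $\lb = \pbi$ and $\pbi = |\pbi|\etai$) produces a flux difference $-(\kapb\cdot\pbi)^2\mathcal{U}_0/|\pbi|$ that is constant along each line. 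Integration over length $2t_i$ and summation over the $L/2$ pairs collapses the identity to $-1 = -\mathcal{U}_0\,\kapb^T\mathcal{Q}\kapb$ with $\mathcal{Q} = \sum_{i=1}^L \etai\omega_i\etai^T$, $\omega_i = t_i|\pbi|$. Positive definiteness of $\mathcal{Q}$ follows because $\yb^T\mathcal{Q}\yb = \sum_{i=1}^L (\etai^T\yb)^2 \omega_i$ vanishes only when $\yb$ is orthogonal to every $\etai$, and the $\etai$ span $\R^2$ for any Wigner-Seitz cell with $L \in \{4,6\}$.

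Finally, since $\mathcal{U}_0$ is a constant carrying no $-(2\pi)^{-1}\log|\xb|$ singularity, and $\sigma^{-1}\mathcal{U}_1$ is linear in $\xb$ and also regular at the origin, extracting the regular part via (\ref{green:b2}) gives $R_{b0}(\kb) \sim \sigma^{-2}\mathcal{U}_0 = 1/(\kb^T\mathcal{Q}\kb)$ as $|\kb| \to 0$, as claimed. The main obstacle I foresee is the bookkeeping at the $\mathcal{U}_2$-order: one must track carefully how the quasi-periodicity data for $\mathcal{U}_2$ decomposes under $\xb$-differentiation so that only the $\mathcal{U}_0$-dependent piece survives, and then verify that this flux jump is genuinely constant along each Bragg line so that the integration along $\partial\Omega$ produces the clean quadratic form in $\kapb$ rather than a direction-dependent mess. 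Everything else is algebraic; the content of the lemma is really that the positive-definite matrix $\mathcal{Q}$ encoding the Wigner-Seitz geometry governs the leading singular behavior of $R_{b0}$, and the paper's designation of the result as a \emph{formal} lemma reflects that existence of each $\mathcal{U}_j$ in the hierarchy is not justified rigorously.
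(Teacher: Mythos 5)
Your proposal is correct and follows essentially the same route as the paper: the $\sigma^{-2}$ ansatz with $\kb=\sigma\kapb$, the hierarchy forcing ${\cal U}_0$ constant and ${\cal U}_1=-i(\kapb\cdot\xb){\cal U}_0+{\cal U}_{10}$, the solvability condition $-1=\int_{\partial\Omega}\partial_\nu{\cal U}_2\,d\xb$ evaluated via the Bragg-line pairing and the differentiated quasi-periodicity jump, yielding ${\cal U}_0=1/(\kapb^T{\cal Q}\kapb)$ and hence $R_{b0}(\kb)\sim 1/(\kb^T{\cal Q}\kb)$, exactly as in the paper's derivation leading to (\ref{gre:fin}). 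Your closing remarks on extracting the regular part and on the formal character of the lemma match the paper's treatment as well.
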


We remark that a similar analysis can be done for the quasi-periodic
reduced-wave Green's function, which satisfies
\bsub \label{ngreen:b}
\begin{equation}
\Delta G(\xb)-\sigma^2 G = -\delta(\xb)\,; \qquad \xb \in \Omega \,; \qquad
 G(\xb + \pmb l) = e^{-i\pmb k\cdot\pmb l}\, G(\xb)\,,
 \qquad \pmb l\in\Lambda \,, \label{ngreen:b0}
\end{equation}
where $\Lambda$ is the Bravais lattice (\ref{lattice-def}) and
${\kb/(2\pi)}\in\Omega_B$. The regular part $R(\kb)$ of this Green's
function is defined by
\begin{equation}
     R(\kb)\equiv \lim_{\xb\to \xbo} \left( G(\xb) +
     \frac{1}{2\pi} \log|\xb| \right) \,. \label{ngreen:b2}
\end{equation}
\esub

By a simple modification of the derivation of Lemma \ref{lemma 2.1} and
\ref{lemma 2.2}, we obtain the following result:

\begin{lemma}\label{lemma 2.3} Let ${\kb/(2\pi)}\in\Omega_B$. For the regular
part $R(\kb)$ of the reduced-wave Bloch Green's function satisfying
(\ref{ngreen:b}), we have the following:
\begin{itemize}
   \item \mbox{(i)} $\,$ Let $\sigma^2$ be real. Then $R(\kb)$ is
     real-valued.
   \item \mbox{(ii)} $\,$ $R(\kb)\sim R_{b0}(\kb) + {\mathcal O}(\sigma^2)$
     for $\sigma\to 0$ when $|\kb|>0$ with $|\kb|={\mathcal O}(1)$. Here
     $R_{b0}(\kb)$ is the regular part of the Bloch Green's function
     (\ref{green:b0}).
    \item \mbox{(iii)} $\,$ Let $\sigma\to 0$, and
    consider the long-wavelength regime $|\kb|={\mathcal O}(\sigma)$, where
    $\kb=\sigma \kapb$ with $|\kapb|={\mathcal O}(1)$. Then,
\begin{equation}
      R(\kb) \sim  \frac{1}{\sigma^2\left[ |\Omega| + \kapb^T {\cal Q}\kapb
   \right]} \,, \label{ngre:blow}
\end{equation}
where the positive definite matrix ${\cal Q}$ is defined in (\ref{gre:fin}).
\end{itemize}
\end{lemma}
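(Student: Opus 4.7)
The plan is to adapt the techniques of Lemmas \ref{lemma 2.1} and \ref{lemma 2.2} to the reduced-wave Bloch Green's function $G$, treating each of the three parts in turn.

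For part (i), I would mimic the proof of Lemma \ref{lemma 2.1}. Multiplying (\ref{ngreen:b0}) by $\bar G$ and integrating over $\Omega_\rho\equiv\Omega\setminus B_\rho(\xbo)$, the divergence theorem gives
\begin{equation*}
-\int_{\Omega_\rho}|\nabla G|^2 \, d\xb - \sigma^2 \int_{\Omega_\rho}|G|^2 \, d\xb
+ \int_{\partial\Omega}\bar G \, \partial_\nu G \, d\xb
- \int_{\partial B_\rho(\xbo)}\bar G \, \partial_{|\xb|} G \, d\xb = 0 \,.
\end{equation*}
The singular contribution from $\partial B_\rho$ yields $\frac{1}{2\pi}\log\rho - R(\kb) + o(1)$ exactly as in (\ref{gr:rd2}), since (\ref{ngreen:b2}) has the same regular-part definition. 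The boundary integral over $\partial\Omega$ vanishes by the identical Bragg-line pairing argument of Lemma \ref{lemma 2.1}: for each pair ${\pbi/2}+t\etaip$ and $-{\pbi/2}+t\etaip$ the outward normals are opposite, and the phases $e^{\pm i\kb\cdot\pbi}$ from (\ref{green:b1}) cancel in the product $\bar G \, \nabla G$ after translating by $\pbi\in\Lambda$. Rearranging yields a representation of $R(\kb)$ as a limit of manifestly real quantities when $\sigma^2\in\R$.

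For part (ii), I would use regular perturbation in $\sigma^2$. Since $|\kb|={\mathcal O}(1)$ with $\kb\neq 0$, the Laplacian with the quasi-periodic boundary condition has no kernel (any kernel element would have to be periodic, hence constant, forcing the constant to vanish), so the expansion $G = G_{b0} + \sigma^2 G^{(1)} + \sigma^4 G^{(2)} + \cdots$ is well-posed. Substituting into (\ref{ngreen:b0}) gives $\Delta G_{b0} = -\delta(\xb)$ at leading order and $\Delta G^{(j)} = G^{(j-1)}$ for $j\geq 1$, each with the quasi-periodic boundary condition inherited from $G$. Since $G^{(j)}$ is smooth at $\xb=\xbo$ for $j\geq 1$, the only logarithmic singularity comes from $G_{b0}$, and subtracting it yields $R(\kb) = R_{b0}(\kb) + \sigma^2 G^{(1)}(\xbo) + {\mathcal O}(\sigma^4)$, as claimed.

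For part (iii), I would follow the singular expansion of Lemma \ref{lemma 2.2} verbatim, writing $G = \sigma^{-2}{\cal U}_0 + \sigma^{-1}{\cal U}_1 + {\cal U}_2 + \cdots$ with $\kb=\sigma\kapb$. The quasi-periodicity expansion (\ref{gre:exp1}) is unchanged, so ${\cal U}_0$ is constant and ${\cal U}_1(\xb) = -i(\kapb\cdot\xb){\cal U}_0 + {\cal U}_{10}$. The only new feature is the $-\sigma^2 G$ term, which contributes $-{\cal U}_0$ at ${\mathcal O}(1)$, modifying (\ref{gre:ueq_3}) to
\begin{equation*}
\Delta {\cal U}_2 = -\delta(\xb) + {\cal U}_0 \,, \qquad \text{with the boundary condition (\ref{gre:bc2}) unchanged.}
\end{equation*}
Integrating over $\Omega$, the boundary computation of Lemma \ref{lemma 2.2} carries over without alteration and yields $\int_{\partial\Omega}\partial_\nu {\cal U}_2 \, d\xb = -{\cal U}_0\,\kapb^T {\cal Q}\kapb$, while the right-hand side now integrates to $-1 + {\cal U}_0|\Omega|$. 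Solving gives ${\cal U}_0 = \bigl(|\Omega| + \kapb^T {\cal Q}\kapb\bigr)^{-1}$, and the dominant contribution $\sigma^{-2}{\cal U}_0$ to $R(\kb)$ produces (\ref{ngre:blow}).

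The only real obstacle is careful bookkeeping in part (iii): one must verify that the $-\sigma^2 G$ term does not perturb the quasi-periodic boundary conditions for ${\cal U}_0$ and ${\cal U}_1$ (it does not, since it only enters the PDE), so that the Bragg-line pairing geometry, the positive-definite matrix ${\cal Q}$ from (\ref{gre:fin}), and the normal-derivative identity (\ref{gre:bd4}) all transfer unchanged, with the sole modification being the $|\Omega|{\cal U}_0$ term in the solvability condition.
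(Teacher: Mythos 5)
Your proposal is correct and follows essentially the same route as the paper: part (i) repeats the Green's identity and Bragg-line pairing of Lemma \ref{lemma 2.1} to obtain $R(\kb)=\lim_{\rho\to 0}\bigl[\int_{\Omega_\rho}(|\nabla G|^2+\sigma^2|G|^2)\,d\xb+\frac{1}{2\pi}\log\rho\bigr]$, part (ii) is the same regular perturbation in $\sigma^2$ (the paper merely asserts it, while you add the harmless extra remark on the absence of a kernel for $\kb\neq 0$), and part (iii) reproduces the expansion of Lemma \ref{lemma 2.2} with the single modification $\Delta{\cal U}_2={\cal U}_0-\delta(\xb)$, which adds the ${\cal U}_0|\Omega|$ term to the solvability condition and yields ${\cal U}_0=\bigl(|\Omega|+\kapb^T{\cal Q}\kapb\bigr)^{-1}$, exactly as in the paper.
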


\begin{proof}
To prove (i) we proceed as in the derivation of Lemma \ref{lemma 2.1} to get
\begin{equation}
 R(\kb) = \lim_{\rho\to 0} \Big{\lbrack} \int_{\Omega_\rho} \left(
 \vert \nabla G\vert^2 + \sigma^2 \vert G\vert^2 \right) \, d\xb
  + \frac{1}{2\pi} \log\rho \Big{\rbrack} \,, \label{ngr:rd3}
\end{equation}
which is real-valued. The second result (ii) is simply a regular
perturbation result for the solution to (\ref{ngreen:b}) for
$\sigma\to 0$ when $|\kb|$ is bounded away from zero and ${\kb/(2\pi)}
\in \Omega_B$, so that $\kb\cdot \lb\neq 2\pi N$. Therefore, when
${\kb/(2\pi)}\in\Omega_B$, $R(\kb)$ is unbounded only as $|\kb|\to
0$ . To establish the third result, we proceed as in
(\ref{gre:exp})--(\ref{gre:bc2}), with the modification that $\Delta
{\cal U}_2={\cal U}_0 - \delta(\xb)$ in $\Omega$, Therefore, we must add
the term ${\cal U}_0|\Omega|$ to the left-hand sides of
(\ref{gre:bd1}), (\ref{gre:bd3}), and (\ref{gre:solve}). By solving
for ${\cal U}_0$ we get (\ref{ngre:blow}).
\end{proof}

In \S \ref{schnak:stab} and \S \ref{gm:stab} below, we will analyze
the spectrum of the linearization around a steady-state periodic spot
pattern for the Schnakenburg and GM models. For $\eps\to 0$, it is
the eigenfunction $\Psi$ corresponding to the long-range solution component
$u$ that satisfies an elliptic PDE with coefficients that are spatially
periodic on the lattice. As such, by the Floquet-Bloch theorem
(cf.~\cite{K} and \cite{KR}), this eigenfunction must satisfy the
quasi-periodic boundary conditions $\Psi(\xb+\lb)=e^{-i\kb\cdot \lb}\Psi(\xb)$
for $\lb\in \Lambda$, $\xb\in\R^2$ and ${\kb/(2\pi)}\in \Omega_B$.
This quasi-periodicity condition can be used to formulate a
boundary operator on the boundary $\partial\Omega$ of the fundamental
Wigner-Seitz cell $\Omega$. Let $L_{i}$ and $L_{-i}$ be two parallel
Bragg lines on opposite sides of $\partial\Omega$ for
$i=1,\ldots,{L/2}$. Let $\xb_{i1}\in L_i$ and $\xb_{i2}\in L_{-i}$ be
any two opposing points on these Bragg lines. We define the boundary
operator ${\cal P}_k\Psi$ by
\begin{equation}
  {\cal P}_{k}\Psi = \Big{\lbrace} \Psi \, \Big{\vert} \,
  \left( \begin{array}{c} \Psi(\xb_{i1}) \\ \partial_n
    \Psi(\xb_{i1}) \end{array} \right) = e^{-i\kb\cdot{\lb_i}}
  \left( \begin{array}{c} \Psi(\xb_{i2}) \\ \partial_n
    \Psi(\xb_{i2}) \end{array} \right) \,, \quad \forall\, \xb_{i1}\in
  L_i \,, \,\,\, \forall\,\xb_{i2}\in L_{-i}\,, \,\,\, {\lb}_i \in
  \Lambda\,, \,\,\, i=1,\ldots,{L/2} \Big{\rbrace} \,. \label{gr:pk}
\end{equation}
The boundary operator ${\cal P}_0\Psi$ simply corresponds to a periodicity
condition for $\Psi$ on each pair of parallel Bragg lines. These boundary
operators are used in \S \ref{schnak} and \S \ref{gm} below.

\setcounter{equation}{0} \setcounter{section}{2}
\section{Periodic Spot Patterns for the Schnakenburg Model}\label{schnak}

We study the linear stability of a steady-state periodic pattern of
localized spots for the Schnakenburg model (\ref{1:sc}) where the
spots are centered at the lattice points of (\ref{lattice-def}). The
analysis below is based on the fundamental Wigner-Seitz cell
$\Omega$, which contains exactly one spot centered at the origin.

\subsection{The Steady-State Solution}\label{schnak:equil}

We use the method of matched asymptotic expansions to construct a
steady-state one-spot solution to (\ref{1:sc}) centered at $\xb=\zero
\in \Omega$. The construction of such a solution consists of an outer
region where $v$ is exponentially small and $u={\mathcal O}(1)$, and
an inner region of extent ${\mathcal O}(\eps)$ centered at the origin
where both $v$ and $u$ have localized.

In the inner region we look for a locally radially symmetric steady-state
solution of the form
\begin{equation}
  u = \frac{1}{\sqrt{D}} \, U \,, \qquad v = \sqrt{D}
 V  \,, \qquad
  \yb = \eps^{-1}\xb \,. \label{eq1:var}
\end{equation}
Then, substituting (\ref{eq1:var}) into the steady-state equations of
(\ref{1:sc}), we obtain that $V\sim V(\rho)$ and $U\sim U(\rho)$, with
$\rho=|\yb|$, satisfy the following {\em core problem} in terms of an
unknown source strength $S\equiv \int_{0}^{\infty} U V^2\rho \, d\rho$
to be determined:
\bsub \label{eq1:core}
\begin{gather}
 \Delta_\rho V - V + U V^2 =0 \,, \qquad
 \Delta_\rho U - U V^2=0 \,, \qquad 0 < \rho
 < \infty \,,\label{eq1:core_1}\\
 U^{\p}(0)=V^{\p}(0)=0\,; \qquad V
 \to 0 \,, \qquad U \sim S \log\rho + \chi(S) + o(1) \,, \quad
 \mbox{as} \quad \rho\to \infty \,. \label{eq1:core_2}
\end{gather}
Here we have defined $\Delta_\rho V\equiv V^{\p\p} + \rho^{-1} V^{\p}$.
\esub

The core problem (\ref{eq1:core}), without the explicit far-field
condition (\ref{eq1:core_2}) was first identified and its solutions
computed numerically in \S 5 of \cite{MO1}. In \cite{KWW_schnak}, the
function $\chi(S)$ was computed numerically, and solutions to the core
problem were shown to closely related to the phenomena of
self-replicating spots.

The unknown source strength $S$ is determined by matching the the
far-field behavior of the core solution to an outer solution for $u$
valid away from ${\mathcal O}(\eps)$ distances from $\xbo$. In the
outer region, $v$ is exponentially small, and from (\ref{eq1:var}) we
get $\eps^{-2} u v^2 \rightarrow 2\pi \sqrt{D} S
\delta(\xb)$. Therefore, from (\ref{1:sc}), the outer steady-state
problem for $u$ is
\begin{equation}
\label{eq1:uout}
\begin{split}
   \Delta u  &= - \frac{a}{D} + \frac{2\pi}{\sqrt{D}} S \,\delta(\xb) \,,
  \quad \xb \in \Omega \,; \qquad
     {\cal P}_0 u=0 \,, \quad \xb\in \partial\Omega\,, \\
   u &\sim \frac{1}{\sqrt{D}} \left[S\log|\xb| + \chi(S) +
  \frac{S}{\nu}\right]\,,  \quad \mbox{as} \quad \xb\to \zero \,,
\end{split}
\end{equation}
where $\nu\equiv {-1/\log\eps}$ and $\Omega$ is the fundamental Wigner-Seitz
cell. The divergence theorem then yields
\begin{equation}
    S = \frac{a|\Omega|}{2\pi\sqrt{D}} \,. \label{eq1:sval}
\end{equation}
The solution to (\ref{eq1:uout}) is then written in terms of the
periodic Green's function $G_{0p}(\xb)$ as
\begin{equation}
   u(x)=-\frac{2\pi}{\sqrt{D}} \left[ S G_{0p}(\xb;\zero) - u_c \right] \,,
  \qquad u_c\equiv \frac{1}{2\pi\nu} \left[S + 2\pi\nu S R_{0p} + \nu
  \chi(S)\right]\,, \label{eq1:u0solve}
\end{equation}
where the periodic source-neutral Green's function $G_{0p}(\xb)$ and
its regular part $R_{0p}$ satisfy
\begin{equation}
\label{gr:source_neut}
\begin{split}
   \Delta G_{0p} &= \frac{1} {|\Omega|} - \delta(\xb) \,, \quad \xb
  \in \Omega \,; \qquad {\cal P}_0 G_{0p}=0 \, \quad \xb\in \partial\Omega\,,
  \\  G_{0p} &\sim -\frac{1}{2\pi}\log|\xb| + R_{0p} + o(1) \,, \quad \mbox{as}
   \quad \xb\to \zero \,; \qquad  \int_{\Omega} G_{0p} \, d\xb = 0 \,.
\end{split}
\end{equation}
An explicit expression for $R_{0p}$ on an oblique Bravais lattice was
derived in Theorem 1 of \cite{chen}.  A periodic pattern of spots is
then obtained through periodic extension to $\R^2$ of the one-spot
solution constructed within $\Omega$.

Since the stability threshold occurs when $D={\mathcal O}(1/\nu)$, for
which $S={\mathcal O}(\nu^{1/2})\ll 1$ from (\ref{eq1:sval}), we must
calculate an asymptotic expansion in powers of $\nu$ for the solution
to the core problem (\ref{eq1:core}). This result, which is required
for the stability analysis in \S \ref{schnak:stab}, is as follows:

\begin{lemma}\label{lemma 3.1} For
$S=S_0 \nu^{1/2} + S_1 \nu^{3/2}+\cdots$, where $\nu\equiv {-1/\log\eps}\ll 1$,
the asymptotic solution to the core problem (\ref{eq1:core}) is
\bsub \label{score:exp}
\begin{equation}
  V\sim \nu^{1/2}\left( V_0 + \nu V_1 + \cdots\right) \,, \qquad
  U\sim \nu^{-1/2}\left( U_0 + \nu U_1 + \nu^2 U_2 + \cdots\right)
  \,, \qquad
  \chi \sim \nu^{-1/2}\left( \chi_0 + \nu \chi_1 + \cdots\right)\,,
  \label{score:exp_1}
\end{equation}
where $U_0$, $U_{1}(\rho)$, $V_{0}(\rho)$, and $V_{1}(\rho)$ are defined by
\begin{equation}
  U_0 = \chi_0 \,, \qquad U_1 = \chi_1 + \frac{1}{\chi_0} U_{1p} \,, \qquad
  V_0=\frac{w}{\chi_0} \,, \qquad V_1 = -\frac{\chi_1}{\chi_0^2} w +
  \frac{1}{\chi_0^3} V_{1p} \,. \label{score:exp_2}
\end{equation}
 Here $w(\rho)$ is the unique ground-state solution to
  $\Delta_\rho w-w+w^2=0$ with $w(0)>0$, $w^{\p}(0)=0$, and $w\to 0$
  as $\rho\to \infty$. In terms of $w(\rho)$, the functions $U_{1p}$ and
 $V_{1p}$ are the unique solutions on $0\leq\rho<\infty$ to
\begin{equation}
\label{score:exp_3}
\begin{split}
   L_0 V_{1p} &= - w^2 U_{1p} \,, \qquad V_{1p}^{\p}(0)=0\,, \quad
   V_{1p}\to 0 \,, \quad \mbox{as} \quad \rho \to \infty \,,\\
  \Delta_\rho U_{1p} & = w^2 \,, \qquad U_{1p}^{\p}(0)=0\,, \quad
  U_{1p}\to b \log\rho + o(1) \,, \quad
  \mbox{as} \quad \rho \to \infty\,; \qquad b\equiv \int_{0}^{\infty}  w^2\rho \,
 d\rho \,,
\end{split}
\end{equation}
 where the linear operator $L_0$ is defined by $L_0 V_{1p}
  \equiv\Delta_\rho V_{1p} - V_{1p} + 2w V_{1p}$.  Finally, in
  (\ref{score:exp_1}), the constants $\chi_0$ and $\chi_1$ are related
  to $S_0$ and $S_1$ by
\begin{equation}
   \chi_0 = \frac{b}{S_0} \,, \qquad \chi_1=-\frac{S_1b}{S_0^2} +
  \frac{S_0}{b^2} \int_{0}^{\infty} V_{1p} \rho \, d\rho \,. \label{score:exp_4}
\end{equation}
\esub
\end{lemma}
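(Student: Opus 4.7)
The plan is to substitute the stated logarithmic expansions of $V$, $U$, and $\chi$ into the core problem and equate powers of $\nu^{1/2}$, using the prescribed far-field for $U$ as the matching condition that pins down the $\chi_j$ in terms of the $S_j$. The crucial structural observation is that under the scaling $V={\mathcal O}(\nu^{1/2})$, $U={\mathcal O}(\nu^{-1/2})$, the nonlinearity $UV^2$ is ${\mathcal O}(\nu^{1/2})$, so the $U$-equation at each order decouples into a Poisson problem whose right-hand side is a known combination of lower-order data, while the $V$-equation beyond leading order is an inhomogeneous linear problem for the ground-state linearization $L_0$.

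At the lowest order, $\Delta_\rho U_0=0$ with $U_0^{\prime}(0)=0$ and bounded far-field forces $U_0$ to be a constant, which matching identifies with $\chi_0$. The $V$-equation then reduces to $\Delta_\rho V_0-V_0+\chi_0 V_0^2=0$ with decaying far-field, whose unique positive radial solution is $V_0=w/\chi_0$ by scale invariance of the ground-state equation. At the next order one obtains $\Delta_\rho U_1 = w^2/\chi_0$ and $L_0 V_1 = -U_1(w/\chi_0)^2$; introducing the auxiliary functions $U_{1p}$ and $V_{1p}$ from the statement and using the identity $L_0 w = w^2$ (which is immediate from $\Delta_\rho w = w - w^2$), I verify by direct substitution that $U_1=\chi_1+U_{1p}/\chi_0$ and $V_1=-\chi_1 w/\chi_0^2+V_{1p}/\chi_0^3$ solve both equations.

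The asymptotic behavior $U_{1p}\sim b\log\rho$ as $\rho\to\infty$, together with the required $U_1\sim S_0\log\rho+\chi_1$, forces $b/\chi_0=S_0$, giving $\chi_0=b/S_0$. To extract $\chi_1$ I would continue the expansion to order $\nu^{3/2}$, where $\Delta_\rho U_2 = U_1 V_0^2 + 2 U_0 V_0 V_1$; multiplying by $\rho$, integrating from $0$ to $\infty$, and using $\rho U_2^{\prime}\to S_1$ at infinity yields
\begin{equation*}
  S_1 = -\frac{\chi_1 b}{\chi_0^2} + \frac{1}{\chi_0^3}\int_0^\infty \rho\bigl(U_{1p}\,w^2 + 2\,w\,V_{1p}\bigr)\, d\rho.
\end{equation*}
The main obstacle is the final simplification of this integral. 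Direct integration by parts against $U_{1p}$ is blocked by its logarithmic growth, so the maneuver is to replace $U_{1p} w^2$ by $-L_0 V_{1p}$ (using the defining equation for $V_{1p}$) and integrate by parts against $V_{1p}$ instead; the boundary terms vanish because $V_{1p}$ inherits the exponential decay of $w$, collapsing the combination to $\int_0^\infty \rho V_{1p}\, d\rho$. Substituting $\chi_0=b/S_0$ then produces the stated formula for $\chi_1$.
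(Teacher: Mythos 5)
Your proposal is correct and follows essentially the same route as the paper's derivation in Appendix A: substitute the scaled expansion, solve the order-by-order problems with $U_0=\chi_0$, $V_0=w/\chi_0$, decompose $U_1,V_1$ via $U_{1p},V_{1p}$, and extract $\chi_0,\chi_1$ from the far-field/divergence-theorem conditions at the $U_1$ and $U_2$ orders. Your key simplification, replacing $w^2U_{1p}+2wV_{1p}$ by $V_{1p}-\Delta_\rho V_{1p}$ via the equation $L_0V_{1p}=-w^2U_{1p}$ so that the integral collapses to $\int_0^\infty V_{1p}\rho\,d\rho$, is exactly the manipulation used in the paper.
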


The derivation of this result was given in \S 6 of \cite{KWW_schnak}
and is outlined in Appendix \ref{app:schnak} below.  We remark that
the $o(1)$ condition in the far-field behavior of $U_{1p}$ in
(\ref{score:exp_3}) eliminates an otherwise arbitrary constant in the
determination of $U_{1p}$. This condition, therefore, ensures that the
solution to the linear BVP system (\ref{score:exp_3}) is unique.

\subsection{The Spectrum of the Linearization Near the Origin}\label{schnak:stab}

To study the stability of the periodic pattern of spots with respect
to fast ${\mathcal O}(1)$ time-scale instabilities, we use the
Floquet-Bloch theorem that allows us to only consider the Wigner-Seitz
cell $\Omega$, centered at the origin, with quasi-periodic boundary
conditions imposed on its boundaries.

We linearize around the steady-state $u_e$ and $v_e$, as calculated
in \S \ref{schnak:equil}, by introducing the perturbation
\begin{equation}
   u = u_e + \e^{\lam t} \eta \,, \qquad
   v = v_e + \e^{\lam t} \phi \,. \label{st:pert}
\end{equation}
By substituting (\ref{st:pert}) into (\ref{1:sc}) and linearizing, we
obtain the following eigenvalue problem for $\phi$ and $\eta$:
\begin{equation}
\label{st:eig}
\begin{split}
 \eps^{2} \Delta \phi - \phi + 2u_e v_e \phi + v_{e}^2 \eta &=\lam
 \phi \,, \quad \xb \in \Omega\,; \qquad
{\cal P}_\kb \phi=0 \,, \quad \xb \in \partial\Omega \,, \\
  D\Delta\eta- 2 \eps^{-2} u_e v_e
 \phi- \eps^{-2} v_{e}^{2} \eta &= \lambda \tau \eta \,, \quad \xb \in
 \Omega \,; \qquad {\cal P}_\kb \eta=0\,, \quad
 \xb\in \partial\Omega\,,
\end{split}
\end{equation}
where ${\cal P}_{\kb}$ is the quasi-periodic boundary operator of (\ref{gr:pk}).

In the inner region near $\xb=\xbo$ we introduce the local variables
$N(\rho)$ and $\Phi(\rho)$ by
\begin{equation}
  \eta = \frac{1}{D} N(\rho) \,, \qquad
   \phi= \Phi(\rho) \,, \qquad \rho=|\yb| \,, \qquad
 \yb=\eps^{-1}\xb \,. \label{st:inn}
\end{equation}
Upon substituting (\ref{st:inn}) into (\ref{st:eig}), and by using
$u_e\sim {U(\rho)/\sqrt{D}}$ and $v_e\sim \sqrt{D} V(\rho)$, where $U$ and $V$
satisfy the core problem (\ref{eq1:core}), we obtain on $0<\rho<\infty$
that
\begin{equation}
\label{st:eprob}
\begin{split}
  \Delta_\rho \Phi - \Phi + & 2 U V \Phi + N V^2 =\lambda \Phi \,, \qquad
  \Phi \to 0 \,, \quad \mbox{as} \quad \rho \to \infty \,,\\ \Delta_\rho N &= 2
  UV \Phi + N V^2 \,, \qquad N \sim C \log \rho + B \,, \quad
  \mbox{as} \quad \rho\to \infty \,,
\end{split}
\end{equation}
with $\Phi^{\prime}(0)=N^{\prime}(0)=0$, and where $B=B(S;\lambda)$.  We
remark that for $\mbox{Re}(\lambda+1)>0$, $\Phi$ in (\ref{st:eprob})
decays exponentially as $\rho\to \infty$. However, in contrast, we
cannot apriori impose that $N$ in (\ref{st:eprob}) is bounded as
$\rho\to \infty$. Instead we must allow for the possibility of a
logarithmic growth for $N$ as $\rho\to \infty$. Upon using the divergence
theorem we identify $C$ as
$C=\int_{0}^{\infty} \left(2 U V \Phi+ N V^2\right)\rho\ d\rho$.
The constant $C$ will be determined by matching $N$ to an outer
eigenfunction $\eta$, valid away from $\xb=\xbo$, that satisfies
(\ref{st:eig}).

To formulate this outer problem, we obtain since $v_e$ is localized
near $\xb=\xbo$ that, in the sense of distributions,
\begin{equation}
  \eps^{-2}\left(2 u_e v_e \phi + \eta v_{e}^2 \right) \rightarrow
 \left(\int_{\R^2} \left( 2 U V \Phi + N V^2 \right)\, d\yb \right) \, \
  \delta(\xb) = 2\pi C  \delta(\xb) \,.\label{st:cval_1}
\end{equation}
By using this expression in (\ref{st:eig}), we conclude that the
outer problem for $\eta$ is
\begin{equation}
  \label{st:etaout}
\begin{split}
   \Delta \eta & -\frac{\tau\lambda}{D} \eta = \frac{2\pi C}{D} \delta(\xb) \,,
 \quad \xb\in \Omega \,; \qquad {\cal P}_\kb \eta=0 \,, \quad \xb \in
 \partial\Omega \,, \\
   \eta & \sim \frac{1}{D}\left(C \log|\xb| + \frac{C}{\nu} + B\right)\,,
  \quad \mbox{as}   \quad \xb\to\xbo\,.
\end{split}
\end{equation}
The solution to (\ref{st:etaout}) is $\eta=-2\pi C D^{-1} G_{b\lam}(\xb)$, where
$G_{b\lam}$ satisfies
\begin{equation}
\label{grlam:eta}
\begin{split}
   \Delta G_{b\lam} -\frac{\tau\lam}{D} G_{b\lam} &=-\delta(\xb) \,,
  \quad \xb \in \Omega \,; \qquad
  {\cal P}_\kb G_{b\lam}=0 \,, \quad \xb\in \partial\Omega\,, \\
   G_{b\lam} &\sim -\frac{1}{2\pi}\log|\xb| + R_{b\lam} \,,
  \quad \mbox{as} \quad \xb\to \zero \,.
\end{split}
\end{equation}
From the requirement that the behavior of $\eta$ as $\xb\to \xbo$
agree with that in (\ref{st:etaout}), we conclude that
$B+{C/\nu}=-2\pi C R_{b\lam}$. Finally, since the stability threshold
occurs in the regime $D={\mathcal O}(\nu^{-1})\gg 1$, we conclude from
Lemma \ref{lemma 2.3} (ii) that for $|\kb|\neq 0$ and ${\kb/(2\pi)}\in
\Omega_B$,
\begin{equation}
  \left(1 + 2\pi \nu R_{b0} + {\mathcal O}(\nu^2) \right) C = - \nu B
  \,, \label{st:cb}
\end{equation}
where $R_{b0}$ is the regular part of the Bloch Green's function
$G_{b0}$ defined by (\ref{green:b0}) on $\Omega$.

We now proceed to determine the portion of the continuous spectrum of
the linearization that lies within an ${\mathcal O}(\nu)$ neighborhood
of the origin, i.~e.~that satisfies $|\lambda|\leq {\mathcal O}(\nu)$,
when $D$ is close to a certain critical value. To do so, we first must
calculate an asymptotic expansion for the solution to (\ref{st:eprob})
together with (\ref{st:cb}).

By using (\ref{score:exp_1}) we first calculate the coefficients in the
differential operator in (\ref{st:eprob}) as
\begin{align*}
  UV &= w + \nu \left(U_0 V_1 + U_1 V_0\right) + \cdots = w +
  \frac{\nu}{\chi_0^2} \left[ V_{1p} + w U_{1p} \right] + \cdots \,,
  \\ V^2 &= \nu\left(V_0^2 + 2\nu V_0 V_1\right) + \cdots = \nu
  \frac{w^2}{\chi_0^2} + \frac{2\nu^2}{\chi_0^3} \left(-\chi_1 w^2 +
  \frac{wV_{1p}}{\chi_0} \right) + \cdots \,,
\end{align*}
so that the local problem (\ref{st:eprob}) on $0<\rho<\infty$ becomes
\begin{equation}\label{st:eprob_1}
\begin{split}
  \Delta_\rho \Phi & - \Phi + \left[ 2 w + \frac{2\nu}{\chi_0^2} \left(V_{1p}+
w U_{1p} \right) + \cdots \right] \Phi =-\nu \left[ \frac{w^2}{\chi_0^2} +
  \frac{2\nu}{\chi_0^3} \left(-\chi_1 w^2 + \frac{wV_{1p}}{\chi_0}\right)+\cdots
 \right] N + \lambda \Phi \,,  \\
  \Delta_\rho N &  = \left[ 2 w + \frac{2\nu}{\chi_0^2} \left(V_{1p}+
w U_{1p} \right) + \cdots \right] \Phi + \nu \left[ \frac{w^2}{\chi_0^2} +
  \frac{2\nu}{\chi_0^3} \left(-\chi_1 w^2 + \frac{wV_{1p}}{\chi_0}\right)
 + \cdots \right] N \,, \\
& \Phi\to 0 \,, \qquad   N \sim C \log \rho + B \,, \quad \mbox{as} \quad
   \rho\to \infty \,; \qquad \Phi^{\prime}(0)=N^{\prime}(0)=0 \,.
\end{split}
\end{equation}
We then introduce the appropriate expansions
\begin{equation} \label{st:exp}
\begin{split}
  N &= \frac{1}{\nu} \left(\hat{N}_0+\nu \hat{N}_1 + \cdots\right)\,,
 \qquad B = \frac{1}{\nu} \left(\hat{B}_0+\nu \hat{B}_1 + \cdots\right)\,,
  \qquad C=C_0 + \nu C_1 + \cdots \,, \qquad \\
  \Phi &=\Phi_0 + \nu \Phi_1 +  \cdots \,, \qquad \lambda=\lambda_0 + \nu
  \lambda_1 + \cdots \,,
\end{split}
\end{equation}
into (\ref{st:eprob_1}) and collect powers of $\nu$.

To leading order, we obtain on $0<\rho<\infty$ that
\begin{equation}
\label{st:zero}
\begin{split}
  L_0 \Phi_0 &\equiv  \Delta_\rho \Phi_0 - \Phi_0 + 2w \Phi_0 =-
 \frac{w^2}{\chi_0^2} \hat{N}_0 + \lambda_0 \Phi_0 \,, \qquad
  \Delta_\rho \hat{N}_0=0 \,, \\
 \Phi_0 &\to 0 \,, \qquad  \hat{N}_0\to \hat{B}_0 \quad \mbox{as} \quad
  \rho\to\infty\,; \qquad \Phi_0^{\prime}(0)=0 \,, \quad
   \hat{N}_0^{\prime}(0)=0  \,,
\end{split}
\end{equation}
where $L_0$ is referred to as the local operator. We conclude that
$\hat{N}_0=\hat{B}_0$ for $\rho\geq 0$.

At next order, we obtain on $\rho>0$ that $\Phi_1$ satisfies
\begin{equation}
  L_0\Phi_1 + \frac{2}{\chi_0^2}\left(V_{1p}+wU_{1p}\right) \Phi_0
  + \frac{2}{\chi_0^3}\left(-\chi_1 w^2 + \frac{wV_{1p}}{\chi_0}\right)
 \hat{N}_0 = -\frac{w^2}{\chi_0^2}\hat{N}_1 + \lambda_1 \Phi_0 \,;\qquad
 \Phi_1\to 0 \,, \quad
\mbox{as} \quad \rho\to \infty \,, \label{st:phi1}
\end{equation}
with $\Phi_1^{\prime}(0)=0$, and that $\hat{N}_1$ on $\rho>0$ satisfies
\begin{equation}
   \Delta_\rho \hat{N}_1 = 2w\Phi_0 + \frac{w^2}{\chi_0^2}\hat{N}_0
   \,; \qquad \hat{N}_1 \sim C_0 \log\rho + \hat{B}_1 \,, \quad
   \mbox{as} \quad \rho\to \infty \,; \qquad \hat{N}_1^{\prime}(0)=0 \,.
 \label{st:n1}
\end{equation}
In our analysis we will also need the problem for $\hat{N}_2$ given by
\begin{equation}
\label{st:n2}
\begin{split}
  \Delta_\rho \hat{N}_2 &= 2w\Phi_1 + \frac{2}{\chi_0^2}
  \left(V_{1p}+wU_{1p}\right) \Phi_0
  + \frac{2}{\chi_0^3}\left(-\chi_1 w^2 + \frac{wV_{1p}}{\chi_0}\right)
 \hat{N}_0 +\frac{w^2}{\chi_0^2}\hat{N}_1 \,, \\
   \hat{N}_2 &\sim C_1\log\rho + \hat{B}_2 \,, \quad \mbox{as} \quad
   \rho\to \infty\,; \qquad \hat{N}_2^{\prime}(0)=0 \,.
\end{split}
\end{equation}
In addition, by substituting (\ref{st:exp}) into (\ref{st:cb}), we obtain
upon collecting powers of $\nu$ that
\begin{equation}
  C_0 = -\hat{B}_0 \,, \qquad C_1+2\pi R_{b0} C_0 = -\hat{B}_1 \,.
  \label{st:cbexp}
\end{equation}

Next, we proceed to analyze (\ref{st:zero})--(\ref{st:n2}).  From the
divergence theorem, we obtain from (\ref{st:n1}) that
\begin{equation}
   C_0 = \int_{0}^{\infty} 2w\Phi_0 \rho \, d\rho + \frac{b}{\chi_0^2} \hat{N}_0
 \,, \qquad b\equiv \int_{0}^{\infty} w^2 \rho \, d\rho \,.
  \label{st:c0b0}
\end{equation}
Since $C_0=-\hat{B}_0$ and $\hat{B}_0=\hat{N}_0$, (\ref{st:c0b0}) yields
that
\begin{equation}
  \hat{N}_0= \hat{B}_0 = -2 \left[ 1+ \frac{b}{\chi_0^2}\right]^{-1}
  \int_{0}^{\infty} w \Phi_0 \rho \, d\rho \,. \label{st:b0}
\end{equation}
With $\hat{N}_0$ known, (\ref{st:zero}) provides the leading-order
nonlocal eigenvalue problem (NLEP)
\begin{equation}
   L_0 \Phi_0 - \frac{2 w^2 b}{\chi_0^2 + b}  \frac{ \int_{0}^{\infty}
   w\Phi_0 \, \rho \, d\rho}{\int_{0}^{\infty} w^2 \rho \, d\rho} = \lambda_0
  \Phi_0\,; \qquad \Phi_0\to 0 \,, \quad \mbox{as} \quad \rho\to \infty \,;
 \qquad \Phi_0^{\prime}(0)=0 \,. \label{st:phi0}
\end{equation}

For this NLEP, the rigorous result of \cite{w} (see also Theorem 3.7
of the survey article \cite{wsurv}) proves that $\mbox{Re}(\lambda_0)<0$
if and only if ${2b/(\chi_0^2+b)}>1$. At the stability threshold where
${2b/(\chi_0^2+b)}=1$, we have from the identity $L_0 w= w^2$ that
$\Phi_0=w$ and $\lambda_0=0$. From (\ref{st:b0}) and (\ref{st:c0b0}) we
can then calculate $\hat{B}_0$ and $C_0$ at this leading-order stability
threshold. In summary, to leading order in $\nu$, we obtain at
$\lambda_0=0$ that
\begin{equation}
   \frac{b}{\chi_0^2}=1 \,, \qquad \Phi_0 = w \,, \qquad
    \hat{B}_0=\hat{N}_0=-b=-\int_0^{\infty} w^2\rho \, d\rho \,, \qquad
   C_0=b \,. \label{st:marg}
\end{equation}

Upon substituting (\ref{st:marg}) into (\ref{st:n1})
we obtain at $\lambda_0=0$ that $\hat{N}_1$ on $\rho>0$ satisfies
\begin{equation}
   \Delta_\rho \hat{N}_1 = w^2 \,; \qquad
   \hat{N}_1 \sim b \log\rho + \hat{B}_1 \,, \quad \mbox{as} \quad \rho\to
\infty \,; \qquad \hat{N}_1^{\prime}(0)=0 \,.\label{st:n1n}
\end{equation}
Upon comparing (\ref{st:n1n}) with the problem for $U_{1p}$, as given in
(\ref{score:exp_3}), we conclude that
\begin{equation}
   \hat{N}_1= U_{1p} + \hat{B}_1 \,. \label{st:n1_solv}
\end{equation}

Next, we observe that for $D={D_0/\nu}\gg 1$, it follows from
(\ref{eq1:sval}) that $S=\nu^{1/2}S_0 + \cdots$, where
$S_0={a|\Omega|/(2\pi\sqrt{D_0})}$. Then, since $S_0={b/\chi_0}$ from
(\ref{score:exp_4}), and ${b/\chi_0^2}=1$ when $\lambda_0=0$ from
(\ref{st:marg}), the critical value of $D_0$ at the leading-order
stability threshold $\lambda_0=0$ is
\bsub\label{st:crit_lead}
\begin{equation}
   D_0 = D_{0c}\equiv \frac{a^2 |\Omega|^2}{4\pi^2 b} \,. \label{st:d0crit}
\end{equation}
This motivates the definition of the bifurcation parameter $\mu$ by
\begin{equation}
   \mu \equiv \frac{4\pi^2 D\nu b}{a^2 |\Omega|^2} \,, \label{st:mudef}
\end{equation}
\esub
so that at criticality where $\chi_0=\sqrt{b}$, we have $\mu=1$.

We then proceed to analyze the effect of the higher order terms in
powers of $\nu$ on the stability threshold. In particular, we
determine the continuous band of spectrum that is contained within an
${\mathcal O}(\nu)$ ball near $\lambda=0$ when the bifurcation
parameter $\mu$ is ${\mathcal O}(\nu)$ close to its leading-order
critical value $\mu=1$. As such, we set
\begin{equation}
   \lambda = \nu \lambda_1 + \cdots \,, \qquad \mbox{for} \qquad
  \mu=1+\nu \mu_1 + \cdots \,,
\end{equation}
and we derive an expression for $\lambda_1$ in terms of $\mu_1$, the Bloch
vector $\kb$, the lattice structure, and certain correction terms to the
core problem.

To determine an expression for $\mu_1$ in terms of $\chi_0$ and
$\chi_1$ we first set $D={D_0/\nu}$ and write the two term expansion
for the source strength $S$ as
\begin{equation*}
 S = \frac{a |\Omega|}{2\pi\sqrt{D}}=\nu^{1/2}\left(S_0+\nu S_1+\cdots\right)\,,
\end{equation*}
where $S_0$ and $S_1$ are given in (\ref{score:exp_4}) in terms of
$\chi_0$ and $\chi_1$. By using (\ref{score:exp_4}) and
(\ref{st:mudef}), this expansion for $S$ becomes
\begin{equation}
 \sqrt{ \frac{b}{\mu}} = \left( \frac{b}{\chi_0} + \nu
  \left[ -\frac{\chi_1 b}{\chi_0} + \frac{1}{\chi_0^3} \int_{0}^{\infty}
  V_{1p} \rho \, d\rho\right] + \cdots \right) \,. \label{st:muexp}
\end{equation}
As expected, to leading order we have $\mu=1$ when $b=\chi_0^2$. At
$\lambda_0=0$ where $\chi_0=\sqrt{b}$, we use $\mu^{-1/2}\sim 1-{\nu
  \mu_1/2}+\cdots$ to relate $\mu_1$ to $\chi_1$ as
\begin{equation}
   \frac{\chi_1}{\sqrt{b}} = \frac{\mu_1}{2} + \frac{1}{b^2} \int_0^{\infty}
  V_{1p}\rho \, d\rho \,. \label{st:mu1}
\end{equation}

Next, we substitute $\Phi_0=w$, $\hat{N}_0=-b$, $\chi_0^2=b$, and
$\hat{N}_1=U_{1p}+\hat{B}_1$ from (\ref{st:n1_solv}), into the equation
(\ref{st:phi1}) for $\Phi_1$. After some algebra, we conclude that
$\Phi_1$ at $\lambda_0=0$ satisfies
\begin{equation}
  L_0 \Phi_1 + \frac{w^2}{b} \hat{B}_1 = - \frac{2\chi_1\chi_0}{b} w^2
  - \frac{3}{b} w^2 U_{1p} + \lambda_1 w \,; \qquad \Phi_1\to 0
\,, \quad \mbox{as} \quad \rho\to \infty \,, \label{st:phi1n}
\end{equation}
with $\Phi_1^{\prime}(0)=0$.  In a similar way, at the leading-order
stability threshold, the problem (\ref{st:n2}) for $\hat{N}_2$ on
$\rho>0$ becomes
\begin{equation}
\label{st:n2n}
\begin{split}
  \Delta_\rho \hat{N}_2 &= 2w\Phi_1 + \frac{w^2}{b} \hat{B}_1
  + \frac{3}{b} w^2 U_{1p} + \frac{2\chi_0\chi_1}{b} w^2 \,, \\
   \hat{N}_2 &\sim C_1\log\rho + \hat{B}_2 \,, \quad \mbox{as} \quad
   \rho\to \infty\,; \qquad \hat{N}_2^{\prime}(0)=0 \,.
\end{split}
\end{equation}

To determine $\hat{B}_1$, as required in (\ref{st:phi1n}), we use the
divergence theorem on (\ref{st:n2n}) to obtain that
\begin{equation*}
  C_1 = 2\int_{0}^{\infty} w\Phi_1 \rho \, d\rho + \hat{B}_1 + \frac{3}{b}
  \int_{0}^{\infty} w^2 U_{1p} \rho \, d\rho + 2\chi_0 \chi_1 \,.
\end{equation*}
Upon combining this expression with $C_1+2\pi R_{b0} C_0 =
-\hat{B}_1$, as obtained from (\ref{st:cbexp}), where $C_0=b$,
we obtain $\hat{B}_1$ as
\begin{equation*}
  \hat{B}_1=-\int_0^{\infty} w\Phi_1 \rho \, d\rho - \pi b R_{b0} -
    \frac{3}{2b} \int_{0}^{\infty} w^2 U_{1p} \rho \, d\rho - \chi_0\chi_1 \,.
\end{equation*}
Upon substituting this expression into (\ref{st:phi1n}), we conclude that
 $\Phi_1$ satisfies
\bsub \label{st:phi1_fin}
\begin{equation}
 {\cal L}\Phi_1 \equiv L_0 \Phi_1 - w^2 \frac{ \int_0^\infty w\Phi_1
   \rho \, d\rho}{ \int_0^\infty w^2\rho \, d\rho} = {\cal R}_s +
 \lambda_1 w \,; \qquad \Phi_1\to 0 \,, \quad \mbox{as} \quad
  \rho \to \infty \,, \label{st:phi1_fin_1}
\end{equation}
with $\Phi_1^{\prime}(0)=0$, where the residual ${\cal R}_s$ is defined by
\begin{equation}
  {\cal R}_s\equiv \pi w^2
   R_{b0} + \frac{3}{2b^2} w^2 \int_0^{\infty} w^2 U_{1p} \rho\, d\rho
   - \frac{\chi_0\chi_1 w^2}{b} -\frac{3}{b} w^2U_{1p} \,.
\label{st:phi1_fin_2}
\end{equation}
\esub

Finally, $\lambda_1$ is determined by imposing a solvability condition
on (\ref{st:phi1_fin}). The homogeneous adjoint operator ${\cal L}^{\star}$
corresponding to (\ref{st:phi1_fin}) is
\begin{equation}
\label{st:phi1_adj}
 {\cal L}^{\star}\Psi \equiv L_0 \Psi - w \frac{ \int_0^\infty w^2\Psi
   \rho \, d\rho}{ \int_0^\infty w^2\rho \, d\rho} \,.
\end{equation}
We define $\Psi^{\star}=w+{\rho w^{\prime}/2}$ and readily verify that
$L_0 \Psi^{\star}=w$ and $L_0 w = w^2$ (see \cite{w}). Then, we use
Green's second identity to obtain $\int_{0}^{\infty} \left[ w
  L_0\Psi^{\star}- \Psi^{\star}L_0 w\right] \rho\,d\rho=
\int_0^{\infty} \left( w^2 - \Psi^{\star} w^2\right)\rho\, d\rho$. By
the decay of $w$ and $\Psi^{\star}$ as $\rho\to\infty$, we obtain that
$\int_{0}^{\infty} w^2 \rho \, d\rho = \int_{0}^{\infty} \Psi^{\star}
w^2\rho \, d\rho$. Therefore, since the ratio of the two integrals in
(\ref{st:phi1_adj}) is unity when $\Psi=\Psi^{\star}$, we conclude
that ${\cal L}^{\star}\Psi^{\star}=0$.

Finally, we impose the solvability condition that the right hand side of
(\ref{st:phi1_fin}) is orthogonal to $\Psi^{\star}$ in the sense that
$\lambda_1 \int_{0}^{\infty} w \Psi^{\star}\rho \, d\rho +
\int_{0}^{\infty} {\cal R}_s \Psi^{*} \rho \, d\rho =0$. By using
(\ref{st:phi1_fin_2}) for ${\cal R}_s$, this solvability condition yields
that
\begin{equation}
  \lambda_1 = -\frac{ \int_{0}^{\infty} w^2 \Psi^{\star} \rho \, d\rho}
  { b \int_{0}^{\infty} w \Psi^{\star} \rho \, d\rho}\left( b \pi R_{b0}
   - \chi_1 \chi_0 + \frac{3}{2b} \int_{0}^{\infty} w^2 U_{1p} \rho \, d\rho
  -3 \frac{ \int_{0}^{\infty} w^2 U_{1p} \Psi^{\star} \rho \, d\rho}
  { \int_{0}^{\infty} w^2 \Psi^{\star} \rho \, d\rho} \right) \,.
\label{st:lam1_1}
\end{equation}
Equation (\ref{st:lam1_1}) is simplified by first calculating the following
integrals by using integration by parts:
\begin{equation}
\label{st:integ}
\begin{split}
   \int_{0}^{\infty} w^2 \Psi^{\star} \rho \, d\rho &=
 \int_0^{\infty} \left(L_0 w\right) \left(L_0^{-1} w\right) = \int_{0}^{\infty}
  w^2 \rho \, d\rho = b \,, \\
   \int_{0}^{\infty} w \Psi^{\star} \rho \, d\rho &=\int_{0}^{\infty} \rho w
  \left( w + \frac{\rho}{2} w^{\prime}\right) \, d\rho =
    \int_{0}^{\infty} w^2\rho \, d\rho + \frac{1}{4} \int_{0}^{\infty}
  \left[w^2\right]^{\prime} \rho^2 \, d\rho = \frac{b}{2} \,.
\end{split}
\end{equation}
In addition, since $L_0 V_{1p}=-w^2 U_{1p}$ from (\ref{score:exp_3})
and $\Psi^{\star}=L_0^{-1}w$, we obtain upon integrating by parts that
\begin{equation*}
  \int_0^{\infty} w^2 U_{1p} \Psi^{\star} \rho \, d\rho = -
 \int_0^{\infty} \left( L_0 V_{1p}\right)\left(L_0^{-1} w\right)\rho \, d\rho=
 - \int_{0}^{\infty} V_{1p} w \rho \, d\rho \,.
\end{equation*}
By substituting this expression and (\ref{st:integ}) into
(\ref{st:lam1_1}), we obtain
\begin{equation}
  \frac{\lambda_1}{2} = -\frac{1}{b} \left[ b\pi R_{b0} -\chi_0\chi_1 +
  \frac{3}{2b} \int_{0}^{\infty} w^2 U_{1p} \rho \, d\rho +
  \frac{3}{b} \int_0^{\infty} w V_{1p} \rho \, d\rho \right] \,.
\label{st:lam_2}
\end{equation}
Next, we use (\ref{score:exp_3}) to calculate $\int_0^{\infty} w^2
U_{1p}\rho \, d\rho = \int_0^\infty \left( V_{1p}-2w V_{1p}\right) \rho
\, d\rho$. Finally, we substitute this expression together with
$\chi_0=\sqrt{b}$ and (\ref{st:mu1}), which relates $\mu_1$ to
$\chi_1$, into (\ref{st:lam_2}) to obtain our final expression for
$\lam_1$.  We summarize our result as follows:

\begin{result}\label{pr 3.2} In the limit
$\eps\to 0$, consider a steady-state periodic pattern of spots for the
  Schnakenburg model (\ref{1:sc}) on the Bravais lattice $\Lambda$
  when $D={\mathcal O}(\nu^{-1})$, where $\nu={-1/\log\eps}$.  Then,
  when
\bsub \label{st:lam1_fin}
\begin{equation}
  D= \frac{a^2|\Omega|^2}{4\pi^2 b\nu} \left(1+\mu_1 \nu\right)\,,
  \label{st:lam1_fin1}
\end{equation}
where $\mu_1={\mathcal O}(1)$, the portion of the continuous spectrum of
the linearization that lies within an ${\mathcal O}(\nu)$ neighborhood
of the origin $\lambda=0$, i.~e.~that satisfies
$|\lambda|\leq {\mathcal O}(\nu)$, is given by
\begin{equation}
  \lambda=\nu \lambda_1 + \cdots \,, \qquad
   \lambda_1= 2 \left[ \frac{\mu_1}{2} - \pi R_{b0} - \frac{1}{2b^2}
  \int_{0}^{\infty} \rho V_{1p} \, d\rho \right] \,. \label{st:lam1_fin2}
\end{equation}
\esub
Here $|\Omega|$ is the area of the Wigner-Seitz cell and
$R_{b0}=R_{b0}(\kb)$ is the regular part of the Bloch Green's function
$G_{b0}$ defined on $\Omega$ by (\ref{green:b0}), with $\kb\neq 0$ and
${\kb/(2\pi)}\in \Omega_B$.
\end{result}

The result (\ref{st:lam1_fin2}) determines how the portion of the band
of continuous spectrum that lies near the origin depends on the
de-tuning parameter $\mu_1$, the correction $V_{1p}$ to the solution of
the core problem, and the lattice structure and Bloch wavevector
$\kb$ as inherited from $R_{b0}(\kb)$.

\begin{remark}\label{remark 1} We need only consider
  ${\kb/(2\pi)}$ in the first Brillouin zone $\Omega_B$, defined as
  the Wigner-Seitz cell centered at the origin for the reciprocal
  lattice. Since $R_{b0}$ is real-valued from Lemma \ref{lemma 2.1},
  it follows that the band of spectrum (\ref{st:lam1_fin2}) lies on
  the real axis in the $\lambda$-plane. Furthermore, since by Lemma
  \ref{lemma 2.2}, $R_{b0}={\mathcal O}\left({1/(\kb^T {\cal Q}
    \kb)}\right)\to +\infty$ as $|\kb|\to 0$ for some positive
  definite matrix ${\cal Q}$, the continuous band of spectrum that
  corresponds to small values of $|\kb|$ is not within an ${\mathcal
    O}(\nu)$ neighborhood of $\lambda=0$, but instead lies at an
  ${\mathcal O}\left({\nu/\kb^T {\cal Q} \kb}\right)$ distance from
  the origin along the negative real axis in the $\lambda$-plane.
\end{remark}

We conclude from (\ref{st:lam1_fin2}) that a periodic arrangement of
spots with a given lattice structure is linearly stable when
\begin{equation}
   \mu_1 < 2\pi R_{b0}^{\star} + \frac{1}{b^2} \int_0^\infty V_{1p}\rho \, d\rho
 \,, \qquad R_{b0}^{\star} \equiv \min_{\kb} R_{b0}(\kb) \,.
  \label{st:mu1_opt}
\end{equation}
For a fixed area $|\Omega|$ of the Wigner-Seitz cell, the optimal
lattice geometry is defined as the one that allows for stability for
the largest inhibitor diffusivity $D$. This leads to one of our main results.

\begin{result} \label{pr 3.3} The optimal
  lattice arrangement for a periodic pattern of spots for the
  Schnakenburg model (\ref{1:sc}) is the one for which ${\cal
    K}_{\textrm{s}}\equiv R_{b0}^{*}$ is maximized. Consequently, this
  optimal lattice allows for stability for the largest possible value
  of $D$. For $\nu={-1/\log\eps}\ll 1$, a two-term asymptotic
  expansion for this maximal stability threshold for $D$ is given
  explicitly by in terms of an objective function ${\cal K}_s$ by
\begin{equation}
    D_{\textrm{optim}}\sim \frac{a^2 |\Omega|^2}{4\pi^2 b \nu} \left[
      1 + \nu \left(2\pi \max_{\Lambda} {\cal K}_s + \frac{1}{b^2}
      \int_{0}^{\infty} V_{1p}\rho \, d\rho \right)\right]\,, \qquad
    {\cal K}_{\textrm{s}}\equiv R_{b0}^{\star}=\min_{\kb}
    R_{b0} \,, \label{st:doptim}
\end{equation}
 where $\max_{\Lambda}{\cal K}_s$ is taken over all lattices
  $\Lambda$ that have a common area $|\Omega|$ of the Wigner-Seitz
  cell. In (\ref{st:doptim}), $V_{1p}$ is the solution to
  (\ref{score:exp_3}) and $b=\int_{0}^{\infty} w^2\rho \, d\rho$ where
  $w(\rho)>0$ is the ground-state solution of $\Delta_\rho w - w +
  w^2=0$. Numerical computations yield $b\approx 4.93$ and
  $\int_{0}^{\infty} V_{1p}\rho \, d\rho \approx 0.481$.
\end{result}

The numerical method to compute ${\cal K}_{\textrm{s}}$ is given in \S
\ref{sec:ewald}. In \S \ref{sec:ewald_0}, we show numerically that
within the class of oblique Bravais lattices, ${\cal K}_{\textrm{s}}$
is maximized for a regular hexagonal lattice.  Thus, the maximal stability
threshold for $D$ is obtained for a regular hexagonal lattice arrangement of
spots.

\setcounter{equation}{0}
\setcounter{section}{3}
\section{Periodic Spot Patterns for the Gierer-Meinhardt Model}\label{gm}

In this section we analyze the linear stability of a steady-state periodic
pattern of spots for the GM model (\ref{1:gm}), where the spots are
centered at the lattice points of the Bravais lattice
(\ref{lattice-def}).

\subsection{The Steady-State Solution}\label{gm:equil}

We first use the method of matched asymptotic expansions to construct
a steady-state one-spot solution to (\ref{1:gm}) centered at the
origin of the Wigner-Seitz cell $\Omega$.

In the inner region near the origin of $\Omega$ we look for a locally
radially symmetric steady-state solution of the form
\begin{equation}
  u = D \, U \,, \qquad v = D V  \,, \qquad
  \yb = \eps^{-1}\xb \,. \label{geq1:var}
\end{equation}
Then, substituting (\ref{geq1:var}) into the steady-state equations of
(\ref{1:gm}), we obtain that $V\sim V(\rho)$ and $U\sim U(\rho)$, with
$\rho=|\yb|$, satisfy the core problem
\bsub \label{geq1:core}
\begin{gather}
 \Delta_\rho V - V + {V^2/U} =0 \,, \qquad
 \Delta_\rho U =-V^2 \,, \qquad 0 < \rho
 < \infty \,,\label{geq1:core_1}\\
 U^{\p}(0)=V^{\p}(0)=0\,; \qquad V
 \to 0 \,, \qquad U \sim -S \log\rho + \chi(S) + o(1) \,, \quad
 \mbox{as} \quad \rho\to \infty \,, \label{geq1:core_2}
\end{gather}
\esub where $\Delta_\rho V\equiv V^{\p\p} + \rho^{-1} V^{\p}$ and
$S=\int_{0}^{\infty} V^2 \rho \, d\rho$.  The unknown source strength
$S$ will be determined by matching the far-field behavior of the core
solution to an outer solution for $u$ valid away from ${\mathcal
  O}(\eps)$ distances from the origin.

Since $v$ is exponentially small in the outer region, we have in the
sense of distributions that $\eps^{-2} v^2 \rightarrow 2\pi D^2 S
\delta(\xb)$. Therefore, from (\ref{1:gm}), the outer steady-state
problem for $u$ is
\begin{equation}
\label{geq1:uout}
\begin{split}
   \Delta u -\frac{1}{D} u &=-2\pi D S \,\delta(\xb) \,,
  \quad \xb \in \Omega \,; \qquad
 {\cal P}_0 u =0 \,, \quad \xb\in \partial\Omega \,, \\
   u &\sim -D S \log|\xb| + D \left(-\frac{S}{\nu} + \chi(S)\right)\,,
  \quad \mbox{as} \quad \xb\to \zero \,,
\end{split}
\end{equation}
where $\nu\equiv {-1/\log\eps}$.  We introduce the reduced-wave
Green's function $G_{p}(\xb)$ and its regular part $R_p$, which satisfy
\begin{equation}
\label{greq1:uout}
\begin{split}
   \Delta G_p -\frac{1}{D} G_p &=-\delta(\xb) \,,
  \quad \xb \in \Omega \,; \qquad
  {\cal P}_0 G_p = 0\,, \quad \xb\in \partial\Omega\,, \\
   G_p(\xb) &\sim -\frac{1}{2\pi}\log|\xb| + R_p \,,
  \quad \mbox{as} \quad \xb\to \zero \,,
\end{split}
\end{equation}
where $R_p$ is the regular part of $G_p$.  The solution to
(\ref{geq1:uout}) is $u(\xb)=2\pi D S G_p(\xb)$. Now as $\xb\to \xbo$
we calculate the local behavior of $u(\xb)$ and compare it with the
required behavior in (\ref{geq1:uout}). This yields
that $S$ satisfies
\begin{equation}
  \left( 1 + 2\pi \nu R_p \right) S = \nu \chi(S) \,. \label{gm:seq}
\end{equation}

Since the stability threshold occurs when $D={\mathcal O}(\nu^{-1})\gg
1$, we expand the solution to (\ref{greq1:uout}) for $D={D_0/\nu}\gg 1$
with $D_0={\mathcal O}(1)$ to obtain
\begin{equation}
    G_{p}= \frac{D_0}{|\Omega|\nu} + G_{0p} + {\mathcal O}(\nu)\,, \qquad
    R_{p}= \frac{D_0}{|\Omega|\nu} + R_{0p} + {\mathcal O}(\nu)\,,
  \label{gm:green_exp}
\end{equation}
where $G_{0p}$ and $R_{0p}$ is the periodic source-neutral Green's
function and its regular part, respectively, defined by
(\ref{gr:source_neut}). By combining (\ref{gm:seq}) and
(\ref{gm:green_exp}), we get that $S$ satisfies
\begin{equation}
  \left( 1 + \mu + 2\pi \nu R_{0p} + {\mathcal O}(\nu^2)\right) S =
  \nu \chi(S) \,, \qquad \mu \equiv \frac{2\pi D_0}{|\Omega|}
  \,. \label{gm:seq_a}
\end{equation}

To determine the appropriate scaling for $S$ in terms of $\nu\ll 1$
for a solution to (\ref{gm:seq_a}), we use $\chi(S)={\mathcal O}(S^{1/2})$ as
$S\to 0$ from Appendix \ref{app:gm}. Thus, to balance the
leading order terms in (\ref{gm:seq_a}), we require that
$S={\mathcal O}(\nu^2)$ as $\nu\to 0$. The next result
determines a two-term expansion for the solution to the core problem
(\ref{geq1:core}) for $\nu\to 0$ when $S={\mathcal O}(\nu^2)$.

\begin{lemma}\label{lemma 4.1} For
$S=S_0 \nu^{2} + S_1 \nu^{3}+\cdots$, where $\nu\equiv {-1/\log\eps}\ll 1$,
the asymptotic solution to the core problem (\ref{geq1:core}) is
\bsub \label{gcore:exp}
\begin{equation}
  V\sim \nu\left( V_0 + \nu V_1 + \cdots\right) \,, \qquad
  U\sim \nu\left( U_0 + \nu U_1 + \nu^2 U_2 + \cdots\right)
  \,, \qquad
  \chi \sim \nu\left( \chi_0 + \nu \chi_1 + \cdots\right)\,,
  \label{gcore:exp_1}
\end{equation}
where $U_0$, $U_{1}(\rho)$, $V_{0}(\rho)$, and $V_{1}(\rho)$ are defined by
\begin{equation}
  U_0 = \chi_0 \,, \qquad U_1 = \chi_1 +  S_0 U_{1p} \,, \qquad
  V_0=\chi_0 w \,, \qquad V_1 = \chi_1 w + S_0 V_{1p} \,. \label{gcore:exp_2}
\end{equation}
 Here $w(\rho)$ is the unique ground-state solution to
  $\Delta_\rho w-w+w^2=0$ with $w(0)>0$, $w^{\p}(0)=0$, and $w\to 0$
  as $\rho\to \infty$. In terms of $w(\rho)$, the functions $U_{1p}$ and
 $V_{1p}$ are the unique solutions on $0\leq\rho<\infty$ to
\begin{equation}
\label{gcore:exp_3}
\begin{split}
   L_0 V_{1p} &=  w^2 U_{1p} \,, \qquad V_{1p}^{\p}(0)=0\,, \quad
   V_{1p}\to 0 \,, \quad \mbox{as} \quad \rho \to \infty \,,\\
  \Delta_\rho U_{1p} & = -{w^2/b} \,, \qquad U_{1p}^{\p}(0)=0 \,, \qquad
 U_{1p}\to -\log\rho + o(1) \,, \quad
  \mbox{as} \quad \rho \to \infty\,; \qquad b\equiv \int_{0}^{\infty} \rho w^2\,
 d\rho \,,
\end{split}
\end{equation}
 where $L_0 V_{1p} \equiv\Delta_\rho V_{1p} - V_{1p} + 2w V_{1p}$.  Finally, in
  (\ref{gcore:exp_1}), the constants $\chi_0$ and $\chi_1$ are related
  to $S_0$ and $S_1$ by
\begin{equation}
   \chi_0 = \sqrt{\frac{S_0}{b}} \,, \qquad \chi_1=\frac{S_1}{2\chi_0 b} -
  \frac{S_0}{b} \int_{0}^{\infty}w V_{1p} \rho \, d\rho \,. \label{gcore:exp_4}
\end{equation}
\esub
\end{lemma}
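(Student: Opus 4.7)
The strategy is a straightforward matched-expansion / regular-perturbation argument: substitute the ansatz (\ref{gcore:exp_1}) into the core system (\ref{geq1:core_1})--(\ref{geq1:core_2}) and equate like powers of $\nu$. The scaling is dictated by the assumption $S = \nu^2 S_0 + \nu^3 S_1 + \cdots$: the far-field requirement $U \sim -S\log\rho + \chi(S)$ combined with $\chi = \mathcal{O}(\nu)$ forces $U = \mathcal{O}(\nu)$, and the dominant balance $V^2/U \sim V$ in the $V$-equation then pins $V = \mathcal{O}(\nu)$ as well, justifying the prefactors in (\ref{gcore:exp_1}).

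\textbf{Leading order.} At $\mathcal{O}(\nu)$ the $U$-equation reduces to $\Delta_\rho U_0 = 0$ with $U_0'(0)=0$ and $U_0 \to \chi_0$ at infinity, which forces $U_0 \equiv \chi_0$. The $V$-equation then becomes $\Delta_\rho V_0 - V_0 + V_0^2/\chi_0 = 0$, and the substitution $V_0 = \chi_0 w$ reduces this to the ground-state problem $\Delta_\rho w - w + w^2=0$, $w'(0)=0$, $w\to 0$, whose unique positive radial solution is $w$.

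\textbf{Next order.} At $\mathcal{O}(\nu^2)$ the $U$-equation gives $\Delta_\rho U_1 = -\chi_0^2 w^2$. The splitting $U_1 = \chi_1 + S_0 U_{1p}$ has $\chi_1$ absorb the constant value at infinity while $U_{1p}$ solves a universal problem. The divergence theorem applied to $\Delta_\rho U_{1p} = -w^2/b$ produces the log-coefficient $-1$ at infinity and pins down the $o(1)$ remainder uniquely, yielding exactly (\ref{gcore:exp_3}); matching the prescribed far-field $U_1 \to -S_0\log\rho + \chi_1$ then forces $\chi_0^2 = S_0/b$. For the $V$-equation, using $U_0=\chi_0$ and $V_0=\chi_0 w$ gives the linearized equation $L_0 V_1 = w^2 U_1$. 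Since $L_0 w = w^2$ (immediate from $\Delta_\rho w - w + w^2 = 0$), the decomposition $V_1 = \chi_1 w + S_0 V_{1p}$ isolates the particular problem $L_0 V_{1p} = w^2 U_{1p}$ with $V_{1p}\to 0$. Unique solvability for a radial decaying $V_{1p}$ comes from the standard fact that $L_0$ restricted to radial $H^1$-functions is invertible (its kernel on all of $\R^2$ is spanned by the non-radial translational modes $\partial_{x_i} w$).

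\textbf{Closure via the integral identity.} The source-strength relation $S = \int_0^\infty V^2 \rho\,d\rho$ is the divergence-theorem identity for $\Delta_\rho U = -V^2$ combined with the prescribed far-field $U \sim -S\log\rho$. Expanding $V^2 = \nu^2 \chi_0^2 w^2 + 2\nu^3 \chi_0 w(\chi_1 w + S_0 V_{1p}) + \cdots$ and integrating term-wise against $\rho\,d\rho$ reproduces $\chi_0^2 b = S_0$ at order $\nu^2$ and yields $2\chi_0(\chi_1 b + S_0 \int_0^\infty w V_{1p}\rho\,d\rho)= S_1$ at order $\nu^3$, which rearranges to the formula for $\chi_1$ in (\ref{gcore:exp_4}). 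The only subtleties are the consistency check that the log-matching and the integral identity give the same $\chi_0$ relation (a useful internal check rather than an obstacle), and the invertibility of $L_0$ on radial functions; neither is a serious impediment.
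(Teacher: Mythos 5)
Your proposal is correct and follows essentially the same route as the paper's Appendix B: substitute the $\mathcal{O}(\nu)$ expansion into the core problem, solve the leading order via $U_0=\chi_0$, $V_0=\chi_0 w$, decompose $U_1=\chi_1+S_0U_{1p}$, $V_1=\chi_1 w+S_0V_{1p}$ using $L_0w=w^2$, and fix $\chi_0,\chi_1$ from the source strength. Your closure step, expanding $S=\int_0^\infty V^2\rho\,d\rho$ to order $\nu^3$, is just a repackaging of the paper's application of the divergence theorem to the $U_1$ and $U_2$ equations, so there is no substantive difference.
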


The derivation of this result is given in Appendix \ref{app:gm} below.
The $o(1)$ condition in the far-field behavior in (\ref{gcore:exp_3})
eliminates an otherwise arbitrary constant in the determination of
$U_{1p}$. Therefore, this condition ensures that the solution to the
linear BVP (\ref{gcore:exp_3}) is unique.

\subsection{The Spectrum of the Linearization Near the Origin}\label{gm:stab}

We linearize around the steady-state solution $u_e$ and $v_e$, as calculated
in \S \ref{gm:equil}, by introducing the perturbation (\ref{st:pert}).
This yields the following eigenvalue problem, where ${\cal P}_{\kb}$ is the
quasi-periodic boundary operator of (\ref{gr:pk}):
\begin{equation}
\label{gm:eig}
\begin{split}
 \eps^{2} \Delta \phi - \phi + \frac{2 v_e}{u_e} \phi - \frac{v_{e}^2}{u_e^2}
   \eta &=\lam \phi \,, \quad \xb \in \Omega\,; \qquad
{\cal P}_\kb \phi=0 \,, \quad \xb \in \partial\Omega \,, \\
 D\Delta\eta- \eta + 2 \eps^{-2} v_e \phi &= \lambda \tau \eta \,,
 \quad \xb \in
 \Omega \,; \qquad {\cal P}_\kb \eta =0\,, \quad \xb\in \partial\Omega\,.
\end{split}
\end{equation}

In the inner region near $\xb=\xbo$ we introduce the local variables
$N(\rho)$ and $\Phi(\rho)$ by
\begin{equation}
  \eta =  N(\rho) \,, \qquad
   \phi= \Phi(\rho) \,, \qquad \rho=|\yb| \,, \qquad
 \yb=\eps^{-1}\xb \,. \label{gm:inn}
\end{equation}
Upon substituting (\ref{gm:inn}) into (\ref{gm:eig}), and by using
$u_e\sim D U$ and $v_e\sim D V$, where $U$ and $V$
satisfy the core problem (\ref{geq1:core}), we obtain on $0<\rho<\infty$
that
\begin{equation}
\label{gm:eprob}
\begin{split}
  \Delta_\rho \Phi - \Phi + &  \frac{2V}{U} \Phi -\frac{V^2}{U^2} N
  =\lambda \Phi \,, \qquad \Phi \to 0 \,, \quad \mbox{as} \quad \rho
  \to \infty \,,\\ \Delta_\rho N &= -2 V \Phi \,, \qquad N
  \sim -C \log \rho + B \,, \quad \mbox{as} \quad \rho\to \infty \,,
\end{split}
\end{equation}
with $\Phi^{\prime}(0)=N^{\prime}(0)=0$ and where $B=B(S;\lambda)$. The
divergence theorem yields the identity
$C = 2 \int_{0}^{\infty} V \Phi \rho\, d\rho$.

To determine the constant $C$ we must match the far field behavior of
the core solution to an outer solution for $\eta$, which is valid away
from $\xb=\xbo$. Since $v_e$ is localized near $\xb=\xbo$, we
calculate in the sense of distributions that $ 2 \eps^{-2} v_e \phi
\rightarrow 2 D \left(\int_{\R^2} V \Phi \, d\yb\right) \, \delta(\xb)
= 2\pi C D \delta(\xb)$.  By using this expression in (\ref{gm:eig}),
we obtain that the outer problem for $\eta$ is
\begin{equation}
  \label{gm:etaout}
\begin{split}
   \Delta \eta & -\theta_\lam^2 \eta = -2\pi C \delta(\xb) \,,
 \quad \xb\in \Omega \,; \qquad {\cal P}_\kb \eta=0 \,, \quad \xb \in
 \partial\Omega \,, \\
   \eta & \sim -C \log|\xb| - \frac{C}{\nu} - B\,,   \quad \mbox{as}
  \quad \xb\to\xbo\,,
\end{split}
\end{equation}
where we have defined $\theta_\lambda\equiv \sqrt{(1+\tau\lambda)/D}$.
The solution to (\ref{gm:etaout}) is $\eta=2\pi C {\cal G}_{b\lam}(\xb)$, where
${\cal G}_{b\lam}$ satisfies
\begin{equation}
\label{gm_grlam:eta}
\begin{split}
   \Delta {\cal G}_{b\lam} -\theta_\lam^2 {\cal G}_{b\lam} &=-\delta(\xb) \,,
  \quad \xb \in \Omega \,; \qquad
  {\cal P}_\kb {\cal G}_{b\lam}=0 \,, \quad \xb\in \partial\Omega\,, \\
   {\cal G}_{b\lam} &\sim -\frac{1}{2\pi}\log|\xb| + {\cal R}_{b\lam} \,,
  \quad \mbox{as} \quad \xb\to \zero \,.
\end{split}
\end{equation}
By imposing that the behavior of $\eta$ as $\xb\to \xbo$ agrees with
that in (\ref{gm:etaout}), we conclude that $\left(1 + 2\pi \nu {\cal
  R}_{b\lam} \right) C = \nu B$. Then, since $D={D_0/\nu}\gg 1$, we
have from Lemma \ref{lemma 2.3}(ii), upon taking the $D\gg 1$ limit in
(\ref{gm_grlam:eta}), that $R_{b\lam}\sim R_{b0} + {\mathcal O}(\nu)$
for $|\kb|>0$ and ${\kb/(2\pi)}\in \Omega_B$. This yields,
\begin{equation}
 \left(1 + 2\pi \nu R_{b0} + {\mathcal O}(\nu^2)\right) C = \nu B
  \,, \label{gm:cb}
\end{equation}
where $R_{b0}=R_{b0}(\kb)$ is the regular part of the Bloch Green's
function $G_{b0}$ defined on $\Omega$ by (\ref{green:b0}).

As in \S \ref{schnak:stab}, we now proceed to determine the portion of
the continuous spectrum of the linearization that lies within an
${\mathcal O}(\nu)$ neighborhood of the origin $\lambda=0$ when $D$ is
close to a certain critical value. To do so, we first must calculate
an asymptotic expansion for the solution to (\ref{gm:eprob}) together
with (\ref{gm:cb}).

By using (\ref{gcore:exp_1}) we first calculate the coefficients in the
differential operator in (\ref{gm:eprob}) as
\begin{equation*}
  \frac{V}{U} = w + \frac{\nu S_0}{\chi_0} \left(V_{1p}-wU_{1p}\right)+
\cdots \,, \qquad \frac{V^2}{U^2} = w^2 + \frac{2\nu S_0}{\chi_0} w
  \left(V_{1p}-w U_{1p}\right) + \cdots \,,
\end{equation*}
so that the local problem (\ref{gm:eprob}) on $0<\rho<\infty$ becomes
\begin{equation}\label{gm:eprob_1}
\begin{split}
  \Delta_\rho \Phi & - \Phi + \left[ 2 w + \frac{2\nu S_0}{\chi_0} w \left(V_{1p}-
w U_{1p} \right) + \cdots \right] \Phi = \left[ w^2 + \frac{2\nu S_0}{\chi_0}
   w \left(V_{1p}-wU_{1p}\right)+\cdots \right]N +  \lambda \Phi \,,  \\
  \Delta_\rho N &  = -2\nu \left[ \chi_0 w + \nu \left(\chi_1 w + S_0 V_{1p}
  \right) +\cdots  \right] \Phi \,, \\
& \Phi\to 0 \,, \qquad   N \sim -C \log \rho + B \,, \quad \mbox{as} \quad
   \rho\to \infty \,; \qquad \Phi^{\prime}(0)=N^{\prime}(0)=0 \,.
\end{split}
\end{equation}
To analyze (\ref{gm:eprob_1}) together with (\ref{gm:cb}), we substitute
the appropriate expansions
\begin{equation} \label{gm:exp}
\begin{split}
  N &= \frac{1}{\nu} \left(\hat{N}_0+\nu \hat{N}_1 + \cdots\right)\,,
 \quad B = \frac{1}{\nu} \left(\hat{B}_0+\nu \hat{B}_1 + \cdots\right)\,,
  \quad C=C_0 + \nu C_1 + \cdots \,, \qquad \\
  \Phi &=\frac{1}{\nu}\left(\Phi_0 + \nu \Phi_1 +  \cdots\right) \,, \qquad
  \lambda=\lambda_0 + \nu \lambda_1 + \cdots \,,
\end{split}
\end{equation}
into (\ref{gm:eprob_1}) and collect powers of $\nu$.

To leading order, we obtain on $0<\rho<\infty$ that
\begin{equation}
\label{gm:zero}
\begin{split}
  L_0 \Phi_0 &\equiv  \Delta_\rho \Phi_0 - \Phi_0 + 2w \Phi_0 =
  w^2 \hat{N}_0 + \lambda_0 \Phi_0 \,, \qquad
  \Delta_\rho \hat{N}_0=0 \,, \\
 \Phi_0 &\to 0 \,, \qquad  \hat{N}_0\to \hat{B}_0 \quad \mbox{as} \quad
  \rho\to\infty \,; \qquad \Phi_0^{\prime}(0)=\hat{N}_0^{\prime}(0)=0 \,,
\end{split}
\end{equation}
where $L_0$ is the local operator. We conclude that
$\hat{N}_0=\hat{B}_0$ for $\rho\geq 0$.

At next order, we obtain on $\rho>0$ that $\Phi_1$ satisfies
\begin{equation}
  L_0\Phi_1 - w^2 \hat{N}_1 = -\frac{2S_0}{\chi_0} \left( V_{1p}-wU_{1p}\right)
 \Phi_0 + \frac{2S_0 }{\chi_0} w \left(V_{1p}-wU_{1p}\right)\hat{N}_0 + \lambda_1
 \Phi_0 \,;
\qquad \Phi_1\to 0 \,, \quad
\mbox{as} \quad \rho\to \infty \,, \label{gm:phi1}
\end{equation}
with $\Phi_1^{\prime}(0)=0$, and that $\hat{N}_1$ on $\rho>0$ satisfies
\begin{equation}
   \Delta_\rho \hat{N}_1 = -2\chi_0 w\Phi_0 \,; \qquad \hat{N}_1 \sim
 - C_0 \log\rho + \hat{B}_1 \,, \quad \mbox{as} \quad \rho\to \infty
   \,; \qquad \hat{N}_1^{\prime}(0)=0 \,. \label{gm:n1}
\end{equation}
At one higher order, the problem for $\hat{N}_2$ on $\rho>0$ is
\begin{equation}
\label{gm:n2}
  \Delta_\rho \hat{N}_2 = -2\chi_0 w \Phi_1  -2 \left(\chi_1 w + S_0
  V_{1p} \right) \Phi_0 \,; \qquad \hat{N}_2 \sim - C_1\log\rho + \hat{B}_2
    \,, \quad \mbox{as} \quad \rho\to \infty\,; \qquad
  \hat{N}_2^{\prime}(0)=0 \,.
\end{equation}
In addition, by substituting (\ref{gm:exp}) into (\ref{gm:cb}) we obtain,
upon collecting powers of $\nu$, that
\begin{equation}
  C_0 = \hat{B}_0 \,, \qquad C_1+2\pi R_{b0} \hat{B}_0 = \hat{B}_1 \,.
  \label{gm:cbexp}
\end{equation}

Next, we proceed to analyze (\ref{gm:zero})--(\ref{gm:n2}).  From the
divergence theorem, we obtain from (\ref{gm:n1}) that
\begin{equation}
   C_0 = 2\chi_0 \int_{0}^{\infty} w\Phi_0 \rho \, d\rho \,.
  \label{gm:c0b0}
\end{equation}
To identify $\chi_0$ in (\ref{gm:c0b0}), we substitute $S=\nu^2 S_0 +
\cdots$ and $\chi\sim \nu \chi_0 +\cdots$ into (\ref{gm:seq_a}) to get
$\left(1+\mu + \cdots\right)\left(\nu^2 S_0+\cdots\right)\sim \nu^2
(\chi_0+\cdots)$. From the leading order terms, we get
$\chi_0=S_0(1+\mu)$. Then, since $S_0=b\chi_0^2$ from
(\ref{gcore:exp_4}), we obtain
\begin{equation}
     \chi_0=\frac{1}{b(1+\mu)} \,, \qquad S_0 = \frac{1}{b(1+\mu)^2}
     \,, \qquad C_0 = \hat{B}_0 = \hat{N}_0 = \frac{2}{b(1+\mu)}
     \int_{0}^{\infty} w \Phi_0 \rho \, d\rho \,.
  \label{gm:cs_0}
\end{equation}
From (\ref{gm:zero}) we then obtain the leading-order NLEP on $\rho>0$,
\begin{equation}
   L_0 \Phi_0 - \frac{2 w^2 }{(1 + \mu)} \frac{ \int_{0}^{\infty}
     w\Phi_0 \, \rho \, d\rho}{\int_{0}^{\infty} w^2 \rho \, d\rho} =
   \lambda_0 \Phi_0\,; \qquad \Phi_0 \to 0 \,, \quad \mbox{as} \quad
   \rho\to \infty \,; \qquad \Phi_0^{\prime}(0)=0\,; \qquad \mu \equiv
   \frac{2\pi D_0}{|\Omega|}\,. \label{gm:phi0}
\end{equation}

For this NLEP, Theorem 3.7 of \cite{wsurv} proves that
$\mbox{Re}(\lambda_0)<0$ if and only if ${2/(1+\mu)}>1$.  Therefore,
the stability threshold where $\lambda_0=0$ and $\Phi_0=w$ occurs when
$\mu=1$. At this stability threshold, we calculate from (\ref{gm:cs_0})
that
\begin{equation}
   \chi_0=\frac{1}{2b} \,, \qquad S_0=\frac{1}{4b} \,, \qquad \Phi_0 = w \,,
 \qquad C_0=\hat{B}_0=\hat{N}_0=\frac{1}{b}\int_0^{\infty} w\Phi_0\rho \, d\rho
  =1 \,. \label{gm:marg}
\end{equation}

Upon substituting (\ref{gm:marg}) into (\ref{gm:n1})
we obtain at $\lambda_0=0$ that $\hat{N}_1$ on $\rho>0$ satisfies
\begin{equation}
   \Delta_\rho \hat{N}_1 = -2\chi_0 w^2 =-\frac{w^2}{b}\,; \qquad
   \hat{N}_1 \sim - \log\rho + \hat{B}_1 \,, \quad \mbox{as} \quad \rho\to
\infty \,; \qquad \hat{N}_1^{\prime}(0)=0 \,.\label{gm:n1n}
\end{equation}
Upon comparing (\ref{gm:n1n}) with the problem for $U_{1p}$ in
(\ref{gcore:exp_3}), we conclude that
\begin{equation}
   \hat{N}_1= U_{1p} + \hat{B}_1 \,. \label{gm:n1_solv}
\end{equation}

As in \S \ref{schnak:stab}, we now proceed to analyze the effect of
the higher order terms by determining the continuous band of spectrum
that is contained within an ${\mathcal O}(\nu)$ ball near $\lambda=0$
when the bifurcation parameter $\mu$ is ${\mathcal O}(\nu)$ close to
the leading-order critical value $\mu=1$. As such, we set
\begin{equation}
   \lambda = \nu \lambda_1 + \cdots \,, \qquad \mbox{for} \qquad
  \mu=1+\nu \mu_1 + \cdots \,,
\end{equation}
and we derive an expression for $\lambda_1$ in terms of the de-tuning
parameter $\mu_1$, the Bloch wavevector $\kb$, the lattice structure,
and certain correction terms to the core problem.

We first use (\ref{gcore:exp_4}) and (\ref{gm:seq_a}) to calculate
$\chi_1$ in terms of $\mu_1$. By substituting $\mu=1+\nu \mu_1+\cdots$
together with (\ref{gcore:exp_1}) into (\ref{gm:seq_a}), we obtain
\begin{equation*}
  \left[ 1 + \left(1+\nu \mu_1\right) + 2\pi \nu R_{0p} +\cdots \right]
  \left[\nu^2 S_0 + \nu^3 S_1 + \cdots \right] = \nu^2 \left(
\chi_0 + \nu \chi_1 + \cdots\right) \,.
\end{equation*}
From the ${\mathcal O}(\nu^3)$ terms, we obtain that $\chi_1=\mu_1S_0
+ 2 S_1 + 2\pi R_{0p} S_0$. Upon combining this result together with
(\ref{gcore:exp_4}) for $\chi_1$, and by using $\chi_0={1/(2b)}$, we
obtain at criticality where $\lam_0=0$ that
\begin{equation}
   \chi_1 = -\frac{\mu_1}{4b} - \frac{\pi R_{0p}}{2b} - \frac{1}{2b^2}
  \int_{0}^{\infty} w V_{1p} \rho \, d\rho \,. \label{gm:chi_to_mu}
\end{equation}
This result is needed below in the evaluation of the solvability condition.

Next, we substitute (\ref{gm:marg}) and (\ref{gm:n1_solv}) into
(\ref{gm:phi1}) for $\Phi_1$ to obtain, after
some algebra, that (\ref{gm:phi1}) reduces at the leading-order
stability threshold $\lambda_0=0$ to
\begin{equation}
  L_0 \Phi_1 -w^2 \hat{B}_1 = \lambda_1 w + w^2 U_{1p} \,; \qquad
 \Phi_1\to 0 \,, \quad \mbox{as} \quad \rho\to \infty \,, \label{gm:phi1n}
\end{equation}
with $\Phi_1^{\prime}(0)=0$. In a similar way, at the leading-order
stability threshold $\lambda_0=0$, the problem (\ref{gm:n2}) for
$\hat{N}_2$ on $\rho>0$ reduces to
\begin{equation}
\label{gm:n2n}
  \Delta_\rho \hat{N}_2 = -\frac{w}{b} \Phi_1 - 2 \left(\chi_1 w + \frac{1}{4b}
   V_{1p} \right) w \,; \qquad
   \hat{N}_2 \sim -C_1\log\rho + \hat{B}_2 \,, \quad \mbox{as} \quad
   \rho\to \infty\,; \qquad \hat{N}_2^{\prime}(0)=0 \,.
\end{equation}
By applying the divergence theorem to (\ref{gm:n2n}) we get
\begin{equation}
  C_1 = \frac{1}{b} \int_{0}^{\infty} w\Phi_1 \rho \, d\rho + 2\chi_1 b +
  \frac{1}{2b} \int_{0}^{\infty} w V_{1p} \rho \, d\rho \,. \label{gm:c1}
\end{equation}
Then, by using (\ref{gm:cbexp}) with $\hat{B}_0=1$ to relate $C_1$ to
$\hat{B}_1$, we determine $\hat{B}_1$ as $\hat{B}_1=C_1+2\pi R_{b0}$ where
$C_1$ is given in (\ref{gm:c1}). With $\hat{B}_1$ obtained in this
way, we find from (\ref{gm:phi1n}) that
$\Phi_1$ satisfies
\bsub \label{gm:phi1_fin}
\begin{equation}
 {\cal L}\Phi_1 \equiv L_0 \Phi_1 - w^2 \frac{ \int_0^\infty w\Phi_1
   \rho \, d\rho}{ \int_0^\infty w^2\rho \, d\rho} = {\cal R}_g +
 \lambda_1 w \,; \qquad \Phi_1\to 0 \,, \quad \mbox{as} \quad
  \rho \to \infty \,, \label{gm:phi1_fin_1}
\end{equation}
with $\Phi_1^{\prime}(0)=0$, where the residual ${\cal R}_g$ is defined by
\begin{equation}
  {\cal R}_g\equiv 2 \pi R_{b0} w^2 + 2\chi_1 b w^2 +
  \frac{1}{2b} w^2 \int_0^{\infty} w V_{1p} \rho \, d\rho +  w^2 U_{1p} \,.
 \label{gm:phi1_fin_2}
\end{equation}
\esub

As discussed in \S \ref{schnak:stab}, the solvability condition for
(\ref{gm:phi1_fin}) is that the right-hand side of
(\ref{gm:phi1_fin_1}) is orthogonal to the homogeneous adjoint
solution $\Psi^{\star}=w+{\rho w^{\prime}/2}$ in the sense that
$\lambda_1 \int_{0}^{\infty} w \Psi^{\star}\rho \, d\rho +
\int_{0}^{\infty} {\cal R}_g \Psi^{*} \rho \, d\rho =0$. Upon using
(\ref{gm:chi_to_mu}), which relates $\chi_1$ to $\mu_1$, to simplify
this solvability condition, we readily obtain by using
(\ref{gm:phi1_fin_2}) for ${\cal R}_g$ that
\begin{equation}
  \lambda_1 = -\frac{ \int_{0}^{\infty} w^2 \Psi^{\star} \rho \, d\rho}
  { \int_{0}^{\infty} w \Psi^{\star} \rho \, d\rho}\left( 2 \pi R_{b0}
   - \frac{\mu_1}{2}  -\pi R_{0p} - \frac{1}{2b} \int_{0}^{\infty}
   w V_{1p} \rho \, d\rho \right) -
  \frac{ \int_{0}^{\infty} w^2 U_{1p} \Psi^{\star} \rho \, d\rho}
  { \int_{0}^{\infty} w \Psi^{\star} \rho \, d\rho}  \,.
\label{gm:lam1_1}
\end{equation}

To simplify the terms in (\ref{gm:lam1_1}), we use
$L_0 V_{1p}=w^2 U_{1p}$ and $\Delta_\rho U_{1p}=-{w^2/b}$ from
(\ref{gcore:exp_3}), together with $w=L_0^{-1}\Psi^{\star}$ to
calculate, after an integration by parts, that
\begin{equation*}
  \int_0^{\infty} w^2 U_{1p} \Psi^{\star} \rho \, d\rho =
 \int_0^{\infty} \left( L_0 V_{1p}\right)\left(L_0^{-1} w\right)\rho \, d\rho=
  \int_{0}^{\infty} V_{1p} w \rho \, d\rho \,.
\end{equation*}
By substituting this expression, together with $\int_0^{\infty}
w^2\Psi^{\star} \rho \, d\rho=b$ and $\int_0^{\infty}
w\Psi^{\star}\rho \, d\rho={b/2}$, as obtained from (\ref{st:integ}),
into (\ref{gm:lam1_1}) we obtain our final result for $\lam_1$. We
summarize our result as follows:

\begin{result} \label{pr 4.2} In the limit
$\eps\to 0$, consider a steady-state periodic pattern of spots for the
  GM model (\ref{1:gm}) where $D={\mathcal O}(\nu^{-1})$ with
  $\nu={-1/\log\eps}$.  Then, when
\bsub \label{gm:lam1_fin}
\begin{equation}
  D \sim \frac{|\Omega|}{2\pi \nu} \left( 1 + \nu \mu_1 \right) \,, \qquad
  \label{gm:lam1_fin1}
\end{equation}
where $\mu_1={\mathcal O}(1)$ and $|\Omega|$ is the area of the Wigner-Seitz
cell, the portion of the continuous spectrum of
the linearization that lies within an ${\mathcal O}(\nu)$ neighborhood
of the origin $\lambda=0$ is given by
\begin{equation}
  \lambda=\nu \lambda_1 + \cdots \,, \qquad \lambda_1= 2 \left[
    \frac{\mu_1}{2} - 2 \pi R_{b0} + \pi R_{0p} - \frac{1}{2b}
    \int_{0}^{\infty} \rho w V_{1p} \, d\rho \right]
  \,. \label{gm:lam1_fin2}
\end{equation}
\esub
Here $R_{b0}=R_{b0}(\kb)$ is the regular part of the Bloch
Green's function $G_{b0}$ defined on $\Omega$ by (\ref{green:b0}),
${\kb/(2\pi)}\in \Omega_B$, and $R_{0p}$ is the regular part of the
periodic source-neutral Green's function $G_{0p}$ satisfying
(\ref{gr:source_neut}).
\end{result}

\begin{remark}\label{remark 2} In comparison with the
  analogous result obtained in Principal Result \ref{pr 3.2} for the
  Schnakenburg model, $\lambda_1$ in (\ref{gm:lam1_fin2}) now depends
  on the regular parts of two different Green's functions. The term
  $R_{0p}$ only depends on the geometry of the lattice, whereas
  $R_{b0}=R_{b0}(\kb)$ depends on both the lattice geometry and the
  Bloch wavevector $\kb$. To calculate $R_{b0}(\kb)$ we again need only
  consider vectors ${\kb/(2\pi)}$ in the first Brillouin zone $\Omega_B$
  of the reciprocal lattice. Since $R_{b0}$ is real-valued from Lemma
  \ref{lemma 2.1}, the band of spectrum (\ref{gm:lam1_fin2}) lies on
  the real axis in the $\lambda$-plane. Moreover, from Lemma
  \ref{lemma 2.2} small values of $|\kb|$ generate spectra that lie at
  an ${\mathcal O}\left({\nu/\kb^T {\cal Q}\kb}\right)$ distance from the
  origin along the negative real axis in the $\lambda$-plane, where
  ${\cal Q}$ is a positive definite matrix.
\end{remark}

For a given lattice geometry, we seek to determine $\mu_1$ so that
$\lambda_1<0$ for all $\kb$. From (\ref{gm:lam1_fin2}), we
conclude that a periodic arrangement of spots with a given lattice
structure is linearly stable when
\begin{equation}
   \mu_1 < 4\pi R_{b0}^{\star} - 2\pi R_{0p} + \frac{1}{b}
 \int_0^\infty w V_{1p}\rho \, d\rho
 \,, \qquad R_{b0}^{\star} \equiv \min_{\kb} R_{b0}(\kb) \,.
  \label{gm:mu1_opt}
\end{equation}
We characterize the optimal lattice as the one with a fixed
area $|\Omega|$ of the Wigner-Seitz cell that allows for stability for
the largest inhibitor diffusivity $D$. This leads to our second main result.

\begin{result} \label{pr 4.3} The optimal
  lattice arrangement for a periodic pattern of spots for the
  GM model (\ref{1:gm}) is the one for which the objective function
 ${\cal K}_{\textrm{gm}}$ is maximized, where
\begin{equation}
   {\cal K}_{\textrm{gm}} \equiv 4\pi R_{b0}^{\star} - 2\pi R_{0p}
 \,, \qquad R_{b0}^{\star} \equiv \min_{\kb} R_{b0}(\kb) \,.
  \label{gm:kap}
\end{equation}
For $\nu={-1/\log\eps} \ll 1$, a two-term asymptotic expansion for
 this maximal stability threshold for $D$ is given explicitly by
\begin{equation}
    D_{\textrm{optim}}\sim \frac{|\Omega|}{2\pi\nu} \left[ 1+
  \nu \left( \max_{\Lambda} {\cal K}_{\textrm{gm}} +
  \frac{1}{b} \int_{0}^{\infty} w V_{1p}\rho \, d\rho \right)\right]\,,
  \label{gm:doptim}
\end{equation}
 where $\max_{\Lambda}{\cal K}_{\textrm{gm}}$ is taken over all
  lattices $\Omega$ having a common area $|\Omega|$ of the
  Wigner-Seitz cell. In (\ref{gm:doptim}), $V_{1p}$ is the solution to
  (\ref{gcore:exp_3}) and $b=\int_{0}^{\infty} w^2\rho \, d\rho\approx
  4.93$ where $w(\rho)>0$ is the ground-state solution of $\Delta_\rho
  w - w + w^2=0$. Numerical computations yield $\int_{0}^{\infty} w
  V_{1p}\rho \, d\rho \approx -0.945$.
\end{result}

The numerical method to compute ${\cal K}_{\textrm{gm}}$ is given in
\S \ref{sec:ewald}. In \S \ref{sec:ewald_0}, we show numerically that
within the class of oblique Bravais lattices, the maximal stability
threshold for $D$ occurs for a regular hexagonal lattice.

\setcounter{equation}{0}
\setcounter{section}{4}
\section{A Simple Approach for Calculating the Optimal Value of the Diffusivity}\label{simp}

 In this section we implement a very simple alternative approach to
 calculate the stability threshold for the Schnakenburg (\ref{1:sc})
 and GM Models (\ref{1:gm}) in \S \ref{simp:schnak} and \S
 \ref{simp:gm}, respectively. In \S \ref{simp:gs} this method is then
 used to determine an optimal stability threshold for the GS
 model. In this alternative approach, we do not calculate the entire
 band of continuous spectrum that lies near the origin when the
 bifurcation parameter $\mu$ is ${\mathcal O}(\nu)$ close to its
 critical value. Instead, we determine the critical value of $\mu$,
 depending on the Bloch wavevector $\kb$, such that $\lambda=0$ is in
 the spectrum of the linearization. We then perform a min-max
 optimization of this critical value of $\mu$ with respect to $\kb$
 and the lattice geometry $\Lambda$ in order to find the optimal value
 of $D$.

\subsection{The Schnakenburg Model}\label{simp:schnak}

 This alternative approach for calculating the stability threshold requires
 the following two-term expansion for $\chi(S)$ in terms of $S$ as $S\to 0$:

\begin{lemma} \label{lemma 5.1} For
$S\to 0$, the asymptotic solution to the core problem (\ref{eq1:core}) is
\begin{equation}
\label{simp:exp}
 \begin{split}
  V &\sim \frac{S}{b} w + \frac{S^3}{b^3}\left(-\hat{\chi}_1 b w +
  V_{1p}\right) + \cdots \,, \qquad U \sim \frac{b}{S} + S \left(
  \hat{\chi}_1 + \frac{U_{1p}}{b}\right)+\cdots\,, \\
  \chi &\sim
  \frac{b}{S} + S \hat{\chi}_1 +\cdots \,; \qquad \hat{\chi}_1 \equiv
  \frac{1}{b^2} \int_{0}^{\infty}  V_{1p}\rho \, d\rho \,.
\end{split}
\end{equation}
 Here $w(\rho)$ is the unique positive ground-state solution to
  $\Delta_\rho w-w+w^2=0$ and $b\equiv \int_{0}^{\infty} w^2\rho
  d\rho$.  In terms of $w(\rho)$, the functions $U_{1p}$ and $V_{1p}$
  are the unique solutions on $0\leq\rho<\infty$ to
  (\ref{score:exp_3}).
\end{lemma}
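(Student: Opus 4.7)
The cleanest approach is to leverage Lemma 3.1, which already supplies the two-term expansion of $U$, $V$, $\chi$ in powers of $\nu$ when $S = S_0\nu^{1/2}+S_1\nu^{3/2}+\cdots$. Lemma 5.1 is just a re-parameterization: treating $S$ itself as the small parameter and using the inverse function theorem to write $\nu^{1/2}$ as a power series in $S$, one gets $\nu^{1/2} = S/S_0 + \mathcal{O}(S^3)$ (the $S_1$ correction enters only at $\mathcal{O}(S^3)$). Substituting back into the Lemma 3.1 expansions and collecting powers of $S$, the leading term of $\chi$ becomes $\chi_0\nu^{-1/2}=(b/S_0)(S_0/S)=b/S$, and a short calculation shows that all $S_1$-dependence in $\chi_1\nu^{1/2}$ cancels, leaving only $\hat{\chi}_1 S = S(1/b^2)\int_0^\infty V_{1p}\rho\, d\rho$. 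The analogous rewrites for $U$ and $V$ give (\ref{simp:exp}).

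Should one prefer a self-contained derivation, I would write $V = (S/b)w + S^3\tilde V + \cdots$ and $U = b/S + S U_{*} + \cdots$, a scaling forced by $S=\int UV^2 \rho\, d\rho$ together with the far-field $U\sim S\log\rho + \chi(S)$. At order $S$ the $V$-equation is $\Delta_\rho w - w + w^2 = 0$, which fixes $w$. The $U$-equation at order $S$ is $\Delta_\rho U_{*} = w^2/b$, whose radial solution with the required logarithmic growth $b\log\rho$ at infinity is $U_{*} = U_{1p}/b + \hat{\chi}_1$, with $\hat{\chi}_1$ an as-yet undetermined integration constant; matching the far-field immediately yields $\chi(S)\sim b/S + S\hat{\chi}_1$. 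At order $S^3$ the $V$-equation reads $L_0\tilde V = -(1/b^2)w^2 U_{*}$; using $L_0 w = w^2$ and $L_0 V_{1p} = -w^2 U_{1p}$ from (\ref{score:exp_3}), the decaying solution is $\tilde V = -(\hat{\chi}_1/b^2)w + V_{1p}/b^3$.

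The one step that requires real care is pinning down $\hat{\chi}_1$: the expansion is formally consistent for any value of $\hat{\chi}_1$, and the determining constraint only appears at order $S^3$ in the $U$-equation. There the forcing $2w\tilde V + (1/b^2)U_{*}w^2$ must integrate to zero against $\rho\, d\rho$, otherwise the next correction $U_4$ would pick up an unwanted $\log\rho$ term in conflict with the prescribed far-field $U\sim S\log\rho + \chi(S)$. Enforcing this solvability condition and substituting $\tilde V$ and $U_{*}$ yields $\hat{\chi}_1$ as a combination of $\int w V_{1p}\rho\, d\rho$ and $\int w^2 U_{1p}\rho\, d\rho$. The final simplification to $\hat{\chi}_1 = (1/b^2)\int_0^\infty V_{1p}\rho\, d\rho$ follows from the integral identity
\[
\int_0^\infty V_{1p}\,\rho\, d\rho \;=\; 2\int_0^\infty w V_{1p}\,\rho\, d\rho + \int_0^\infty w^2 U_{1p}\,\rho\, d\rho\,,
\]
which I would obtain by integrating $L_0 V_{1p} = -w^2 U_{1p}$ against $\rho\, d\rho$ and discarding the boundary term $\lim_{R\to\infty}R V_{1p}'(R) = 0$; the latter is justified because $w^2 U_{1p}$ decays exponentially (the exponential decay of $w^2$ beats the logarithmic growth of $U_{1p}$), which forces $V_{1p}$ and $V_{1p}'$ to do likewise.
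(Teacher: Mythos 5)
Your first route is exactly the paper's own proof: the paper obtains Lemma \ref{lemma 5.1} at the end of Appendix \ref{app:schnak} by setting $S_1=0$ and $S=S_0\nu^{1/2}$ in Lemma \ref{lemma 3.1} and using $S_0\chi_0=b$ and $\chi_1=S_0b^{-2}\int_0^\infty V_{1p}\rho\,d\rho$ from (\ref{score:exp_4}); your extra check that the $S_1$-dependence cancels is just a verification that the choice $S_1=0$ is harmless. Your self-contained second derivation is correct as well, but it is not genuinely different — it reproduces the Appendix \ref{app:schnak} hierarchy directly in powers of $S$, with the no-$\log\rho$ (divergence-theorem) condition playing the role of the $S_1$-equation there and the same identity obtained from integrating $L_0V_{1p}=-w^2U_{1p}$ against $\rho\,d\rho$ that the paper uses to reach (\ref{score:exp_4}).
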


The derivation of this result, as outlined at the end of Appendix
\ref{app:schnak}, is readily obtained by setting $S_1=0$ and
$S=S_0\nu^{1/2}$ in the results of Lemma \ref{lemma 3.1}.

The key step in the analysis is to note that at $\lambda=0$, the
solution to the inner problem (\ref{st:eprob}) for $\Phi$ and $N$ can
be readily identified by differentiating the core problem
(\ref{eq1:core}) with respect to $S$.  More specifically, at
$\lambda=0$, the solution to (\ref{st:eprob}) is $\Phi=C V_S$, $N=C
U_{S}$, and $B(S;0)=C \chi^{\prime}(S)$. With $B$ known at
$\lambda=0$, we obtain from (\ref{st:cb}) and (\ref{eq1:sval}) that
the critical value of $D$ at $\lambda=0$ satisfies the nonlinear
algebraic problem
\begin{equation}
  1 + 2\pi \nu R_{b0} + {\mathcal O}(\nu^2)  + \nu \chi^{\prime}(S) =0 \,,
 \qquad \mbox{where} \quad  S = \frac{a|\Omega|}{2\pi\sqrt{D}} \,.
 \label{simp:sc_crit}
\end{equation}

To determine the critical threshold in $D$ from (\ref{simp:sc_crit}) we
use the two-term expansion for $\chi(S)$ in (\ref{simp:exp}) to get
$\chi^{\prime}(S)\sim -{b/S^2} + \hat{\chi}_1+\cdots$. By using the
relation for $S$ in terms of $D$ from (\ref{simp:sc_crit}) when
$D={D_0/\nu}\gg 1$, we obtain that
\begin{equation}
  \chi^{\prime}(S)\sim -\frac{\mu}{\nu} + \hat{\chi}_1 + \cdots  \,, \qquad
   \mu \equiv \frac{4\pi^2 D_0 b}{a^2 |\Omega|^2} \,, \qquad
  D=\frac{D_0}{\nu} \,.  \label{simp:sc_crit_1}
\end{equation}
Upon substituting this expression into (\ref{simp:sc_crit}), we obtain that
\begin{equation*}
   1-\mu + \nu \hat{\chi}_1 = - 2\pi \nu R_{b0} + {\mathcal O}(\nu^2) \,,
\end{equation*}
which determines $\mu$ as $\mu\sim 1+\nu(2\pi
R_{b0}+\hat{\chi}_1)$. Upon recalling the definition of $\mu$ in
(\ref{simp:sc_crit_1}), we conclude that $\lambda=0$ when
$D=D^{\star}(\kb)$, where $D^{\star}(\kb)$ is given by
\begin{equation}
   D^{\star}(\kb) \equiv \frac{a^2 |\Omega|^2}{4\pi^2 b \nu} \left[
      1 + \nu \left(2\pi R_{b0}(\kb) + \hat{\chi}_1 \right) +
   {\mathcal O}(\nu^2) \right]\,, \label{simp_sc:optim}
\end{equation}
where $\hat{\chi}_1$ is defined in (\ref{simp:exp}). By minimizing
$R_{b0}(\kb)$ with respect to $\kb$, and then maximizing the result
with respect to the geometry of the lattice $\Lambda$,
(\ref{simp_sc:optim}) recovers the main result (\ref{st:doptim}) of
Principal Result \ref{pr 3.3}. This simple method, which relies
critically on the observation that $B=\chi^{\prime}(S)$ at
$\lambda=0$, provides a rather expedient approach for calculating the
optimal threshold in $D$. However, it does not characterize the
spectrum contained in the small ball $|\lambda|={\mathcal O}(\nu)\ll
1$ near the origin when $D$ is near the leading-order stability
threshold ${a^2 |\Omega|^2/(4\pi^2 b \nu)}$.

\subsection{The Gierer-Meinhardt Model}\label{simp:gm}

Next, we use a similar approach as in \S \ref{simp:schnak} to
re-derive the the stability result in (\ref{gm:doptim}) of Principal
Result \ref{pr 4.3} for the GM model. We first need the following
result that gives a two-term expansion in terms of $S$ for $\chi(S)$
as $S\to 0$:

\begin{lemma} \label{lemma 5.2} For
$S\to 0$, the asymptotic solution to the core problem (\ref{geq1:core}) is
\begin{equation}
\label{gsimp:exp}
 \begin{split}
  V &\sim \sqrt{\frac{S}{b}} w + S \left(\hat{\chi}_1 w + V_{1p}\right) + \cdots
\,, \qquad
U \sim \sqrt{\frac{S}{b}} + S \left(\hat{\chi}_1  + U_{1p}\right) + \cdots\,,\\
 \chi &\sim \sqrt{\frac{S}{b}} + S \hat{\chi}_1 + \cdots \,, \qquad
  \hat{\chi}_1 \equiv -\frac{1}{b} \int_{0}^{\infty} w V_{1p}\rho \, d\rho \,.
\end{split}
\end{equation}
Here $w(\rho)$ is the unique positive ground-state solution to
  $\Delta_\rho w-w+w^2=0$ and $b\equiv \int_{0}^{\infty} w^2\rho
  d\rho$.  In terms of $w(\rho)$, the functions $U_{1p}$ and $V_{1p}$
  are the unique solutions on $0\leq\rho<\infty$ to
  (\ref{gcore:exp_3}).
\end{lemma}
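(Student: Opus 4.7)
The plan is to derive Lemma \ref{lemma 5.2} as a direct corollary of Lemma \ref{lemma 4.1} by specializing the $\nu$-expansion there to $S_1=0$. The core problem (\ref{geq1:core}) is itself independent of $\nu$; the parameter $\nu$ in Lemma \ref{lemma 4.1} merely parametrizes the small-$S$ regime through the ansatz $S=S_0\nu^2+S_1\nu^3+\cdots$. Setting $S_1=0$ gives the one-parameter family $S=S_0\nu^2$, so that $\nu\chi_0=\nu\sqrt{S_0/b}=\sqrt{S/b}$. This identifies the leading-order term $\sqrt{S/b}$ in (\ref{gsimp:exp}) with $\nu V_0/w=\nu\chi_0$ (and analogously for $U$ and $\chi$).

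Next, I would substitute $S_1=0$ into the relation (\ref{gcore:exp_4}) for $\chi_1$ to obtain
$$\chi_1 = -\frac{S_0}{b}\int_0^\infty w V_{1p}\,\rho\,d\rho,$$
and then observe that $\nu^2\chi_1 = -(S/b)\int_0^\infty w V_{1p}\,\rho\,d\rho = S\hat{\chi}_1$ with $\hat{\chi}_1$ as defined in (\ref{gsimp:exp}). Assembling the expansions (\ref{gcore:exp_1}) and (\ref{gcore:exp_2}) from Lemma \ref{lemma 4.1} and using $\nu^2 S_0=S$ gives
$$V=\nu\chi_0 w+\nu^2(\chi_1 w+S_0 V_{1p})+\cdots = \sqrt{S/b}\,w + S\bigl(\hat{\chi}_1 w+V_{1p}\bigr)+\cdots,$$
and analogously $U=\sqrt{S/b}+S(\hat{\chi}_1+U_{1p})+\cdots$ and $\chi=\sqrt{S/b}+S\hat{\chi}_1+\cdots$, which is precisely (\ref{gsimp:exp}).

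The only point requiring verification is that this re-grouping captures all contributions through order $S$: I would confirm that every higher-order term $\nu^k V_{k-1}$, $\nu^k U_{k-1}$ (with $k\ge 3$) in the expansion of Lemma \ref{lemma 4.1} contributes at order $\nu^k = S^{k/2} = o(S)$ as $S\to 0$, so no order-$S^{1/2}$ or order-$S$ correction is missed. Since Lemma \ref{lemma 4.1} is rigorously derived in Appendix \ref{app:gm} (and in particular the functions $U_{1p}$, $V_{1p}$ are the same as those appearing in (\ref{gsimp:exp}), defined by (\ref{gcore:exp_3})), no further asymptotic matching is needed. The main obstacle, such as it is, is purely bookkeeping — confirming that the $\nu$-series truly reorganizes into a small-$S$ series with no ambiguity in the half-integer powers $S^{k/2}$.
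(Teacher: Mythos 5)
Your proposal is correct and follows essentially the same route as the paper: the paper's own derivation (end of Appendix B) obtains Lemma 5.2 exactly by setting $S_1=0$ and $S=S_0\nu^2$ in Lemma 4.1, using $\chi_0=\sqrt{S_0/b}$ and $\chi_1=-S_0 b^{-1}\int_0^\infty wV_{1p}\rho\,d\rho$ to rewrite the $\nu$-series as a series in powers of $S^{1/2}$. Your additional bookkeeping check that the neglected terms are $o(S)$ is consistent with, and slightly more explicit than, what the paper states.
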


The derivation of this result, as outlined at the end of Appendix
\ref{app:gm}, is readily obtained by setting $S_1=0$ and
$S=S_0\nu^{2}$ in the results of Lemma \ref{lemma 4.1}.

As similar to the analysis in \S \ref{simp:schnak}, the solution to
(\ref{gm:eprob}) for $\Phi$ and $N$ is readily
identified by differentiating the core problem (\ref{geq1:core}) with
respect to $S$. In this way, we get $B(S,0)=C \chi^{\prime}(S)$. Therefore, at
$\lambda=0$, we obtain from (\ref{gm:cb}) and (\ref{gm:seq_a}) that
the critical values of $D$ and $S$ where $\lambda=0$ satisfy the
coupled nonlinear algebraic system
\begin{equation} \label{gsimp:nonlin}
\begin{split}
  \left( 1 + \mu + 2\pi \nu R_{0p} + {\mathcal O}(\nu^2)\right) S &=
  \nu \chi(S) \,, \qquad  \mu \equiv \frac{2\pi D_0}{|\Omega|} \,,
  \qquad D=\frac{D_0}{\nu} \,, \\
1 + 2\pi \nu R_{b0} + {\mathcal O}(\nu^2) -\nu \chi^{\prime}(S) &=0\,.
\end{split}
\end{equation}
We then use the two term expansion in (\ref{gsimp:exp}) for $\chi(S)$
as $S\to 0$ to find an approximate solution to (\ref{gsimp:nonlin}).

In contrast to the related analysis for the Schnakenburg model in \S
\ref{simp:schnak}, this calculation is slightly more involved since
$S$ must first be calculated from a nonlinear algebraic equation. By
substituting (\ref{gsimp:exp}) for $\chi(S)$ into the first equation
of (\ref{gsimp:nonlin}), and expanding $\mu=\mu_0+\nu \mu_1 + \cdots$,
we obtain
\begin{equation*}
  \left[1 + \mu_0 + \nu\left(\mu_1 + 2\pi \nu R_{0p}\right) \right] S
  \sim \nu \left( \sqrt{\frac{S}{b}} + S \hat{\chi}_1\right) \,,
\end{equation*}
which can be solved asymptotically when $\nu\ll 1$ to get the two-term
expansion for $S$ in terms of $\mu_0$ and $\mu_1$ given by
\begin{equation}
  S = \nu^2 \left( \hat{S}_0 + \nu \hat{S}_1 + \cdots\right) \,; \qquad
  \hat{S}_0 \equiv \frac{1}{b(1+\mu_0)^2} \,, \qquad
  \hat{S}_1 \equiv \frac{2}{b(1+\mu_0)^3}\left( \hat{\chi}_1 -\mu_1 -2\pi R_{0p}
  \right) \,. \label{gsimp:sexp}
\end{equation}

From the two-term expansion (\ref{gsimp:sexp}) for $S$ we calculate
$\chi^{\prime}(S)$ from (\ref{gsimp:exp}) as
\begin{equation*}
  \chi^{\p}(S) \sim \frac{1}{2\sqrt{b}\nu} \left(\hat{S}_0+ \nu \hat{S}_1
 + \cdots \right)^{-1/2} + \hat{\chi}_1 \sim
  \frac{\hat{S}_0^{-1/2}}{2\sqrt{b}\nu} + \left[ \hat{\chi}_1 -
  \frac{\hat{S}_1}{4\sqrt{b} \hat{S}_0^{3/2}} \right] + {\mathcal O}(\nu) \,.
\end{equation*}
By using (\ref{gsimp:sexp}) for $\hat{S}_0$ and $\hat{S}_1$, the expression
above becomes
\begin{equation} \label{gsimp:chip}
 \chi^{\p}(S) \sim \frac{1}{2\nu} \left[ (1+\mu_0) + \nu \left( \hat{\chi}_1 +
  \mu_1+ 2\pi R_{0p} \right) + {\mathcal O}(\nu^2) \right] \,.
\end{equation}
Then, upon substituting (\ref{gsimp:chip}) into the second equation of
(\ref{gsimp:nonlin}) we obtain, up to ${\mathcal O}(\nu)$ terms, that
\begin{equation*}
  1 + 2\pi \nu R_{b0} \sim (1+\mu_0) + \frac{\nu}{2} \left(
  \hat{\chi}_1 + 2\pi R_{0p} + \mu_1\right) \,,
\end{equation*}
which determines $\mu_0$ and $\mu_1$ as
\begin{equation}
  \mu_0=1 \,, \qquad \mu_1 = -\hat{\chi}_1 - 2\pi R_{0p} + 4\pi R_{b0} \,.
  \label{gsimp:fin_mu}
\end{equation}

Finally, by recalling the definition of $\mu$ and $\hat{\chi}_1$ in
(\ref{gsimp:nonlin}) and (\ref{gsimp:exp}), respectively, and by using
the two-term expansion $\mu=\mu_0 +\nu \mu_1$ from
(\ref{gsimp:fin_mu}), we conclude that $\lambda=0$ when
$D=D^{\star}(\kb)$, where $D^{\star}(\kb)$ is given by
\begin{equation}
   D^{\star}(\kb) \equiv \frac{|\Omega|}{2\pi\nu} \left[
      1 + \nu \left(4\pi R_{b0}(\kb) -2\pi R_{0p} +
  \frac{1}{b} \int_{0}^{\infty} w V_{1p} \rho \, d\rho \right) +
    {\mathcal O}(\nu^2) \right]\,. \label{gsimp:optim}
\end{equation}
By minimizing $R_{b0}(\kb)$ with respect to $\kb$, and then maximizing
the result with respect to the geometry of the lattice $\Lambda$,
(\ref{gsimp:optim}) recovers the main result (\ref{gm:doptim}) of
Principal Result \ref{pr 4.3}.

\subsection{The Gray-Scott Model}\label{simp:gs}

In this sub-section we employ the simple approach of \S
\ref{simp:schnak} and \S \ref{simp:gm} to optimize a stability
threshold for a periodic pattern of localized spots for the GS model,
where the spots are localized at the lattice points of the Bravais
lattice $\Lambda$ of (\ref{lattice-def}).  In the Wigner-Seitz cell
$\Omega$, the GS model in the dimensionless form of \cite{MO1} is
\begin{equation}
v_t = \eps^2\, \Delta v - v + A u v^2 \,, \quad
\tau u_t = D\, \Delta u +(1-u) - u v^2\,, \quad \xb \in \Omega\,;
 \qquad {\cal P}_0 u = {\cal P}_0 v = 0\,, \quad \xb \in
  \partial \Omega \,, \label{1:gs}
\end{equation}
where $\eps>0$, $D>0$, $\tau>1$, and the feed-rate parameter $A>0$ are
constants. In various parameter regimes of $A$ and $D$, the stability
and self-replication behavior of localized spots for (\ref{1:gs}) have
been studied in \cite{MO1}, \cite{MO2}, \cite{MO3}, \cite{WGS2}, and
\cite{cw_1} (see also the references therein).  We will consider the
parameter regime $D={\mathcal O}(\nu^{-1})\gg 1$ and $A={\mathcal
  O}(\eps)$ of \cite{WGS2}. In this regime, and to leading order in
$\nu$, an existence and stability analysis of $N$-spot patterns in a
finite domain was undertaken via a Lypanunov Schmidt reduction and a
rigorous study of certain nonlocal eigenvalue problems. We briefly
review the main stability result of \cite{WGS2} following
(\ref{gs:aeq}) below.

We first construct a one-spot steady-state solution to (\ref{1:gs})
with spot centered at $\xb=\xbo$ in $\Omega$ in the regime
$D={\mathcal O}(\nu^{-1})$ and $A={\mathcal O}(\eps)$ by using the
approach in \S 2 of \cite{cw_1}.

In the inner region near $\xb=\xbo$ we introduce the local variables
$U$, $V$, and $\yb$, defined by
\begin{equation}
 u = \frac{\eps}{A \sqrt{D}} U \,, \qquad v = \frac{\sqrt{D}}{\eps}
V \,, \qquad \yb=\eps^{-1} \xb \,, \label{3:2dinnvar}
\end{equation}
into the steady-state problem for (\ref{1:gs}). We obtain that
$U\sim U(\rho)$ and $V\sim V(\rho)$, with $\rho=|\yb|$, satisfy the
same core problem
\bsub \label{gs:core}
\begin{gather}
 \Delta_\rho V - V + U V^2 =0 \,, \qquad
 \Delta_\rho U - U V^2=0 \,, \qquad 0 < \rho < \infty \,,\label{gs:core_1}\\
 U^{\prime}(0)=V^{\prime}(0)=0\,; \qquad V
 \to 0 \,, \qquad U \sim S \log\rho + \chi(S) + o(1) \,, \quad
 \mbox{as} \quad \rho\to \infty \,, \label{gs:core_2}
\end{gather}
\esub
as that for the Schnakenburg model studied in \S
\ref{schnak:equil}, where $S\equiv \int_{0}^{\infty} U V^2\rho \,
d\rho$ and $\Delta_\rho V\equiv V^{\p\p} + \rho^{-1}
V^{\p}$. Therefore, for $S\to 0$, the two-term asymptotics of
$\chi(S)$ is given in (\ref{simp:exp}) of Lemma \ref{lemma 5.1}.

To formulate the outer problem for $u$, we observe that since $v$ is
localized near $\xb=\xbo$ we have in the sense of
distributions that $uv^2 \rightarrow \eps^2 \left( \int_{\mathbb{R}^2}
\sqrt{D} \left(A \eps\right)^{-1} U V^2 \, d\yb\right) \, \delta(\xb)
\sim 2 \pi \eps \sqrt{D} A^{-1} S \,\delta(\xb)$. Then, upon matching $u$
to the core solution $U$, we obtain from (\ref{1:gs}) that
\begin{equation} \label{3:uout}
\begin{split}
 \Delta u  +\frac{1}{D}(1-u) &= \frac{2 \pi \,\eps}{A\sqrt{D}} S \,\delta(\xb)
  \,, \quad \xb\in \Omega \,; \qquad {\cal P}_0 u=0 \,, \quad \xb\in
  \partial\Omega \,, \\
  u &\sim  \frac{\eps}{A \sqrt{D}} \left( S \log|\xb| + \frac{S}{\nu} +
  \chi(S) \right) \,, \quad \mbox{as} \quad \xb \to \xbo \,,
\end{split}
\end{equation}
where $\nu\equiv {-1/\log\eps}$.  The solution to (\ref{3:uout}) is $u
= 1 -2\pi\eps S {G_{p}(\xb)/(A\sqrt{D})}$, where $G_p(\xb)$ is the
Green's function of (\ref{greq1:uout}).  Next, we calculate the local
behavior of $u$ as $\xb\to \xbo$ and compare it with the required
behavior in (\ref{3:uout}). This yields that $S$ satisfies
\begin{equation}
 S + \nu \left[ \chi(S) + 2\pi S R_p\right] = \frac{A \nu \sqrt{D}}{\eps}
  \,, \label{gs:seq1}
\end{equation}
where $R_p$ is the regular part of $G_p$ as defined in (\ref{greq1:uout}).

We consider the regime $D={D_0/\nu}\gg 1$ with $D_0={\mathcal O}(1)$. By
using the two-term expansion (\ref{gm:green_exp}) for $R_p$ in terms
of the regular part $R_{0p}$ of the periodic source-neutral Green's
of  (\ref{gr:source_neut}), (\ref{gs:seq1}) becomes
\bsub \label{gs:eq}
\begin{equation}
   S \left(1+\mu\right)+ \nu \left[2\pi S R_{0p} + \chi(S) \right] +
  {\mathcal O}(\nu^2) = {\cal A} \sqrt{\nu\mu} \,,
  \label{gs:eq_1}
\end{equation}
where we have defined $\mu$ and ${\cal A}={\mathcal O}(1)$ in terms of
$A={\mathcal O}(\eps)$ by
\begin{equation}
    {\cal A} = \frac{A}{\eps} \sqrt{ \frac{|\Omega|}{2\pi}} \,, \qquad
  \mu \equiv \frac{2\pi D_0}{|\Omega|} \,, \qquad D=\frac{D_0}{\nu} \,.
  \label{gs:aeq}
\end{equation}
\esub

To illustrate the bifurcation diagram associated with (\ref{gs:eq_1}), we
use $\chi(S)\sim {b/S}$ as $S\to 0$ from (\ref{simp:exp}) of Lemma
\ref{lemma 5.1}. Upon writing $S=\nu^{1/2} {\cal S}$, with
${\cal S}={\mathcal O}(1)$, we obtain from (\ref{gs:eq}) that,
to leading order in $\nu$,
\begin{equation}
    {\cal A} \sqrt{\mu} = {\cal S} (1+\mu) + \frac{b}{{\cal S}} \,; \qquad
     \mu = \frac{2\pi D_0}{|\Omega|} \,, \qquad
 b=\int_{0}^{\infty} w^2\rho \, d\rho \,. \label{gs:bif}
\end{equation}
From Lemma \ref{lemma 5.1} and (\ref{3:2dinnvar}), the spot amplitude
$V(\xbo)$ to leading order in $\nu$ is related to ${\cal S}$ by
$V(\xbo)=\eps^{-1}\sqrt{D_0}{\cal S} {w(0)/b}$. In Fig.~\ref{gsplot}
we use (\ref{gs:bif}) to plot the leading-order saddle-node
bifurcation diagram of ${\cal S}$ versus ${\cal A}$, where the upper
solution branch corresponds a pattern with large amplitude spots. The
saddle-node point occurs when ${\cal S}_f = \sqrt{ b/(1+\mu)}$ and
${\cal A}_f = 2\sqrt{b} \sqrt{ (1+\mu)/\mu}$.  As we show below, there
is a zero eigenvalue crossing corresponding to an instability for some
Bloch wavevector $\kb$ with $|\kb|>0$ and ${\kb/(2\pi)}\in \Omega_B$
that occurs within an ${\mathcal O}(\nu)$ neighborhood of the point
$({\cal S}_0,{\cal A}_0)$ on the upper branch of Fig.~\ref{gsplot}
given by ${\cal S}_0 = \sqrt{b}$ and ${\cal A}_0
=\left(2+\mu\right)\sqrt{{b/\mu}}$. Since $|\kb|>0$ for this
instability, we refer to it as a competition instability. Below, we
will expand ${\cal A}={\cal A}_0 + \nu {\cal A}_1 +\cdots$, and
determine the optimal lattice arrangement of spots that minimizes
${\cal A}_1$. This has the effect of maximizing the extent of the
upper solution branch in Fig.~\ref{gsplot} that is stable to
competition instabilities.

\begin{figure}[htb]
\begin{center}
\includegraphics[width = 8cm,height=5.5cm,clip]{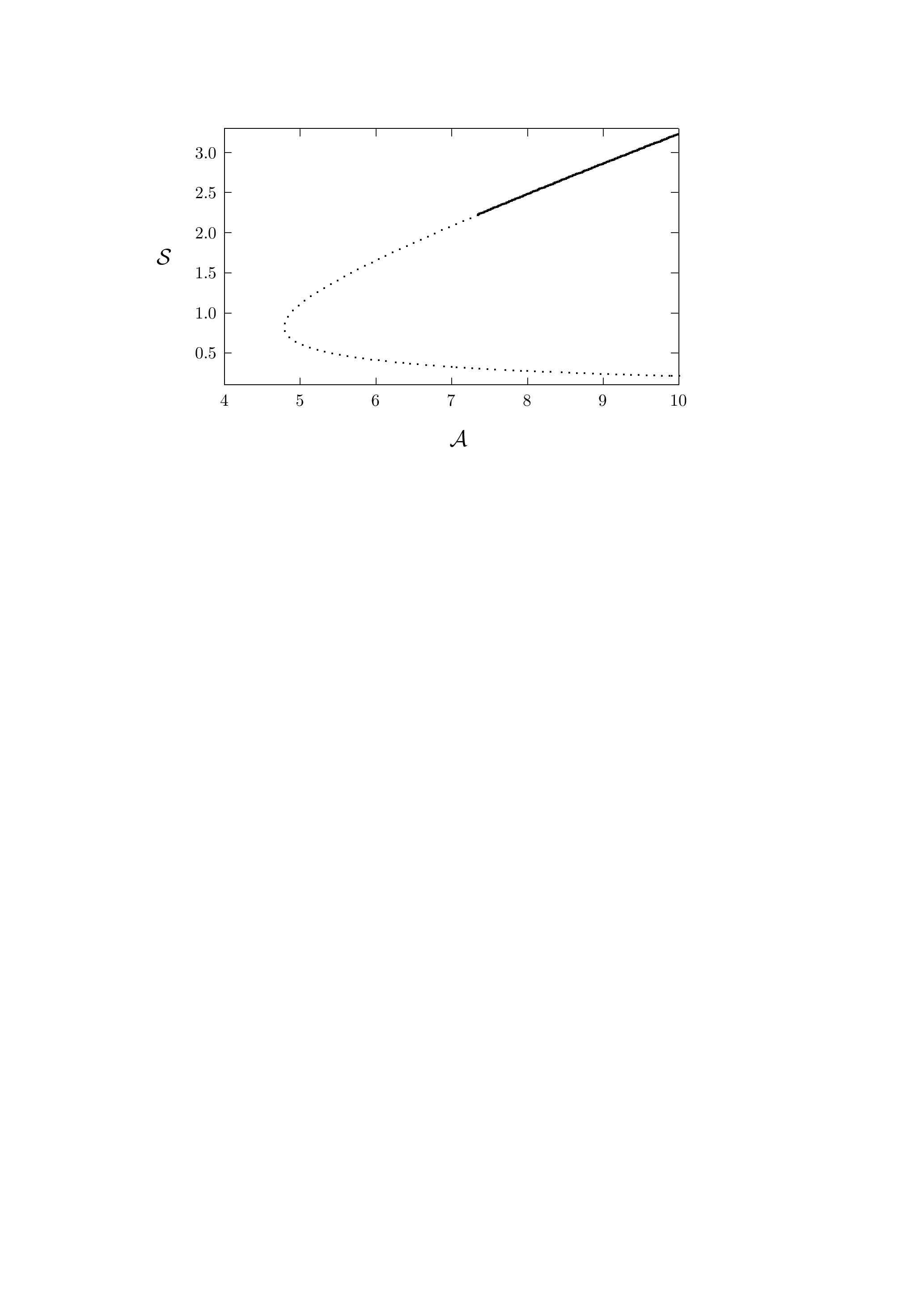}
\caption{Plot, to leading-order in $\nu$, of the saddle-node
  bifurcation diagram ${\cal S}$ versus ${\cal A}$, obtained from
  (\ref{gs:bif}), for the GS model with $|\Omega|=1$ and $D_0=1$. The
  leading-order spot amplitude $V(0)=\eps^{-1}\sqrt{D}_0 {\cal S}{w(0)/b}$ is
  directly proportional to ${\cal S}$. The heavy solid branch of large
  amplitude spots is linearly stable to competition instabilities,
  while the dotted branch is unstable to competition instabilities. To
  leading order in $\nu$, the zero eigenvalue crossing corresponding to the
  competition instability threshold occurs at
  ${\cal A}_0=\left(2+\mu\right)\sqrt{{b/\mu}}\approx 7.34$ where ${\cal
    S}_0=\sqrt{b}\approx 2.22$.}
\label{gsplot}
\end{center}
\end{figure}

Before proceeding with the calculation of the optimal lattice for the
periodic problem, we recall some prior rigorous results of \cite{WGS2}
for the finite domain problem with $N$ localized spots in a finite
domain $\Omega_N$ with homogeneous Neumann boundary conditions. From
\cite{WGS2}, the bifurcation diagram to leading order in $\nu$ is
\begin{equation*}
    {\cal A} \sqrt{\mu_N} = {\cal S} (1+\mu_N) + \frac{b}{{\cal S}} \,, \qquad
    b=\int_{0}^{\infty} w^2\rho \, d\rho \,, \qquad \mu_N=
 \frac{2\pi N D_0}{|\Omega_N|} \,, \qquad {\cal A} = \frac{A}{\eps}
 \sqrt{\frac{|\Omega_N|}{2\pi N}} \,,
\end{equation*}
which shows that we need only replace $|\Omega|$ in (\ref{gs:bif})
with ${|\Omega_N|/N}$. From a rigorous NLEP analysis of the finite
domain problem, it was proved in \cite{WGS2} that the lower solution
branch in Fig.~\ref{gsplot} is unstable to synchronous instabilities,
while the upper branch is stable to such instabilities. In
contrast, it is only the portion of the upper solution branch with
${\cal S}>{\cal S}_0$ that is stable to competition instabilities
(see Fig.~\ref{gsplot}). Therefore, there is two zero eigenvalue
crossings; one at the saddle-node point corresponding to a synchronous
instability, and one at the point $({\cal S}_0,{\cal A}_0)$ on the upper branch
corresponding to a competition instability.

Similarly, for the periodic spot problem we remark that there is also
a zero eigenvalue crossing when $\kb=0$, i.e. the synchronous
instability, which occurs at the saddle-node bifurcation
point. However, since it is the zero eigenvalue crossing for the
competition instability that sets the instability threshold for ${\cal
  A}$ (see Fig.~\ref{gsplot}), we will not analyze the effect of the
lattice geometry on the zero eigenvalue crossing for the synchronous
instability mode.

We now proceed to analyze the zero eigenvalue crossing for the
competition instability. To determine the stability of the
steady-state solution $u_e$ and $v_e$, we introduce (\ref{st:pert})
into (\ref{1:gs}) to obtain the Floquet-Bloch eigenvalue problem
\begin{equation}
\label{gs:eig}
\begin{split}
 \eps^2 \Delta \phi - \phi + 2 A u_e v_e \phi + A v_e^2 \eta &=\lambda \phi\,,
\quad \xb \in \Omega\,; \qquad {\cal P}_\kb \phi=0 \,, \quad \xb \in
\partial\Omega \,, \\
 D \Delta \eta - \eta - 2 u_e v_e \phi - v_e^2 \eta &= \lambda \tau \eta\,;
 \qquad {\cal P}_\kb \phi=0 \,, \quad \xb \in \partial\Omega \,.
\end{split}
\end{equation}

In the inner region near $\xb=\xbo$ we look for a locally radially symmetric
eigenpair $N(\rho)$ and $\Phi(\rho)$, with $\rho=|\yb|$, defined in terms
of $\eta$ and $\phi$ by
\begin{equation}
  \eta =  \frac{\eps}{A\sqrt{D}} N(\rho) \,, \qquad
   \phi= \frac{\sqrt{D}}{\eps} \Phi(\rho) \,, \qquad \rho=|\yb| \,, \qquad
 \yb=\eps^{-1}\xb \,. \label{gs:inn}
\end{equation}
From (\ref{gs:eig}), we obtain to within negligible ${\mathcal O}(\eps^2)$
terms that $N(\rho)$ and $\Phi(\rho)$ satisfy
\begin{equation}\label{gs:eprob}
\begin{split}
  \Delta_\rho \Phi - \Phi + & 2 U V \Phi + N V^2 =\lambda \Phi \,, \qquad
  \Phi \to 0 \,, \quad \mbox{as} \quad \rho \to \infty \,,\\ \Delta_\rho N &= 2
  UV \Phi + N V^2 \,, \qquad N \sim C \log \rho + B \,, \quad
  \mbox{as} \quad \rho\to \infty \,,
\end{split}
\end{equation}
with $\Phi^{\prime}(0)=N^{\prime}(0)=0$, $B=B(S;\lambda)$, and
$C =\int_{0}^{\infty} \left( 2 U V \Phi + N V^2 \right)\rho\, d\rho$.

To determine the outer problem for $\eta$, we first calculate in the sense
of distributions that
\begin{equation}
 2 u_e v_e \phi + v_e^2 \eta \rightarrow \frac{\sqrt{D}}{A\eps} \left[
  \eps^2 \int_{\R^2} \left(2 U V \Phi + V^2 N\right)\, d\yb \right]\delta(\xb)
 =  \frac{2\pi \eps\sqrt{D}}{A} C \delta(\xb) \,. \label{gs:dist}
\end{equation}
Then, by asymptotically matching $\eta$ as $\xb\to \xbo$ with the far-field
behavior of $N$ in (\ref{gs:eprob}), we obtain from (\ref{gs:dist}) and
(\ref{gs:eig}) that the outer problem for $\eta$ is
\begin{equation}  \label{gs:etaout}
\begin{split}
   \Delta \eta & -\theta_\lam^2 \eta = \frac{2\pi\eps}{A\sqrt{D}} C
  \delta(\xb) \,,  \quad \xb\in \Omega \,; \qquad {\cal P}_\kb \eta=0 \,,
  \quad \xb \in \partial\Omega \,, \\
   \eta & \sim \frac{\eps}{A\sqrt{D}}
 \left[C \log|\xb| + \frac{C}{\nu}  B\right]\,,   \quad \mbox{as}
  \quad \xb\to\xbo\,.
\end{split}
\end{equation}
Here we have defined $\theta_\lambda\equiv \sqrt{(1+\tau\lambda)/D}$.
The solution to (\ref{gs:etaout}) is $\eta=-2\pi \eps C {{\cal
    G}_{b\lam}(\xb)/(A\sqrt{D})}$, where ${\cal G}_{b\lam}$ satisfies
(\ref{gm_grlam:eta}). By imposing that the behavior of $\eta$ as
$\xb\to \xbo$ agrees with that in (\ref{gs:etaout}), we conclude that
$\left(1 + 2\pi \nu {\cal R}_{b\lam} \right) C + \nu B=0$, where
$R_{b\lam}$ is the regular part of $G_{b\lam}$ defined in
(\ref{gm_grlam:eta}). Then, since $D={D_0/\nu}\gg 1$, we have from
Lemma \ref{lemma 2.3}(ii) upon taking the $D\gg 1$ limit in
(\ref{gm_grlam:eta}) that $R_{b\lam}\sim R_{b0} + {\mathcal O}(\nu)$
for $|\kb|>0$ and ${\kb/(2\pi)}\in\Omega_B$. Thus, we have
\begin{equation}
 \left(1 + 2\pi \nu R_{b0} + {\mathcal O}(\nu^2)\right) C - \nu B=0
  \,, \label{gs:cb}
\end{equation}
where $R_{b0}=R_{b0}(\kb)$ is the regular part of the Bloch Green's
function $G_{b0}$ defined on $\Omega$ by (\ref{green:b0}). We remark
that if we were to consider zero-eigenvalue crossings for a synchronous
instability where $\kb=0$, we would instead use
$R_{b\lam}=R_{p}\sim {D_0/\nu|\Omega|} + R_{0p}+\cdots$ from (\ref{gm:green_exp})
to obtain $\left(1+\mu + 2\pi\nu R_{p0} +{\mathcal O}(\nu^2)\right)C +
\nu B=0$ in place of (\ref{gs:cb}).

As in \S \ref{simp:schnak} we use the key fact that at $\lambda=0$, we
have $B(S;0)=C\chi^{\prime}(S)$. Therefore, at $\lambda=0$, we obtain
from (\ref{gs:cb}) and (\ref{gs:eq_1}) that the critical values of
${\cal A}$ and $S$ where $\lambda=0$ satisfy the coupled nonlinear
algebraic system
\begin{equation} \label{gs:coup}
   S \left(1+\mu\right)+ \nu \left[2\pi S R_{0p} + \chi(S) \right] +
  {\mathcal O}(\nu^2) = {\cal A} \sqrt{\nu\mu} \,,
\qquad 1 + 2\pi \nu R_{b0} + {\mathcal O}(\nu^2) + \nu \chi^{\prime}(S) =0\,.
\end{equation}

The final step in the calculation is to use the two term expansion for
$\chi(S)$, as given in (\ref{simp:exp}) of Lemma \ref{lemma 5.1}, to obtain a
two-term approximate solution in powers of $\nu$ to
(\ref{gs:coup}). By substituting $\chi^{\prime}(S)\sim -b S^{-2} +
\hat{\chi}_1$ for $S\ll 1$ into the second equation of
(\ref{gs:coup}), we readily calculate a two-term expansion for $S$
as
\begin{equation}
   S \sim \sqrt{b \nu} \left( 1 + \nu \hat{S}_1 + \cdots\right) \,,
   \qquad \hat{S}_1 \equiv -\frac{1}{2}\left(\hat{\chi}_1 + 2\pi
   R_{b0}\right)\,.
  \label{gs:hats}
\end{equation}
Then, we substitute (\ref{gs:hats}), together with the two-term expansion
\begin{equation}
   {\cal A}={\cal A}_0 + \nu {\cal A}_1 + \cdots \,, \label{gs:a_exp}
\end{equation}
into the first equation of (\ref{gs:coup}), and equate powers of $\nu$.
From the ${\mathcal O}(\nu^{1/2})$ terms in the resulting expression we
obtain that ${\cal A}_0 = \sqrt{b}{(2+\mu)/\sqrt{\mu}}$, while at order
${\mathcal O}(\nu^{3/2})$ we get that ${\cal A}_1={\cal A}_1(\kb)$ satisfies
\begin{equation}
  \frac{{\cal A}_1}{\sqrt{b\mu}} = \frac{2\pi R_{0p}}{\mu}
 + \frac{\hat{\chi}_1}{\mu} + \hat{S}_1 = \frac{2\pi R_{0p}}{\mu}  -
 \pi R_{b0}(\kb) + \hat{\chi}_1 \frac{(2-\mu)}{2\mu} \,,\label{gs:a1}
\end{equation}
where $\hat{\chi}_1$ is given in (\ref{simp:exp}) of Lemma \ref{lemma 5.1}.

To determine the optimal lattice that allows for stability for the
smallest value of ${\cal A}$, we first fix a lattice $\Lambda$ and
then maximize ${\cal A}_1$ in (\ref{gs:a1}) through minimizing
$R_{b0}(\kb)$ with respect to the Bloch wavevector $\kb$. Then, we
minimize ${\cal A}_1$ with respect to the lattice geometry $\Lambda$
while fixing $|\Omega|$.  We summarize this third main result as follows:

\begin{result}\label{pr 5.3} The optimal
  lattice arrangement for a steady-state periodic pattern of spots for
  the GS model (\ref{1:gs}) in the regime $D={D_0/\nu}\gg 1$ and
  $A={\mathcal O}(\eps)$ is the one for which the objective function
  ${\cal K}_{\textrm{gs}}$ is maximized, where
\begin{equation}
   {\cal K}_{\textrm{gs}} \equiv \pi \mu R_{b0}^{\star} - 2\pi R_{0p}
 \,, \qquad R_{b0}^{\star} \equiv \min_{\kb} R_{b0}(\kb) \,, \qquad
   \mu \equiv \frac{2\pi D_0}{|\Omega|} \,.
  \label{gs:kap}
\end{equation}
 For $\nu={-1/\log\eps} \ll 1$, a two-term asymptotic expansion for
 the competition instability threshold of $A$ on the optimal lattice is
\begin{equation}
  A_\textrm{optim}= \eps\sqrt{ \frac{2\pi}{|\Omega|}} {\cal A}_\textrm{optim}
 \,, \qquad
  {\cal A}_\textrm{optim} \sim \frac{\sqrt{b}(2+\mu)}{\sqrt{\mu}} +
  \nu\sqrt{\frac{b}{\mu}} \left( -\max_{\Lambda} {\cal K}_{\textrm{gs}} +
  \frac{1}{b^2}\left(1-\frac{\mu}{2}\right) \int_{0}^{\infty}
  V_{1p} \rho \, d\rho \right) + \cdots \,, \label{gs:optim}
\end{equation}
 where $\max_{\Lambda}{\cal K}_{\textrm{gs}}$ is taken over all
  lattices $\Lambda$ having a common area $|\Omega|$ of the
  Wigner-Seitz cell. In (\ref{gs:optim}), $V_{1p}$ is the solution to
  (\ref{score:exp_3}), while $b=\int_{0}^{\infty} w^2\rho \, d\rho\approx
  4.93$, where $w(\rho)>0$ is the ground-state solution of $\Delta_\rho
  w - w + w^2=0$, and $\int_{0}^{\infty} V_{1p}\rho \, d\rho \approx 0.481$.
\end{result}

We remark that (\ref{gs:optim}) can also be derived through the more
lengthy but systematic approach given in \S~\ref{schnak} and \S~\ref{gm} of
first calculating the portion of the continuous spectrum that
satisfies $|\lambda|\leq {\mathcal O}(\nu)$ when ${\cal A}={\cal A}_0
+ {\mathcal O}(\nu)$.

The numerical method to compute ${\cal K}_{\textrm{gs}}$ is given in
\S \ref{sec:ewald}. In \S \ref{sec:ewald_0}, we show numerically that
within the class of oblique Bravais lattices, ${\cal K}_{\textrm{gs}}$
is maximized for a regular hexagonal lattice.  Thus, the minimal stability
threshold for the feed-rate $A$ occurs for this hexagonal lattice.

\setcounter{equation}{0}
\setcounter{section}{5}
\section{Numerical Computation of the Bloch Green's Function}\label{sec:ewald}

We seek a rapidly converging expansion for the Bloch Green's function
$G_{b0}$ satisfying (\ref{green:b0}) on the Wigner-Seitz cell $\Omega$
for the Bravais lattice $\Lambda$ of (\ref{lattice-def}).  It is the
regular part $R_{b0}$ of this Green's function that is needed in
Principal Results \ref{pr 3.3}, \ref{pr 4.3}, and \ref{pr 5.3}. Since
only one Green's function needs to be calculated numerically in this
section, for clarity of notation we remove its subscript. In \S
\ref{sec:ewald_0} we will revert to the notation of
\S~\ref{2:latt_gr}--\ref{simp} to determine the optimal lattice for
the stability thresholds in Principal Results \ref{pr 3.3}, \ref{pr 4.3},
and \ref{pr 5.3}.

Instead of computing the Bloch Green's function on $\Omega$, it is
computationally more convenient to equivalently compute the Bloch
Green's function $G\equiv G_{b0}$ on all of $\R^2$ that satisfies
\begin{equation}
\Delta G(\xb) = -\delta(\xb)\,; \qquad
 G(\xb + \pmb l) = e^{-i\pmb k\cdot\pmb l}\, G(\xb)\,,
 \quad \pmb l\in\Lambda \,, \label{Green-PDE}
\end{equation}
where ${\kb/(2\pi)}\in \Omega_B$.  The regular part $R(\xbo)\equiv
R_{b0}(\xbo)$ of this Bloch Green's function is defined
\begin{equation}
     R(\xbo)\equiv \lim_{\xb\to \xbo} \left( G(\xb) + \frac{1}{2\pi} \log|\xb|
   \right) \,. \label{ew:r}
\end{equation}
To derive a computationally tractable expression for $R(\xbo)$ we will
follow closely the methodology of \cite{Beyl-1}.

We construct the solution to (\ref{Green-PDE}) as the sum of free-space
Green's functions
 \begin{equation}\label{Green-Lattice}
 G(\xb) = \sum_{\pmb l\in\Lambda} G_{\rm free}(\xb+\pmb l)\,
  e^{i\pmb k\cdot \pmb l}\,.
 \end{equation}
This sum guarantees that the quasi-periodicity condition in
(\ref{Green-PDE}) is satisfied. That is,
if $G(\xb) = \sum_{\pmb l\in\Lambda}G_{\rm free}(\xb + \pmb l)
\,e^{i\kb\cdot\pmb l}\,,$ then, upon choosing any $\pmb l^{\star}\in\Lambda$, it
follows that $G(\xb + \pmb l^{\star}) = e^{-i\pmb k\cdot \pmb l^{\star}}\,G(\xb)\,.$
To show this, we use $\pmb \l^{\star}+\pmb \l \in\Lambda$ and calculate
 \begin{equation*}
 G(\xb + \pmb l^*) = \sum_{\pmb l\in \Lambda}
  G_{\rm free}(\xb + \pmb l^* + \pmb l)\,e^{i\pmb k\cdot\pmb l}
  =\sum_{\pmb l\in \Lambda}
  G_{\rm free}(\xb + \pmb l^* + \pmb l)
  \,e^{i\pmb k\cdot(\pmb l^*+\pmb l)} \,e^{-i\pmb k\cdot\pmb l^*}
  = e^{-i\pmb k\cdot\pmb l^*}\,G(\xb)\,.
 \end{equation*}

In order to analyze (\ref{Green-Lattice}), we will
use the Poisson summation formula with converts a sum over $\Lambda$ to
a sum over the reciprocal lattice $\Lambda^{\star}$ of (\ref{lattice-recip}).
In the notation of \cite{Beyl-1}, we have (see Proposition 2.1 of
\cite{Beyl-1})
 \begin{equation}\label{Poisson-sum}
   \sum_{\pmb l\in\Lambda} f(\xb + \pmb l)\,e^{i\pmb k\cdot\pmb l}
  = \frac1 V\sum_{\pmb d\in\Lambda^*} \hat f(2\pi\pmb d - \pmb k)
 \,e^{i\xb\cdot(2\pi \pmb d-\pmb k)}\,,
 \quad  \xb\,, \pmb k\in\R^2 \,,
 \end{equation}
where $\hat f$ is the Fourier transform of $f$, and $V=|\Omega|$ is the
area of the primitive cell of the lattice.

\begin{remark}\label{remark 3} Other
  authors (cf.~\cite{Linton}, \cite{Moroz}) define the reciprocal
  lattice as $\Lambda^{\star}=\left\{ 2\pi m\, \pmb d_1, 2\pi n\,\pmb
  d_2\right\}_{m,n\in\Z}$, so that for any $\pmb l\in\Lambda$ and
  $\pmb d \in \Lambda^*$, it follows that $\pmb l\cdot\pmb d = 2K\pi$
  for some integer $K$ and hence $e^{i\pmb l\cdot\pmb d} =1\,.$ The
  form of the Poisson summation formula will then differ slightly from
  \eqref{Poisson-sum}.
\end{remark}

By applying (\ref{Poisson-sum}) to (\ref{Green-Lattice}), it follows
that the sum over the reciprocal lattice consists of free-space
Green's functions in the Fourier domain, and we will split each
Green's function in the Fourier domain into two parts in order to
obtain a rapidly converging series. In $\R^2$, we write the Fourier
transform pair as
 \begin{equation}\label{invF-trans-def}
  \hat f(\pmb p) = \int_{\R^2} f(\xb)\,e^{-i\xb\cdot\pmb p}\, {\rm
d}\xb\,, \qquad   f(\xb) = \frac{1}{4\pi^2}\int_{\R^2}
 \hat f(\pmb p) \,e^{i\pmb p\cdot\xb}\, {\rm d}\pmb p\,.
\end{equation}

The free space Green's function satisfies $\Delta G_{\rm free}=-\delta(\xb)$.
By taking Fourier transforms, we get
$-|\pmb p|^2\,\hat G_{\rm free}(\pmb p) = -1$, so that
\begin{equation}\label{G-hat}
  \hat G_{\rm free}(\pmb p)  = \frac{1}{|\pmb p|^2} \,.
\end{equation}
With the right-hand side of the Poisson summation formula \eqref{Poisson-sum}
in mind, we write
\begin{equation}
  \frac{1}{V}\sum_{\pmb d\in\Lambda^*}
  \hat G_{\rm free}(2\pi \pmb d - \pmb k)\,e^{i\xb\cdot(2\pi \pmb d-\pmb k)}
 = \sum_{\pmb d\in\Lambda^*}\frac{e^{i\xb\cdot(2\pi \pmb d-\pmb k)}}
 {|2\pi \pmb d - \pmb k|^2}\,,
\end{equation}
since $V=1$. To obtain a rapidly converging series expansion, we introduce
the decomposition
\begin{equation}
 \hat G_{\rm free}(2\pi\pmb d-\pmb k)
  = \alpha(2\pi\pmb d-\pmb k,\eta)\,
  \hat G_{\rm free}(2\pi\pmb d-\pmb k)
+ \Bigl(1-\alpha(2\pi\pmb d-\pmb k,\eta)\Bigr)\,
 \hat G_{\rm free}(2\pi\pmb d-\pmb
k)\,, \label{g:split}
\end{equation}
for some function $\alpha(2\pi\pmb d-\pmb k,\eta)$. We choose
$\alpha(2\pi\pmb d-\pmb k,\eta)$ so that the sum over $\pmb
d\in\Lambda^*$ of the first set of terms converges absolutely.  We
apply \eqref{invF-trans-def} to the second set of terms after first
writing $(1-\alpha)\,\hat G_{\rm free}$ as an integral. In the decomposition
(\ref{g:split}) we choose the function $\alpha$ as
 \begin{equation}\label{alpha-try}
 \alpha(2\pi\pmb d-\pmb k,\eta)
 = \exp\!\left(-\frac{|2\pi\pmb d-\pmb k|^2}{4\eta^2}\right)\,,
 \end{equation}
where $\eta>0$ is a cutoff parameter to be chosen. We readily observe that
 \begin{equation*}
      \lim_{\eta\to 0}\alpha(2\pi\pmb d-\pmb k,\eta) = 0\,; \qquad
      \lim_{\eta\to \infty}\alpha(2\pi\pmb d-\pmb k,\eta) = 1\,; \qquad
   \frac{\partial \alpha}{\partial\eta}
     = \frac{|2\pi\pmb d-\pmb k|^2\,\alpha}{2\eta^3} >0\,,\quad
  \mbox{since } \alpha>0\,,\,\,\, \eta>0\,,
\end{equation*}
which shows that $0<\alpha<1$ when $0<\eta<\infty$. Since
$0<\alpha<1$, the choice of $\eta$ determines the portion of the
Green's function that is determined from the sum of terms in the
reciprocal lattice $\Lambda^*$ and the portion that is determined from
the sum of terms in the lattice $\Lambda$.

 With the expressions \eqref{alpha-try} for $\alpha$
 and \eqref{G-hat} for $\hat G_{\rm free}$, we get
\begin{equation}
 \alpha(2\pi\pmb d-\pmb k,\eta)\,
  \hat G_{\rm free}(2\pi\pmb d-\pmb k)\,e^{i\xb\cdot(2\pi \pmb d-\pmb k)}
 =  \exp\!\left(-\frac{|2\pi\pmb d-\pmb k|^2}{4\eta^2}\right)\,
  \frac{
  \,e^{i\xb\cdot(2\pi \pmb d-\pmb k)}
  }
 {|2\pi \pmb d - \pmb k|^2}\,.  \label{g:split_1}
\end{equation}

Since $2\pi \pmb d - \pmb k\ne 0$, which follows since
${\kb/(2\pi)}\in\Omega_B$, the sum of these terms over $\pmb d\in
\Lambda^*$ converges absolutely. Following \cite{Beyl-1}, we
define
 \begin{equation}\label{G-fourier}
G_{\rm fourier}(\xb) \equiv \sum_{\pmb d\in\Lambda^*}
  \exp\!\left(-\frac{|2\pi\pmb d-\pmb k|^2}{4\eta^2}\right)\,
  \frac{\,e^{i\xb\cdot(2\pi \pmb d-\pmb k)} } {|2\pi \pmb d - \pmb k|^2}\,.
\end{equation}
For the $(1-\alpha)\,\hat G_{\rm free}$ term, we define $\rho$ by
$\rho\equiv |2\pi\pmb d-\pmb k|$, so that from \eqref{alpha-try},
\eqref{G-hat}, and $\hat{G}_{\rm free}=\hat{G}_{\rm free}(|\pmb p|)$, we get
\begin{equation}
\left(1- \alpha(2\pi\pmb d-\pmb k,\eta)\right)\,
  \hat G_{\rm free}(2\pi\pmb d-\pmb k)
  =\frac{1}{\rho^2}
  \left(1-e^{-{\rho^2/(4\eta^2)}}\right)\,. \label{g:split_2}
\end{equation}
Since $\int e^{-\rho^2\,e^{2s}+2s}\, ds =-{e^{-\rho^2\,e^{2s}}/(2\rho^2)}$, the
right hand side of (\ref{g:split_2}) can be calculated as
 \begin{equation*}
  2\int_{-\infty}^{-\log(2\eta)}\!\!
  e^{-\rho^2\,e^{2s}+2s}\, ds
 = \frac{1}{\rho^2}\,
  \Bigl(1 - e^{-{\rho^2/(4\eta^2)}}\Bigr)\,,
 \end{equation*}
so that
 \begin{equation}
\left(1- \alpha(2\pi\pmb d-\pmb k,\eta)\right)
 G_{\rm free}(2\pi\pmb d-\pmb k) =  2
 \int^{\infty}_{\log(2\eta)} e^{-\rho^2\,e^{-2s}-2s}\, ds\,. \label{g:fsing}
 \end{equation}

To take the inverse Fourier transform of (\ref{g:fsing}), we recall that
the inverse Fourier transform of a radially symmetric function is the
inverse Hankel transform of order zero (cf.~\cite{P}), so that
$f(r)=(2\pi)^{-1}\int_0^\infty \hat f(\rho)\,J_0(\rho r)\,\rho \, d\rho$.
Upon using the well-known inverse Hankel transform (cf.~\cite{P})
\begin{equation*}
\int_0^\infty e^{-\rho^2\,e^{-2s}}\rho\,J_0(\rho r)\,{\rm d}\rho
 = \frac12\,e^{2s-{r^2\,e^{2s}/4}}\,,
\end{equation*}
we calculate the inverse Fourier transform of (\ref{g:fsing}) as
 \begin{align*}
 \frac{1}{2\pi}\int_0^\infty
    \!\!\int^{\infty}_{\log(2\eta)}\!\! 2\,
  e^{-\rho^2 e^{-2s}-2s}\,
  \rho\, J_0(\rho r)\, {\rm d}s\, d\rho
 &=\frac{1}{\pi}
 \!\!\int^{\infty}_{\log(2\eta)}\!\! e^{-2s} \left(
 \int_0^\infty
   e^{-\rho^2\,e^{-2s}}\,
   \rho\, J_0(\rho r) \,{\rm d}\rho\, \right) ds
 \cr
 &=\frac{1}{2\pi}
 \!\!\int^{\infty}_{\log(2\eta)}\!\! e^{-2s}
 \,e^{2s-\frac{r^2}{4}\,e^{2s}}\, ds = \frac{1}{2\pi}
 \!\!\int^{\infty}_{\log(2\eta))}\!\!
 \,e^{-\frac{r^2}{4}\,e^{2s}}\, ds \,.
 \end{align*}
In the notation of \cite{Beyl-1}, we then define $F_{\rm sing}(\xb)$ as
\begin{equation}\label{F-sing}
 F_{\rm sing}(\xb)\equiv
 \frac{1}{2\pi}
 \!\!\int^{\infty}_{\log(2\eta)}\!\!
 \,e^{-\frac{|\xb|^2}{4}\,e^{2s}}\, ds \,,
 \end{equation}
so that by the Poisson summation formula (\ref{Poisson-sum}), we have
 \begin{equation}\label{G-spatial}
 G_{\rm spatial}(\xb) \equiv \sum_{\pmb l\in\Lambda}
  e^{i\pmb k\cdot\pmb l}\,F_{\rm sing}(\xb+\pmb l)\,.
 \end{equation}

In this way, for ${\kb/(2\pi)}\in \Omega_B$, we write the Bloch
Green's function in the spatial domain as the sum of \eqref{G-fourier}
and \eqref{G-spatial}
 \begin{equation}
G(\xb) = \sum_{\pmb d\in\Lambda^*}
  \exp\left(-\frac{|2\pi\pmb d-\pmb k|^2}{4\eta^2}\right)\,
  \frac{ e^{i\xb\cdot(2\pi \pmb d-\pmb k)}  }
 {|2\pi \pmb d - \pmb k|^2}
+ \frac1{2\pi} \sum_{\pmb l \in\Lambda}e^{i\pmb k\cdot\pmb l}
 \int^{\infty}_{\log(2\eta)}\!\!
 \,e^{-\frac{|\xb+\pmb l|^2}{4}\,e^{2s}}\, ds
\,.   \label{g:full_exp}
\end{equation}
From \eqref{G-fourier} and \eqref{G-spatial}, it readily follows that
$G_{\rm Fourier} \to 0$ as $\eta\to 0$, while
$G_{\rm spatial} \to 0$ as $\eta\to \infty$.

Now consider the behaviour of the Bloch Green's function as $\xb \to \xbo$.
From \eqref{G-fourier}, we have
 \begin{equation}
   G_{\rm Fourier}(0)
   = \sum_{\pmb d\in\Lambda^*}
  \exp\!\left(-\frac{|2\pi\pmb d-\pmb k|^2}{4\eta^2}\right)\,
  \frac{1   }
 {|2\pi \pmb d - \pmb k|^2}\,, \qquad \mbox{for} \quad {\kb/(2\pi)}\in
  \Omega_B\,, \label{g:spat0}
 \end{equation}
 which is finite since $|2\pi\pmb d -\pmb k|\ne 0$ and $\eta<\infty$. It
is also real-valued. Next, we decompose  $G_{\rm spatial}$ in (\ref{G-spatial}) as
 \begin{equation}
 G_{\rm spatial}(\xb) =
   F_{\rm sing}(\xb) +  \sum_{\genfrac{}{}{0pt}{}{\pmb l\in\Lambda}{\pmb
l\ne \xbo}}e^{i\pmb k\cdot \pmb l}\,F_{\rm sing}(\xb+\pmb l)\,. \label{gsp:dec}
 \end{equation}
For the second term in (\ref{gsp:dec}), we can take the limit $\xb\to\xbo$
since from \eqref{F-sing} we have
 \begin{equation*}
 \Big{\vert} \sum_{\genfrac{}{}{0pt}{}{\pmb l\in\Lambda}{\pmb
l\ne 0}}e^{i\pmb k\cdot \pmb l}\,F_{\rm sing}(\pmb l) \Big{\vert}
 < \infty\,.
 \end{equation*}
In contrast, $F_{\rm sing}(\xb)$ is singular at $\xb = \xbo$. To calculate
its singular behavior as $\xb\to \xbo$, we write $F_{\rm sing}(\xb) =
F_{\rm sing}(r)$, with $r=|\xb|$, and convert $F_{\rm sing}(r)$ to
an exponential integral by introducing $u$ by $u={r^2 e^{2s}/4}$
in (\ref{F-sing}). This gives
\begin{equation}\label{F-sing-E1}
 F_{\rm sing}(r)
  = \frac1{2\pi}\int_{\log(2\eta)}^\infty
    e^{-\frac{r^2}{4}\,e^{2s}}\, ds
  = \frac1{4\pi}\int_{r^2\,\eta^2}^\infty
    \frac{e^{-u}}u\, du
  = \frac1{4\pi}\,E_1(r^2\eta^2)\,,
\end{equation}
where $E_{1}(z)=\int_{z}^{\infty} t^{-1} e^{-t}\, dt$ is the
exponential integral (cf.~\S5.1.1 of \cite{AS}). Upon using the
series expansion of $E_{1}(z)$
 \begin{equation}\label{AS-E1-exp}
  E_1(z) = -\gamma - \log(z) - \sum_{n=1}^\infty \frac{(-1)^n\,
   z^n}{n\,n!}\,, \qquad \mbox{for} \quad |\arg z|<\pi \,,
 \end{equation}
as given in \S 5.1.11 of \cite{AS}, where $\gamma=0.57721\cdots$ is
Euler's constant, we have from \eqref{F-sing-E1} and \eqref{AS-E1-exp} that
 \begin{equation} \label{F-sing-asymp}
 F_{\rm sing}(r) \sim -\frac{\gamma}{4\pi}
   - \frac{\log\eta}{2\pi} - \frac{\log r}{2\pi} + o(1)\,,
 \qquad \text{as}\quad r\to 0\,.
 \end{equation}
This shows that the Bloch Green's function in (\ref{g:full_exp}) has the
expected logarithmic singularity as $\xb\to \xbo$.

We write the Bloch Green's function as the sum of regular and singular
parts as
\begin{equation}
 G(\xb) =    -\frac1{2\pi}\,\log |\xb| + R(\xb) \,, \qquad
 R(\xb) = G_{\rm Fourier}(\xb) + G_{\rm Spatial}(\xb)
        +  \frac1{2\pi}\,\log |\xb|\,. \label{gb:rb}
\end{equation}
By letting $\xb\to \xbo$, we have from (\ref{gsp:dec}),
(\ref{F-sing-asymp}), (\ref{g:spat0}), and (\ref{gb:rb}), that
for ${\kb/(2\pi)}\in \Omega_B$
\begin{equation}
  R(\xbo) = \sum_{\pmb d\in\Lambda^*}
  \exp\!\left(-\frac{|2\pi\pmb d-\pmb k|^2}{4\eta^2}\right)\,
  \frac{1   }
 {|2\pi \pmb d - \pmb k|^2} + \sum_{\genfrac{}{}{0pt}{}{\pmb l\in\Lambda}{\pmb
l\ne \xbo}}e^{i\pmb k\cdot \pmb l}\,F_{\rm sing}(\pmb l) - \frac{\gamma}{4\pi}
 - \frac{\log\eta}{2\pi} \,, \label{g:rb0_key}
\end{equation}
where $F_{\rm sing}(\pmb l)={E_{1}(|\pmb l|^2\eta^2)/(4\pi)}$.

For a square lattice, with unit area of the primitive cell and with
$\eta=2$ and $\pmb k=(\sin\frac\pi 3,\cos\frac\pi 3)$, in Table
\ref{tab:rb} we give numerical results for $R(\xb)$ for various values
of $\xb$ as $\xb\to\xbo$. The computations show that
$\im\left(R(\xb)\right)\to 0$ as $\xb\to \xbo$, as expected from
Lemma \ref{lemma 2.1} of \S~\ref{2:gr_lattice}.

\begin{table}
\begin{center}
\begin{tabular}{|c|c|c}
\hline $\xb$ & $G(\xb)$ & $R(\xb)$ \\
\hline
(.1,.1)           &  1.1027-.12568\,i &  .79138-.12568\,i \\
(.01,01)          &  1.4730-.012593\,i &  .79526-.012593\,i \\
$(10^{-3},10^{-3})$ &  1.8396-.0012593\,i &  .79531-.0012593\,i \\
$(10^{-4},10^{-4})$ &  2.2060-.00012593\,i &  .79530-.00012593\,i \\
$(10^{-5},10^{-5})$ &  2.5725-.000012593\,i &  .79529-.000012593\,i\\
$(10^{-6},10^{-6})$ &  2.9389-.0000012593\,i &  .79531-.0000012593\,i \\
$(10^{-7},10^{-7})$ &  3.3054-.00000012593\,i &  .79530-.00000012593\,i \\
$(10^{-8},10^{-8})$ &  3.6719-.000000012593\,i &  .79531-.000000012593\,i \\
$(10^{-9},10^{-9})$ &  4.0383-.0000000012593\,i &  .79529-.0000000012593\,i\\
$(10^{-10},10^{-10})$ &  4.4048-.00000000012593\,i &  .79530-.00000000012593\,i\\
$(10^{-11},10^{-11})$ &  4.7713-.000000000012594\,i &  .79529-.000000000012594\,i
\\
\hline
\end{tabular}
\end{center}
\caption{The regular part $R(\xb)$ of the Bloch Green's function, as
defined in (\ref{gb:rb}), for $\xb$ tending to the origin. Notice that
the imaginary part of $R(\xb)$ becomes increasingly small as $\xb\to \xbo$,
as expected from Lemma \ref{lemma 2.1} of \S~\ref{2:gr_lattice} where it was
established that $R(\xbo)$ is real-valued.} \label{tab:rb}
\end{table}

\subsection{An Optimal Lattice for Stability Thresholds}\label{sec:ewald_0}

In this sub-section we determine the lattice that optimizes the
stability thresholds given in Principal Results \ref{pr 3.3}, \ref{pr 4.3},
and \ref{pr 5.3}, for the Schnakenburg, GM, and GS models,
respectively.  Recall that in the notation of \S
\ref{2:latt_gr}--\ref{simp}, $R_{b0}(\kb)=R(\xbo)$, where $R(\xbo)$ is
given in (\ref{g:rb0_key}). The minimum of $R(0)$ with respect to
$\kb$ is denoted by $R_{b0}^{\star}$.

In our numerical computations of $R(\xbo)$ from (\ref{g:rb0_key}) we
truncate the direct and reciprocal lattices $\Lambda$ and
$\Lambda^{*}$ by the subsets $\bar{\Lambda}$ and $\bar{\Lambda}^*$ of
$\Lambda$ and $\Lambda^*$, respectively, defined by
$$\bar{\Lambda}=\left\{ n_1\pmb l_1+n_2 \pmb
l_2\,\big|-\!\!M_1<n_1,n_2<M_1\right\}\,,\quad \bar{\Lambda}^*=\left\{
n_1\pmb d_1+n_2 \pmb d_2\,\big|-\!\!M_2<n_1,n_2<M_2\right\}\,,\quad
n_1,n_2\in \mathbb{Z}\,.$$

For each lattice, we must pick $M_1$, $M_2$ and $\eta$ so that $G$ can
be calculated accurately with relatively few terms in the sum. These
parameters are found by numerical experimentation.  For the two
regular lattices (square, hexagonal) we used $(M_1,M_2,\eta) =
(2,5,3)$. For an arbitrary oblique lattice with angle $\theta$ between
$\lb_1$ and $\lb_2$ we took $M_1=5$, $M_2=3$, and we set $\eta=3$.

In Table \ref{tab:optim} we give numerical results for
$R_{b0}^{\star}$ for the square and hexagonal
lattices. These results show that $R_{b0}^{\star}$ is largest for the
hexagonal lattice. For these two simple lattices, in Table
\ref{tab:optim} we also give numerical results for $R_{0p}$, defined
by (\ref{gr:source_neut}), as obtained from the explicit formula in
Theorem 1 of \cite{chen} and \S 4 of \cite{chen}.  In Theorem 2 of
\cite{chen} it was proved that, within the class of oblique Bravais
lattices with unit area of the primitive cell, $R_{0p}$ is minimized
for a hexagonal lattice. Finally, in the fourth and fifth columns of
Table~\ref{tab:optim} we give numerical results for ${\cal
  K}_\textrm{s}$ and ${\cal K}_\textrm{gm}$, as defined in Principal
Results \ref{pr 3.3} and \ref{pr 4.3}. Of the two lattices, we
conclude that ${\cal K}_\textrm{s}$ and ${\cal K}_\textrm{gm}$ are
largest for the hexagonal lattice. In addition, since $R_{b0}^{\star}$
is maximized and $R_{0p}$ is minimized for a hexagonal lattice, it
follows that ${\cal K}_{\textrm{gs}}$ in Principal Result \ref{pr 5.3}
is also largest for a hexagonal lattice. Thus, with respect to the
two simple lattices, we conclude that the optimal stability
thresholds in Principal Results \ref{pr 3.3}, \ref{pr 4.3}, and
\ref{pr 5.3}, occur for a hexagonal lattice.

To show that the same conclusion regarding the optimal stability
thresholds occurs for the class of oblique lattices, we need only show
that $R_{b0}^{\star}$ is still maximized for the hexagonal lattice. This
is done numerically below.

\begin{table}
\begin{center}
\begin{tabular}{ |c|c|c|c|c|}
\hline
Lattice &$R_{b0}^{\star}$&$R_{0p}$&${\cal K}_{\textrm{s}}$&${\cal K}_{\textrm{gm}}$
\\ \hline
Square     & $-0.098259$ & $-0.20706$ & $-0.098259$ &  $0.06624$\\ \hline
Hexagonal  & $-0.079124$ & $-0.21027$ & $-0.079124$ &  $0.32685$\\ \hline
\end{tabular}
\end{center}
\caption{Numerical values for $R_{b0}^{\star}=\min_{\kb} R(\xbo)$,
  where $R(\xbo)$ is computed from (\ref{g:rb0_key}), for the square
  and hexagonal lattice for which $|\Omega|=1$. The third
  column is the regular part $R_{0p}$ of the periodic source-neutral Green's
  function (\ref{gr:source_neut}). The last two columns are ${\cal
    K}_{\textrm{s}}$ and ${\cal K}_{\textrm{gm}}$, as defined in
  Principal Results \ref{pr 3.3} and \ref{pr 4.3}, respectively. Of
  the two lattices, the hexagonal lattice gives the largest values
  for ${\cal K}_{\textrm{s}}$ and ${\cal K}_{\textrm{gm}}$.}
\label{tab:optim}
\end{table}

\begin{figure}[htb]
\begin{center}
{\includegraphics[width = 8cm,height=5.5cm,clip]{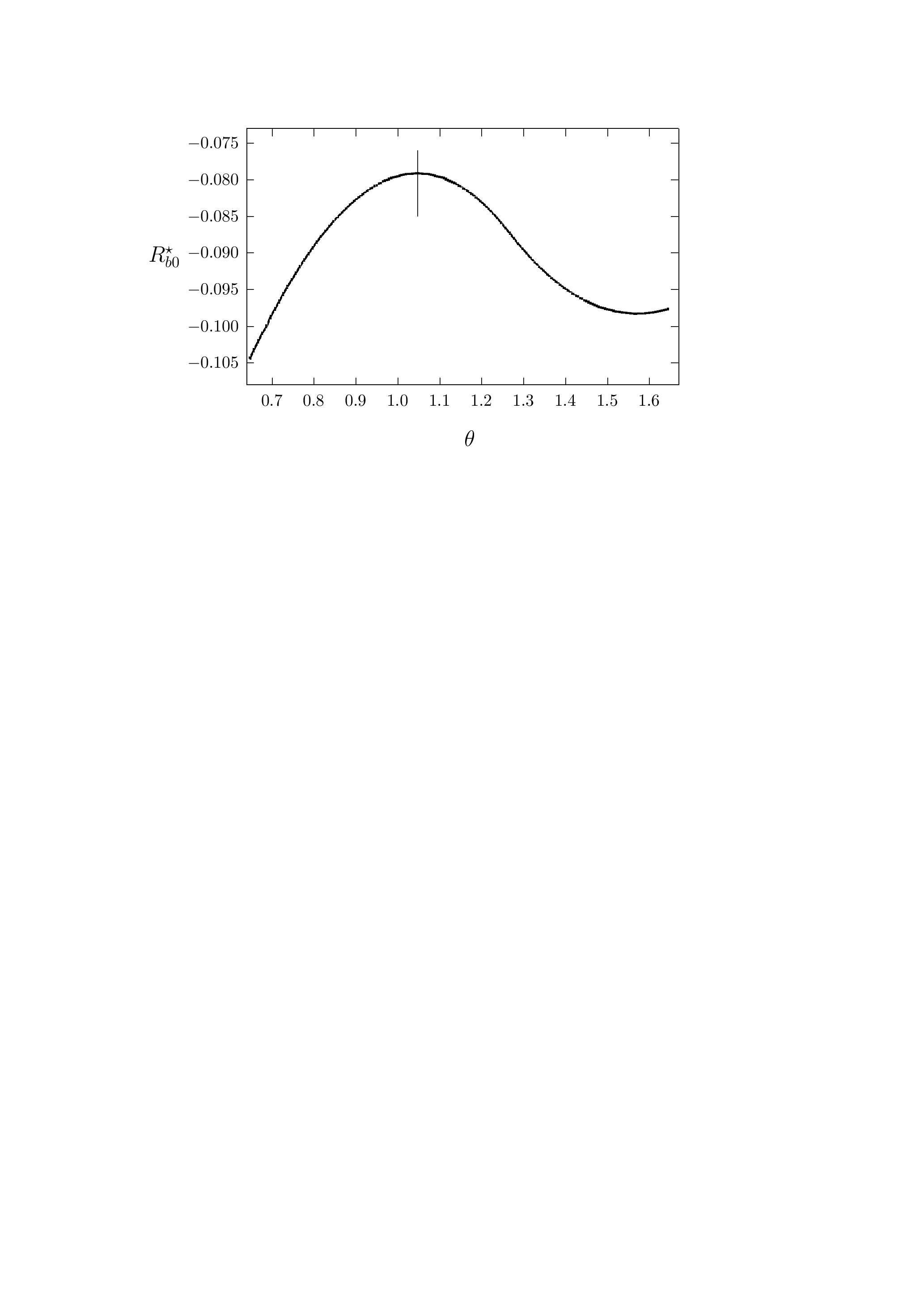}}
{\includegraphics[width = 8cm,height=5.5cm,clip]{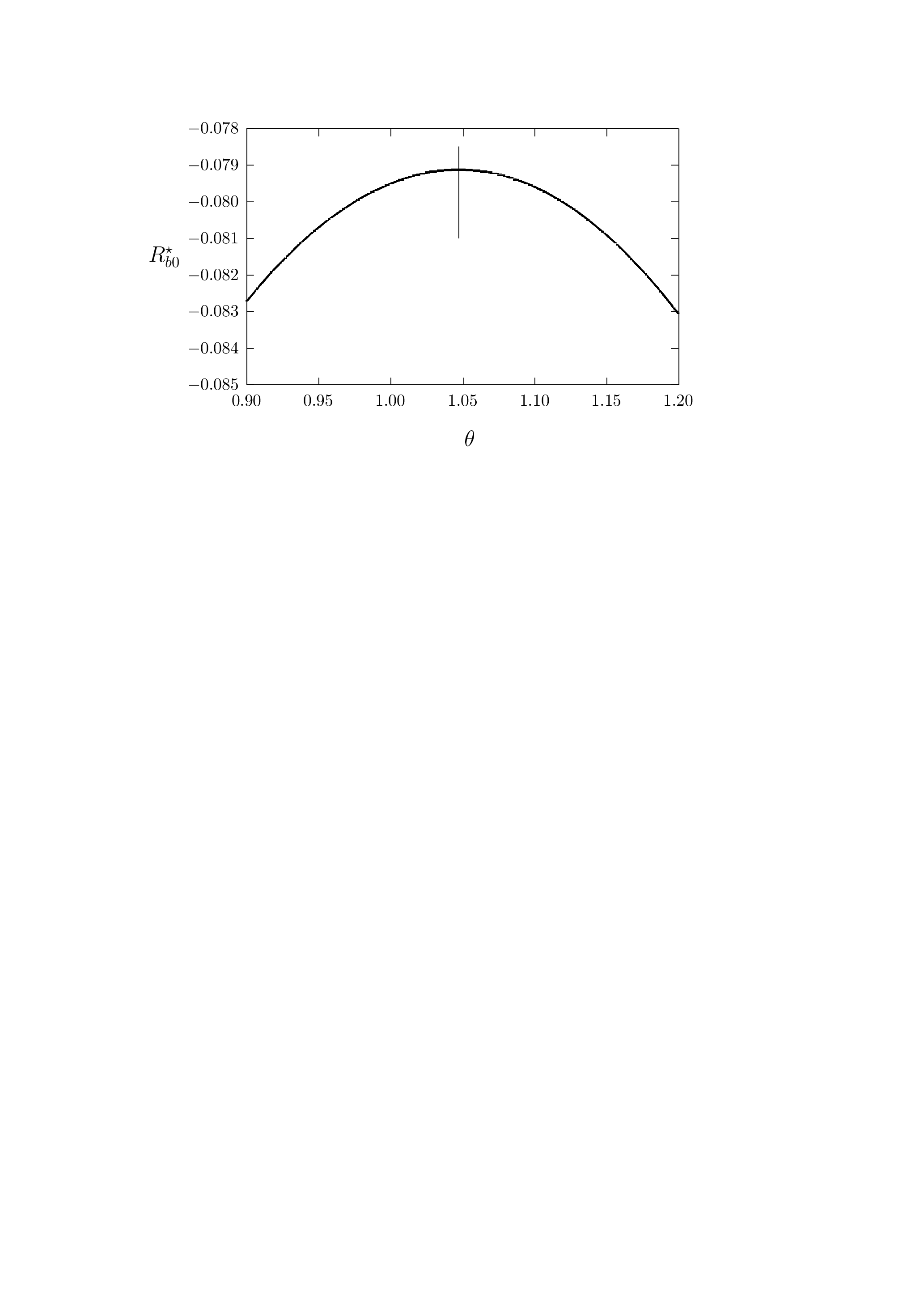}}
\caption{Minimum value $R_{b0}^{\star}$ of $R_{b0}(\kb)$ for all oblique
  lattices of unit area for which $\lb_1=(1/\sqrt{\sin(\theta)},0)$ and
 $\lb_2=(\cos(\theta)/\sqrt{\sin(\theta)},\sqrt{\sin(\theta)})$, so that
  $|\lb_1|=|\lb_2|$ and $|\Omega|=1$. The vertical line denotes
 the hexagonal lattice for which $\theta={\pi/3}$. Left figure:
 the angle $\theta$ between the lattice vectors ranges over
 $0.6<\theta<1.7$.  Right figure: enlargement of the left figure near
 $\theta={\pi/3}$. The vertical line again denotes the hexagonal lattice.}
\label{ang-rb0}
\end{center}
\end{figure}

\begin{figure}[htb]
\begin{center}
{\includegraphics[width = 10cm,height=5.5cm,clip]{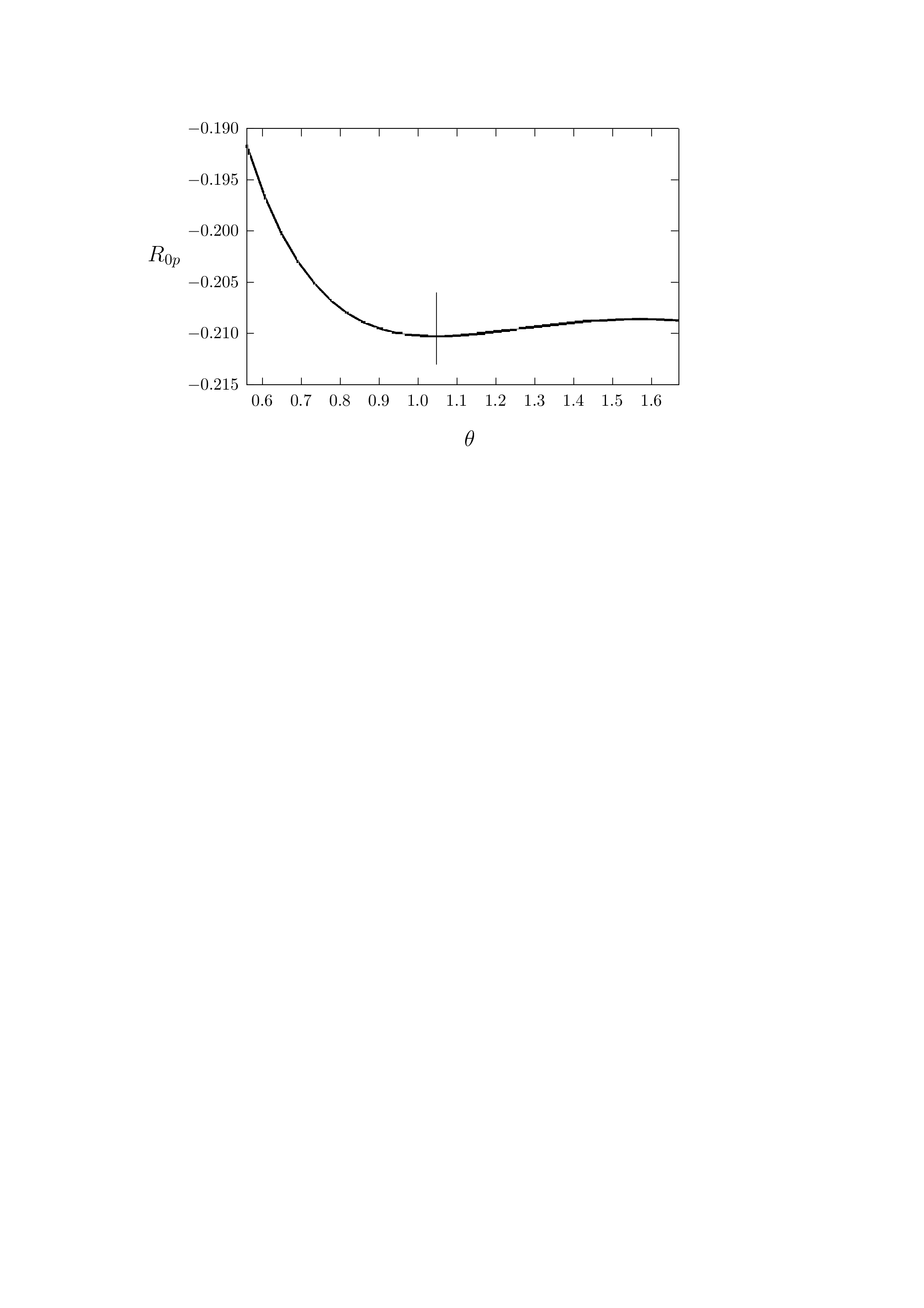}}
\caption{Plot of the the regular part $R_{0p}$, as given in
  (\ref{r0p:chen}) (cf.~\cite{chen}), of the periodic source-neutral
  Green's function for all oblique lattices of unit area for which
  $\lb_1=(1/\sqrt{\sin(\theta)},0)$ and
  $\lb_2=(\cos(\theta)/\sqrt{\sin(\theta)},\sqrt{\sin(\theta)})$, so
  that $|\lb_1|=|\lb_2|$ and $|\Omega|=1$. The vertical line denotes
  the hexagonal lattice for which $\theta={\pi/3}$. The minimum occurs for
  the hexagon.}
\label{fig:r0p}
\end{center}
\end{figure}

We first consider lattices for which $|\lb_1|=|\lb_2|$. For this
subclass of lattices, the lattice vectors are
$\lb_1=(1/\sqrt{\sin(\theta)},0)$ and
$\lb_2=(\cos(\theta)/\sqrt{\sin(\theta)},\sqrt{\sin(\theta)})$.  In
our computations, we first use a coarse grid to find an approximate
location in $\kb$-space of the minimum of $R(\xbo)$ and then we refine
the search.  After establishing by a coarse discretization that the
minimum arises near a vertex of the adjoint lattice, we then sample
more finely near this vertex.  The finest mesh has a resolution of
${\pi/100}$.  To determine the value of $R_{b0}^{\star}$ we
interpolate a paraboloid through the approximate minimum and the four
neighbouring points and evaluate the minimum of the paraboloid.  As we
vary the lattice by increasing $\theta$, we use the approximate
location of the previous minimum as an initial guess.  The value of
$\theta$ is increased by increments of $0.01$.  Our numerical results
in Fig.~\ref{ang-rb0} show that the optimum lattice where
$R_{b0}^{\star}\equiv \min_{\kb} R(0)$ is maximized occurs for the
hexagonal lattice where $\theta={\pi/3}$. In Fig.~\ref{fig:r0p} we
also plot $R_{0p}$ versus $\theta$ (cf.~Theorem 1 of \cite{chen}), given by
\begin{equation}
   R_{0p}=-\frac{1}{2\pi}\log(2\pi) -\frac{1}{2\pi} \ln\Big{\vert}
  \sqrt{ \sin\theta} \, e\left({\xi/12}\right) \prod_{n=1}^{\infty}
  \left(1-e(n\xi)\right)^2\Big{\vert} \,, \quad e(z)\equiv e^{2\pi iz}\,,
 \quad \xi=e^{i\theta} \,. \label{r0p:chen}
\end{equation}

Finally, we consider a more general sweep through the class of oblique
Bravais lattices. We let $\lb_1=(a,0)$ and $\lb_2=(b,c)$, so that with
unit area of the primitive cell, we have $ac=1$ and $b = a^{-1}
\cot\theta$, where $\theta$ is the angle between $\l_1$ and $\l_2$.
We introduce a parameter $\alpha$ by $a=\left(\sin\theta\right)^\alpha$ so that
  \begin{equation}
    c =  \left(\sin\theta\right)^{-\alpha}
    \qquad\mbox{and}\qquad  b = \cos\theta\,\left(\sin\theta\right)^{-\alpha-1}
 \,. \label{new_par}
  \end{equation}
Then $|\lb_1|=|\lb_2|$ when $\alpha=-1/2$, $|\lb_1|=1$ (which is
independent of $\theta$) when $\alpha=0$, and $|\lb_2|=1$ when
$\alpha=-1$.  In the left panel of Fig.~\ref{hex_alpha_1}, we plot
$R_{b0}^{\star}$ versus $\theta$ for $\alpha =-.5,-.4,-.3,-.2,-.1,0$.
The angle, $\theta$, at which the maximum occurs, increases from
$\pi/3$ at $\alpha=-.5$ to about $1.107=\pi/3+.06$ for
$\alpha=0$. However, the value of the maximum is largest for
$\alpha=-.5$ and decreases as $\alpha$ increases to zero.  The regular
hexagon occurs only at $\alpha=-.5$ and $\theta=\pi/3$.  The vertical
line in the plot is at $\theta=\pi/3$. Similarly, in the right panel
of Fig.~\ref{hex_alpha_1} we plot $R_{b0}^{\star}$ versus $\theta$ for
$\alpha = -.5,-.6,-.7,-.8,-.9,-1.0$.  Since there is no preferred
angular orientation for the lattice and since the scale is arbitrary,
the plot is identical to the previous plot, in the sense that the
curves for $\alpha=-0.6$ and $\alpha=-0.4$ in Fig.~\ref{hex_alpha_1}
are identical.  We conclude that it is the regular hexagon that
maximizes $R_{b0}^{\star}$.  These computational results lead to the
following conjecture:

\begin{figure}[htb]
\begin{center}
{\psfrag{Rb0}{$R_{b0}^{\star}$}
\psfrag{theta}{$\theta$}
\psfrag{a=0}{$\alpha=0$}
\psfrag{a=-.1}{$\alpha=-.1$}
\psfrag{a=-.2}{$\alpha=-.2$}
\psfrag{a=-.3}{$\alpha=-.3$}
\psfrag{a=-.4}{$\alpha=-.4$}
\psfrag{a=-.5}{$\alpha=-.5$}
\includegraphics[width = 5.5cm,height=8.0cm,clip,angle=-90]{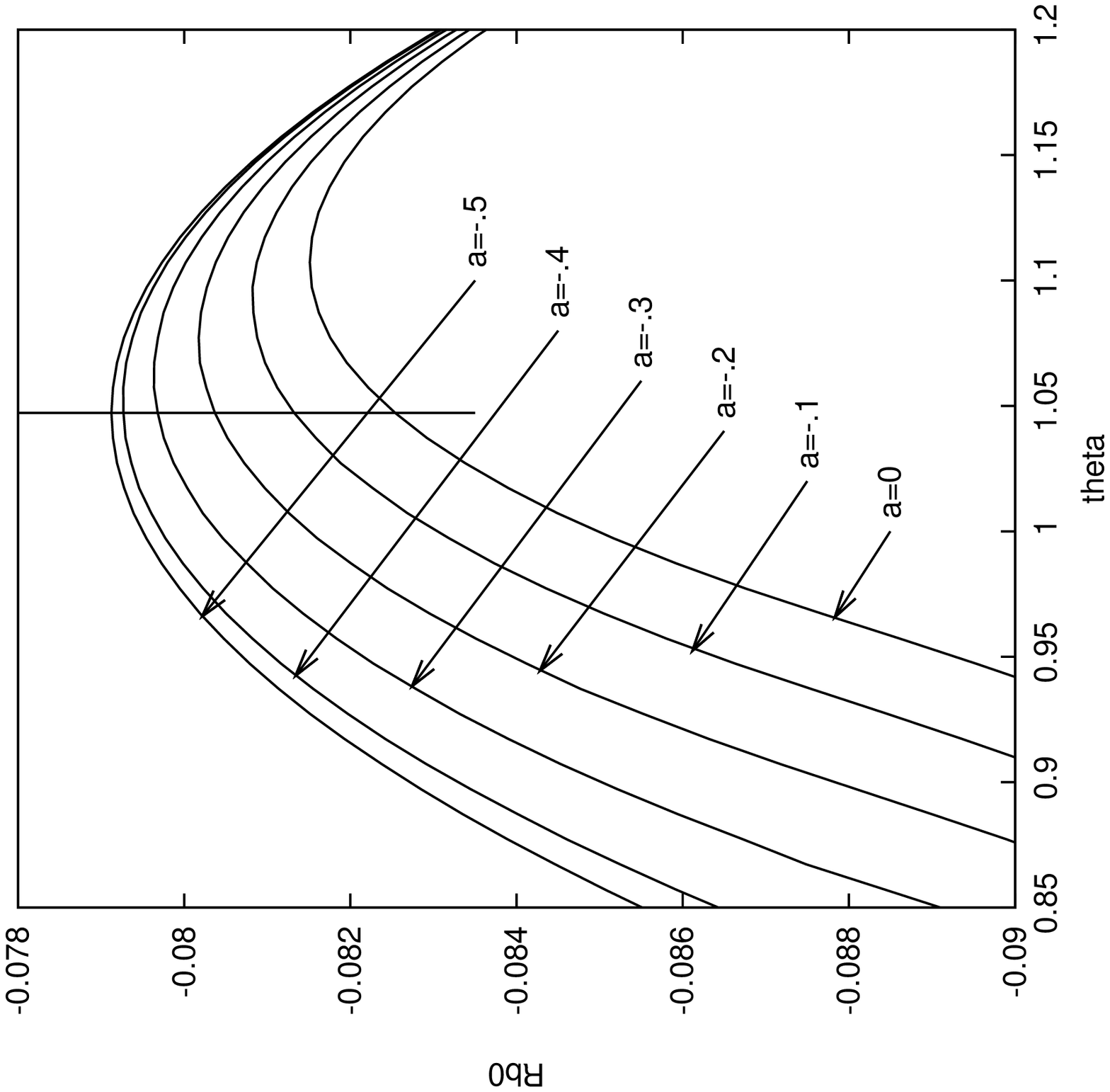}}
{\psfrag{Rb0}{$R_{b0}^{\star}$}
\psfrag{theta}{$\theta$}
\psfrag{a=0}{$\alpha=1$}
\psfrag{a=-.1}{$\alpha=-.9$}
\psfrag{a=-.2}{$\alpha=-.8$}
\psfrag{a=-.3}{$\alpha=-.7$}
\psfrag{a=-.4}{$\alpha=-.6$}
\psfrag{a=-.5}{$\alpha=-.5$}
\includegraphics[width = 5.5cm,height=8.0cm,clip,angle=-90]{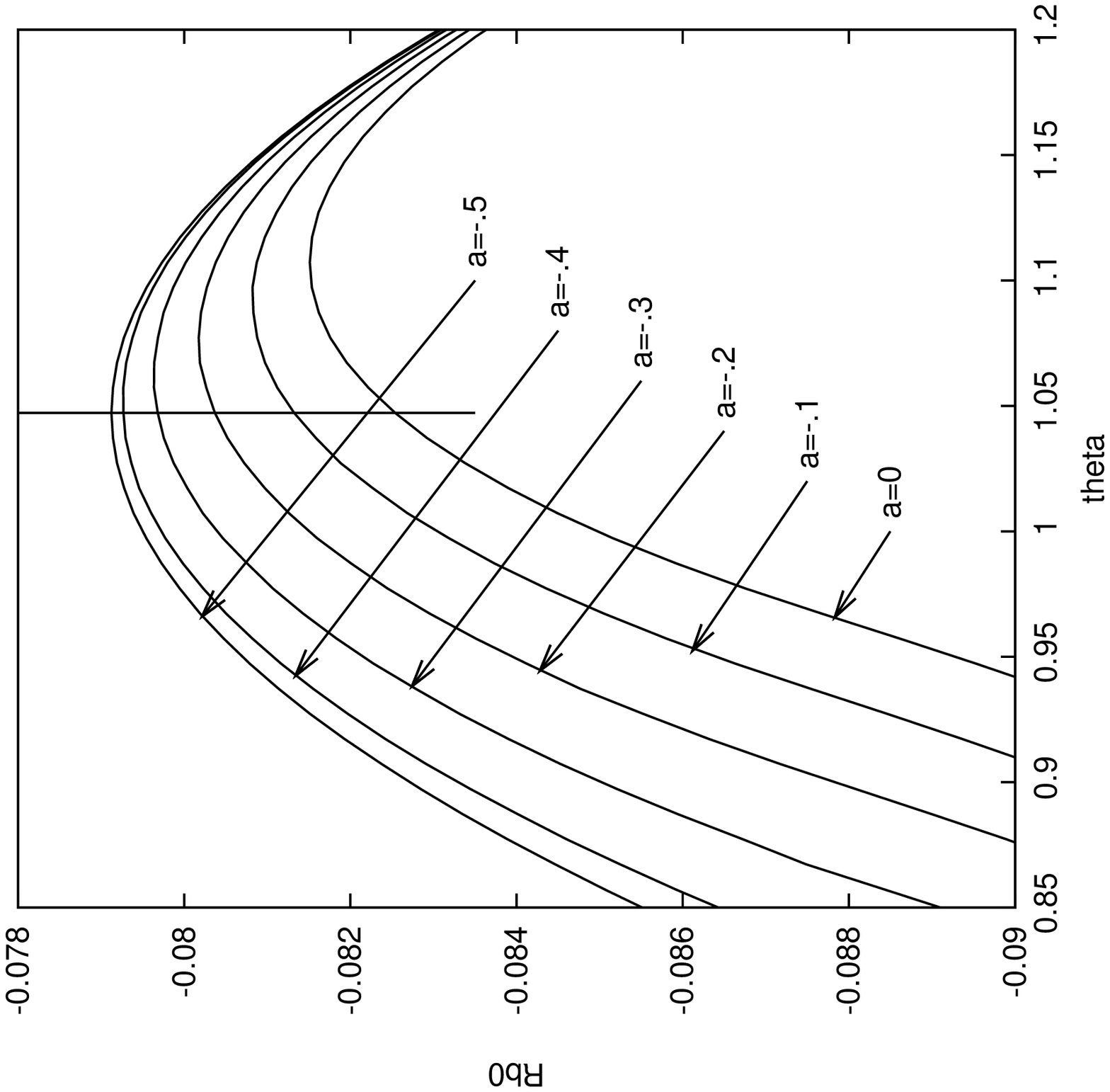}}
\caption{Plot of $R_{b0}^{\star}$ versus $\theta$ for oblique lattices with
$\lb_1=(a,0)$ and $\lb_2=(b,c)$, where $a=\left(\sin\theta\right)^\alpha$
with $b$ and $c$ given in (\ref{new_par}). Left panel: plots are for
$\alpha =-.5,-.4,-.3,-.2,-.1,0$. Right panel: plots are for
$\alpha =-.5,-.6,-.7,-.8,-.9,-1.0$.}
\label{hex_alpha_1}
\end{center}
\end{figure}

\begin{conjecture}\label{conj 1}  Within the class of Bravais
lattices of a common area, $R_{b0}^{\star}$ is maximized for a regular
hexagonal lattice.
\end{conjecture}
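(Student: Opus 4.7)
The plan is to parameterize all unit-area oblique Bravais lattices by a modular parameter $\tau \in \mathbb{H}$, with $\pmb l_1(\tau),\pmb l_2(\tau)$ chosen so that $|\pmb l_1 \times \pmb l_2|=1$. The map $\tau \mapsto R_{b0}^{\star}(\tau)\equiv \min_{\kb} R_{b0}(\kb;\tau)$ is invariant under the action of $SL_2(\mathbb{Z})$ (since this group permutes lattice generators while preserving the lattice, and $R_{b0}$ depends only on the lattice itself), so I would restrict attention to the standard fundamental domain $\mathcal{F}=\{\tau\in\mathbb{H} : |\tau|\ge 1,\; |\mathrm{Re}(\tau)|\le 1/2\}$. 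The hexagonal lattice corresponds to the corner $\tau_H = e^{i\pi/3}$ and the square lattice to $\tau_S = i$. The goal is then to prove that the smooth function $\tau\mapsto R_{b0}^{\star}(\tau)$ attains its global maximum on $\mathcal{F}$ at $\tau_H$.

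The first substantive step is to identify the minimizing Bloch vector $\kb^{\star}=\kb^{\star}(\tau)$. By Lemma \ref{lemma 2.2}, $R_{b0}\to +\infty$ as $|\kb|\to 0$, so the minimum is attained away from the origin. Combined with Lemma \ref{lemma 2.1} (real-valuedness) and the evenness $R_{b0}(-\kb)=R_{b0}(\kb)$, I expect the minimum to be pinned at a high-symmetry point of $\Omega_B$ — specifically a vertex ($K$-point) of the Brillouin zone. I would verify this by starting at $\tau_H$, where the six-fold symmetry forces $\nabla_{\kb} R_{b0}(\kb^{\star}_H;\tau_H)=0$ at the $K$-point, computing the Hessian of $R_{b0}(\cdot;\tau_H)$ in $\kb$ from the Ewald representation (\ref{g:rb0_key}) to confirm it is positive definite there, and then using the implicit function theorem to track $\kb^{\star}(\tau)$ as $\tau$ varies in $\mathcal{F}$.

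Once $\kb^{\star}(\tau)$ is identified, I would recast $R_{b0}^{\star}(\tau)$ as a twisted Eisenstein/theta lattice sum. Using the Mellin identity $|2\pi\pmb d-\kb^{\star}|^{-2}=\int_0^\infty e^{-t|2\pi\pmb d-\kb^{\star}|^2}\,dt$ on the Fourier piece of (\ref{g:rb0_key}) and Poisson summation back to the direct lattice, $R_{b0}^{\star}(\tau)$ is expressed (up to $\tau$-independent constants) as
\begin{equation*}
R_{b0}^{\star}(\tau) = \int_0^{\infty} \Big[\Theta(t;\tau,\kb^{\star}) - \text{(regularizing subtraction)}\Big]\frac{dt}{t},\qquad \Theta(t;\tau,\kb^{\star})\equiv \sum_{\pmb l\in\Lambda(\tau)} e^{i\kb^{\star}\cdot\pmb l}\, e^{-|\pmb l|^2/(4t)}.
\end{equation*}
It then suffices to show that for every $t>0$, $\Theta(t;\tau,\kb^{\star}(\tau))$ is maximized over $\tau\in\mathcal{F}$ at $\tau=\tau_H$. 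For the untwisted theta function this is precisely Montgomery's theorem (1988), so the task becomes adapting that argument to the twisted phases $e^{i\kb^{\star}\cdot\pmb l}$. Because $\kb^{\star}$ is a $K$-point, these phases take only the three cube-root-of-unity values on each reciprocal shell, and I would group terms shell by shell and show that the real-part contribution of each shell respects the Montgomery inequality. A boundary check on $\partial\mathcal{F}$, together with a Hessian computation at $\tau_H$ on $\mathcal{F}$, would rule out any interior critical point and confirm the extremum is on the corner.

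The principal obstacle will be the twisted-theta extremality: Montgomery's proof crucially uses the positivity of all Fourier coefficients of the untwisted theta series, which fails here because the phases $e^{i\kb^{\star}\cdot\pmb l}$ introduce complex contributions that only combine constructively after shell-wise cancellation. An auxiliary difficulty is that $\kb^{\star}(\tau)$ may bifurcate between a vertex ($K$-type) and an edge midpoint ($M$-type) of the Brillouin zone as $\tau$ moves in $\mathcal{F}$, necessitating a case analysis in which the relevant theta-function inequality changes form. Finally, because the sum at $s=1$ requires regularization (the Kronecker limit formula contribution), one must verify that the regularizing counterterm is itself modular-invariant and does not disturb the extremality — a technical bookkeeping step that nevertheless must be handled with care so that the final optimum lands precisely at $\tau_H$ rather than at $\tau_S$ or at an interior point of $\mathcal{F}$.
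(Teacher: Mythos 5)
There is a genuine gap here, and it is worth being clear about its nature: the statement you are addressing is stated in the paper as Conjecture \ref{conj 1}, not as a theorem. The paper offers no proof at all — its support is purely numerical, namely the Ewald-type evaluation of $R(\xbo)$ in (\ref{g:rb0_key}) minimized over $\kb$ and then swept over the two-parameter family of oblique lattices in \S\ref{sec:ewald_0} (Tables \ref{tab:optim} and the plots of $R_{b0}^{\star}$ versus $\theta$ and $\alpha$), and the Discussion explicitly lists an analytic proof of this conjecture as an open problem. So your proposal cannot be measured against a ``paper proof''; the real question is whether it closes the open problem, and as written it does not. Every step that would constitute the mathematical content is deferred: (i) that the minimizing Bloch vector $\kb^{\star}(\tau)$ sits at a Brillouin-zone vertex is, in the paper, only a numerical observation made to seed the search, and your implicit-function-theorem continuation explicitly leaves open the possibility of a $K$/$M$ bifurcation, which would break the whole shell-wise scheme; (ii) the reduction to ``$\Theta(t;\tau,\kb^{\star}(\tau))$ is extremal at $\tau_H$ for every $t>0$'' is strictly stronger than the conjecture and, as you yourself note, cannot be proved by Montgomery's argument because the cube-root-of-unity phases destroy the positivity of the Fourier coefficients on which that proof rests; no replacement mechanism is supplied, and asserting that each reciprocal shell ``respects the Montgomery inequality'' is precisely the unproved claim. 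Note also a structural mismatch you should confront: Montgomery's theorem is a minimization statement for the untwisted theta function, whereas here the quantity is a maximum over lattices of a minimum over $\kb$, and because $\kb^{\star}$ itself moves with $\tau$, a pointwise-in-$t$ comparison of twisted theta functions at the respective $\kb^{\star}(\tau)$ does not obviously control the min--max without an additional monotonicity or envelope argument.

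In short, your outline is a reasonable research program — modular reduction to the fundamental domain, identification of the optimizer with the corner $\tau_H=e^{i\pi/3}$, and an attempted twisted analogue of the theta-extremality results that underlie the known result that $R_{0p}$ is minimized on the hexagonal lattice (Theorem 2 of the Chen--Oshita reference used in the paper) — but the three ``obstacles'' you list are not technical bookkeeping; they are the open problem itself. A submission of this argument as a proof would be rejected for exactly the reasons you flag in your final paragraph. If you want to make progress, the most tractable missing piece is probably a rigorous characterization of $\arg\min_{\kb}R_{b0}(\kb;\tau)$ (e.g., proving it is attained at a high-symmetry point for all oblique lattices, perhaps via the positivity structure of the matrix ${\cal Q}$ in Lemma \ref{lemma 2.2} and symmetry/convexity of $R_{b0}$ on the Brillouin zone); without that, the subsequent theta-function machinery has no fixed object to act on.
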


\setcounter{equation}{0} \setcounter{section}{6}
\section{Discussion}\label{sect:disc}

We have studied the linear stability of steady-state periodic patterns
of localized spots for the GM and Schnakenburg RD models when the
spots are centered for $\eps \to 0$ at the lattice points of a Bravais
lattice with constant area $|\Omega|$.  To leading order in
$\nu={-1/\log\eps}$, the linearization of the steady-state periodic
spot pattern has a zero eigenvalue when $D={D_0/\nu}$ for some $D_0$
independent of the lattice and the Bloch wavevector $\kb$. The
critical value $D_0$ can be identified from the leading-order NLEP
theory of \cite{WGM1} and \cite{survey_Wei:2008}.  This zero
eigenvalue corresponds to a competition instability of the spot
amplitudes (cf.~\cite{WGM1}, \cite{KWW_schnak}, \cite{cw_1}, and
\cite{survey_Wei:2008}). By using a combination of the method of
matched asymptotic expansions, Floquet-Bloch theory, and the rigorous
imposition of solvability conditions for perturbations of certain
nonlocal eigenvalue problems, we have explicitly determined the
continuous band of spectrum that lies within an ${\mathcal O}(\nu)$
neighborhood of the origin in the spectral plane when $D={D_0/\nu} +
D_1$, where $D_1={\mathcal O}(1)$ is a de-tuning parameter.  This
continuous band is real-valued, and depends on the regular part of the
Bloch Green's function and $D_1$. In this way, for each RD model, we
have derived a specific objective function that must be maximized in
order to determine the specific periodic arrangement of localized
spots that is linearly stable for the largest value of $D$. A simple
alternative method to derive this objective function was also given
and applied to the GS model.  From a numerical computation, based on
an Ewald-type algorithm, of the regular part of the Bloch Green's function
that defines the objective function, we have shown within the class of
oblique Bravais lattices that a hexagonal lattice arrangement of spots
is the most stable to competition instabilities.

Although we have focused our analysis only on the Schnakenburg, GM,
and GS models, our asymptotic methodology to derive the
model-dependent objective function that determines the optimally
stable lattice arrangement of spots is readily extended to general RD
systems in the semi-strong interaction regime, such as the Brusselator
RD model (cf.~\cite{RRW}). Either the simple method of \S \ref{simp},
or the more elaborate but systematic method of \S~\ref{schnak} and
\S~\ref{gm}, can then be used to derive the objective function.

There are a few open problems that warrant further investigation. One
central issue is to place our formal asymptotic theory on a more
rigorous footing.  In this direction, it is an open problem to
rigorously characterize the continuous band of spectrum that lies near
the origin when $D$ is near the critical value. In addition, is it
possible to analytically prove Conjecture \ref{conj 1} that, within
the class of oblique Bravais lattices of a common area,
$R_{b0}^{\star}$ is maximized for a hexagonal lattice?

As possible extensions to this work, it would be interesting to
characterize lattice arrangements of spots that maximize the Hopf
bifurcation threshold in $\tau$. To analyze this problem, one would
have to calculate any continuous band of spectra that lies within an
${\mathcal O}(\nu)$ neighborhood of the Hopf bifurcation frequency
$\lambda=i\lambda_{I0}$ when $\tau-\tau_I\ll 1$, where $\tau_I$ and
$\lambda_{I0}$ is the Hopf bifurcation threshold and frequency,
respectively, on the Wigner-Seitz cell.

We remark that we have not analyzed any weak instabilities due to
eigenvalues of order $\lambda={\mathcal O}(\eps^2)$ associated with
the translation modes. It would be interesting to determine
steady-state lattice arrangements of localized spots that optimize the
linear stability properties of these modes. For these translation
modes we might expect, in contrast to what we found in this paper for
competition instabilities (see Remark \ref{remark 1} and Lemma
\ref{lemma 2.2}), that it is the long-wavelength instabilities with
$|\kb|\ll 1$ that destabilize the pattern. Long-wavelength
instabilities have been shown to be the destabilizing mechanism for
periodic solutions on 3-D Bravais lattices of two-component RD systems
in the weakly nonlinear Turing regime (cf.~\cite{call_1},
\cite{call_2}).

Finally, it would be interesting to examine the linear stability properties
of a collection of $N\gg 1$ regularly-spaced localized spots on a
large but finite domain with Neumann boundary conditions, and to
compare the spectral properties of this finite domain problem with that of
the periodic problem in $\R^2$. For the finite domain problem, we
expect that there are $N$ discrete eigenvalues (counting multiplicity)
that are asymptotically close to the origin in the spectral plane when
$D$ is close to a critical threshold. Research in this direction is in
progress.

\vspace*{-0.2cm}
\section*{Acknowledgements}
D.~I. and M.~J.~W. were supported by NSERC (Canada). Prof. Juncheng
Wei was partially supported by an Earmarked Grant from RGC of Hong
Kong and by NSERC (Canada). M.J.W.~is grateful to Prof.~Edgar Knobloch
(U.C. Berkeley) for his comments regarding the de-stabilizing
mechanisms of periodic weakly-nonlinear Turing patterns on
lattices.

\vspace*{-0.2cm}
\appendix
\newcommand{\newsection}[1]{{\setcounter{equation}{0}}\section{#1}}
\renewcommand{\theequation}{\Alph{section}.\arabic{equation}}
\newsection{Schnakenburg Model: Expansion of the Core Problem}\label{app:schnak}

We outline the derivation of the results of Lemma \ref{lemma 3.1}, as
given in \S 6 of \cite{KWW_schnak}, and those of Lemma \ref{lemma 5.1}.
To motivate the appropriate scaling for solutions $U$, $V$, and $\chi$
to (\ref{eq1:core}) for $S\to 0$. Upon writing $U={\cal U} S^{-p}$,
$V={\cal V} S^{p}$, where ${\cal U}$ and ${\cal V}$ are ${\mathcal O}(1)$
as $S\to 0$, we obtain that the $V$-equation in (\ref{eq1:core}) is
unchanged, but that the $U$ equation becomes
\begin{equation*}
  \Delta_\rho {\cal U} = S^{2p}\,  {\cal U} {\cal V}^2 \,; \qquad
  {\cal U} \sim S^{1+p}\log\rho + S^{p} \chi \, \quad \mbox{as}\quad
 \rho\to \infty \,.
\end{equation*}
From equating powers of $S$ after first applying the divergence theorem,
we obtain that $2p=p+1$, which yields $p=1$. Then,
to ensure that ${\cal U}={\mathcal O}(1)$, we must have
$\chi={\mathcal O}(S^{-p})$. This shows that if $S=S_0\nu^{1/2}$ where
$\nu\ll 1$, the appropriate scalings are $V={\mathcal O}(\nu^{1/2})$,
$U={\mathcal O}(\nu^{-1/2})$, and $\chi={\mathcal O}(\nu^{-1/2})$.

With this basic scaling, we then proceed to calculate higher order
terms in the expansion of the solution to the core problem by writing
$S=S_0\nu^{1/2} + S_1\nu^{3/2} + \cdots$ and then determining the first two
terms in the asymptotic solution $U$, $V$, and $\chi$ to
(\ref{eq1:core}) in terms of $S_0$ and $S_1$. The appropriate
expansion for these quantities is (see (6.2) of\cite{KWW_schnak})
\begin{equation}
 V\sim \nu^{1/2}\left(V_{0} + \nu V_1 + \cdots\right) \,, \qquad
    \left(\chi\,, U \right) =  \nu^{-1/2}\left[ \left(\chi_{0}\,, U_{0}
   \right) + \nu \left(\chi_{1}\,, U_{1} \right) + \cdots \right] \,.
  \label{nlep:expan}
\end{equation}
Upon substituting (\ref{nlep:expan}) into (\ref{eq1:core}), and collecting
powers of $\nu$, we obtain that $U_{0}$ and $V_{0}$ satisfy
\begin{equation} \label{nlep:eq1}
\begin{split}
 \Delta_\rho V_0  - V_{0} + U_{0} V_{0}^2
 &= 0 \,; \qquad  \Delta_\rho U_0 =0 \,,
 \qquad 0\leq \rho<\infty \,,
  \\ V_{0} \to 0 \,, \qquad U_{0} &\to \chi_{0}
 \quad \mbox{as} \quad \rho\to \infty \,; \qquad
V_0^{\prime}(0)=U_0^{\prime}(0)=0
 \,,
\end{split}
\end{equation}
where $\Delta_\rho V_0\equiv V_{0}^{\p\p}+\rho^{-1}V_0^{\p}$.  At next
order, $U_{1}$ and $V_{1}$ satisfy
\begin{equation}
\label{nlep:eq2}
\begin{split}
  \Delta_\rho V_1 - V_{1} +
 2 U_{0} V_{0} V_{1} &= -U_{1} V_{0}^2\,;
   \qquad \Delta_\rho U_1 = U_{0} V_{0}^2 \,,
 \qquad 0\leq \rho<\infty \,,   \\
   V_{1} \to 0 \,, \quad & \quad U_{1}\to S_{0}\log\rho + \chi_{1}
  \quad \mbox{as} \quad \rho\to \infty \,;
\qquad V_1^{\prime}(0)=U_1^{\prime}(0)=0 \,.
\end{split}
\end{equation}
Then, at one higher order, we get that $U_{2}$ satisfies
\begin{equation}
  \Delta_\rho U_2 = U_{1} V_{0}^2 + 2 U_{0} V_{0} V_{1} \,, \qquad
  0\leq \rho<\infty \,; \qquad U_{2}\sim S_{1}\log\rho + \chi_{2}
  \quad \mbox{as} \quad \rho\to \infty \,; \qquad U_2^{\prime}(0)=0
  \,. \label{nlep:eq3}
\end{equation}

The solution to (\ref{nlep:eq1}) is simply $U_{0}=\chi_{0}$ and
$V_{0}={ w/ \chi_{0}}$, where $w(\rho)>0$ is the unique radially
symmetric solution of $\Delta_{\rho} w-w + w^2=0$ with $w(0)>0$ and
$w\to 0$ as $\rho\to\infty$.  To determine $\chi_0$ in terms of $S_0$
we apply the divergence theorem to the $U_{1}$ equation in
(\ref{nlep:eq2}) to obtain
\begin{equation}
   S_{0}= \int_{0}^{\infty} U_{0} V_{0}^2 \rho \, d\rho =
 \frac{b}{\chi_{0}}  \,, \qquad b\equiv \int_{0}^{\infty} \rho w^2 \, d\rho \,.
  \label{nlep:ssol1}
\end{equation}

It is then convenient to decompose $U_{1}$ and $V_{1}$ in terms of
new variables $U_{1p}$ and $V_{1p}$ by
\begin{equation}
  U_{1}= \chi_1 + \frac{U_{1p}}{\chi_0} \,, \qquad
  V_{1}= -\frac{\chi_1 w}{\chi_0^2}  + \frac{V_{1p}}{\chi_0^3} \,.
 \label{nlep:newvar}
\end{equation}
Upon substituting $U_0=\chi_0$, $V_0={w/\chi_0}$, (\ref{nlep:ssol1}), and
(\ref{nlep:newvar}) into (\ref{nlep:eq2}), and by using
$\Delta_{\rho} w - w + 2 w^2 = w^2$, we readily obtain that $U_{1p}$
and $V_{1p}$ are the unique radially symmetric solutions of
(\ref{score:exp_3}). Finally, we use the divergence theorem on the
$U_{2}$ equation in (\ref{nlep:eq3}) to determine $\chi_1$ in terms of
$S_1$ as
\begin{equation*}
  S_1 = \int_{0}^{\infty} \left( 2 U_0 V_0 V_1 + U_1 V_0^2\right)\rho \, d\rho
  = -\frac{\chi_1}{\chi_0^2} \int_{0}^{\infty} w^2 \rho \, d\rho +
  \frac{1}{\chi_0^3} \int_{0}^{\infty} \left(2w V_{1p} + w^2 U_{1p}\right)\rho
 \, d\rho \,.
\end{equation*}
We then use $\Delta_\rho V_{1p}-V_{1p}=-w^2 U_{1p}-2w V_{1p}$ in the
integral, as obtained from (\ref{score:exp_3}), and we simplify the
resulting expression by using $U_0=\chi_0$ and $V_0={w/\chi_0}$. This
yields $S_{1} = -b^{-1}\chi_1 S_0^2 + b^{-3}S_0^3\int_{0}^{\infty} V_{1p} \rho \,
d\rho$, which gives (\ref{score:exp_4}) for $\chi_1$.
This completes the derivation of Lemma \ref{lemma 3.1}.

To obtain the result in Lemma \ref{lemma 5.1}, we set $S=S_0\nu^{1/2}$
and $S_1=0$ in (\ref{score:exp}) to obtain
\begin{equation}\label{app:sch:short}
  V \sim \frac{S}{S_0} \left( \frac{w}{\chi_0} + \frac{S^2}{S_0^2}
  \left(-\frac{\chi_1 w}{\chi_0^2} + \frac{V_{1p}}{\chi_0^3}\right)\right)\,,
\qquad
  U \sim \frac{S_0}{S} \left( \chi_0 + \frac{S^2}{S_0^2}
  \left(\chi_1 + \frac{U_{1p}}{\chi_0}\right)\right)\,, \qquad
 \chi\sim \frac{S_0\chi_0}{S} + \frac{S}{b^2} \int_0^{\infty} V_{1p}\rho\,
 d\rho \,,
\end{equation}
since $\chi_1=S_0 b^{-2}\int_0^{\infty} V_{1p}\rho \, d\rho$ from
(\ref{score:exp_4}). Finally, since $S_0\chi_0=b$ from
(\ref{score:exp_4}), (\ref{app:sch:short}) reduces to (\ref{simp:exp})
of Lemma \ref{lemma 5.1}.

\appendix
\setcounter{section}{1}
\renewcommand{\theequation}{\Alph{section}.\arabic{equation}}
\newsection{Gierer-Meinhardt Model: Expansion of the Core Problem}\label{app:gm}

We outline the derivation of the results of Lemma \ref{lemma 4.1} and
Lemma \ref{lemma 5.2}.  To motivate the scalings for the solution $U$,
$V$, and $\chi$ to (\ref{geq1:core}) as $S\to 0$, we write $U={\cal U}
S^{p}$, $V={\cal V} S^{p}$, where ${\cal U}$ and ${\cal V}$ are
${\mathcal O}(1)$ as $S\to 0$. We obtain that the $V$-equation in
(\ref{geq1:core}) is unchanged, but that the $U$ equation becomes
\begin{equation*}
  \Delta_\rho {\cal U} = -S^{p} {\cal V}^2 \,; \qquad
  {\cal U} \sim -S^{1-p}\log\rho + S^{-p} \chi \, \quad \mbox{as}\quad
 \rho\to \infty \,.
\end{equation*}
From equating powers of $S$ after applying the divergence theorem it
follows that $p=1-p$, which yields $p={1/2}$.  Then, $\chi={\mathcal
  O}(S^{1/2})$ ensures that ${\cal U}={\mathcal O}(1)$.  This shows
that if $S=S_0\nu^{2}$ where $\nu\ll 1$, the appropriate scalings are
that $V$, $U$, and $\chi$ are all ${\mathcal O}(\nu)$.
To obtain a two-term expansion for the solution to the core problem,
as given in Lemma \ref{lemma 4.1}, we expand $S=S_0\nu^{2} + S_1\nu^3 + \cdots$
and we seek to determine the solution $U$, $V$, and $\chi$ to
(\ref{geq1:core}) in terms of $S_0$ and $S_1$. The appropriate
expansion for these quantities has the form
\begin{equation}
 \left( V \,,  U \,, \chi \right) =
 \nu \left(  V_0 \,, U_{0}\,, \chi_0 \right) +
 \nu^2 \left(  V_1 \,, U_{1}\,, \chi_1 \right) +
 \nu^3 \left(  V_2 \,, U_{2}\,, \chi_2 \right) +\cdots\,. \label{g_nlep:expan}
\end{equation}
Upon substituting (\ref{g_nlep:expan}) into (\ref{geq1:core}), and collecting
powers of $\nu$, we obtain that $U_{0}$ and $V_{0}$ satisfy
\begin{equation}\label{g_nlep:eq1}
\begin{split}
 \Delta_\rho V_0  - V_{0} + {V_{0}^2/U_0}
 &= 0 \,; \qquad  \Delta_\rho U_0 =0 \,,
 \qquad 0\leq \rho<\infty \,,
  \\ V_{0} \to 0 \,, \qquad U_{0} &\to \chi_{0}
 \quad \mbox{as} \quad \rho\to \infty \,; \qquad V_0^{\prime}(0)=
 U_0^{\prime}(0)=0 \,,
\end{split}
\end{equation}
where $\Delta_\rho V_0\equiv V_{0}^{\p\p}+\rho^{-1}V_0^{\p}$.  At next
order, $U_{1}$ and $V_{1}$ satisfy
\begin{equation}
\label{g_nlep:eq2}
\begin{split}
  \Delta_\rho V_1 - V_{1} +
  \frac{2V_{0}}{U_0} V_{1} &= \frac{V_{0}^2}{U_0^2} U_1 \,;
   \qquad \Delta_\rho U_1 = - V_{0}^2 \,,
 \qquad 0\leq \rho<\infty \,,   \\
   V_{1} \to 0 \,, \quad & \quad U_{1}\to -S_{0}\log\rho + \chi_{1}
  \quad \mbox{as} \quad \rho\to \infty \,;
\qquad V_1^{\prime}(0)=U_1^{\prime}(0)=0 \,.
\end{split}
\end{equation}
Then, at one higher order, we get that $U_{2}$ satisfies
\begin{equation}
  \Delta_\rho U_2 = -2V_0 V_1 \,, \qquad
 0\leq \rho<\infty \,; \qquad U_{2}\sim -S_{1}\log\rho + \chi_{2}
  \quad \mbox{as} \quad \rho\to \infty \,; \qquad U_2^{\prime}(0)=0 \,.
\label{g_nlep:eq3}
\end{equation}

The solution to (\ref{g_nlep:eq1}) is simply $U_{0}=\chi_{0}$ and
$V_{0}= \chi_0 w$, where $w(\rho)>0$ is the radially symmetric
ground-state solution of $\Delta_{\rho} w-w + w^2=0$. Next, by
applying the divergence theorem to the $U_{1}$ equation in
(\ref{g_nlep:eq2}) we obtain
\begin{equation}
   S_{0}= \int_{0}^{\infty} \rho V_{0}^2 \, d\rho =
 \chi_0^2 b  \,, \qquad b\equiv \int_{0}^{\infty} \rho w^2 \, d\rho \,.
  \label{g_nlep:ssol1}
\end{equation}

It is then convenient to decompose $U_{1}$ and $V_{1}$ in terms of
new variables $U_{1p}$ and $V_{1p}$ by
\begin{equation}
  U_{1}= \chi_1 + S_0 U_{1p}  \,, \qquad
  V_{1}= \chi_1 w + S_0 V_{1p} \,.
 \label{g_nlep:newvar}
\end{equation}
Upon substituting $U_0=\chi_0$, $V_0=\chi_0 w$, (\ref{g_nlep:ssol1}),
and (\ref{g_nlep:newvar}) into (\ref{g_nlep:eq2}), and by using
$\Delta_{\rho} w - w + 2 w^2 = w^2$, we readily obtain that $U_{1p}$
and $V_{1p}$ are the unique radially symmetric solutions of
(\ref{gcore:exp_3}). Finally, we use the divergence theorem on the
$U_{2}$ equation in (\ref{g_nlep:eq3}) to obtain
$2\chi_0\chi_1 b  +  2\chi_0 S_0 \int_{0}^{\infty} w V_{1p}\rho\,d\rho=S_1$,
which readily yields (\ref{gcore:exp_4}).

To obtain the result in Lemma \ref{lemma 5.2}, we set $S=S_0\nu^{2}$
and $S_1=0$ in (\ref{gcore:exp}), with $\chi_0^2={S_0/b}$ from
(\ref{gcore:exp_4}), to get
\begin{equation}\label{app:gm:short}
  V \sim \sqrt{\frac{S}{S_0}} \chi_0 w + \frac{S}{S_0} \left(\chi_1 w +
  S_0 V_{1p}\right) \,, \qquad
  U \sim \sqrt{\frac{S}{S_0}} \chi_0  + \frac{S}{S_0} \left(\chi_1 +
  S_0 U_{1p}\right) \,, \qquad
  \chi \sim \sqrt{\frac{S}{S_0}} \chi_0  + \frac{S}{S_0} \chi_1 \,,
\end{equation}
where $\chi_1=-S_0 b^{-1}\int_0^{\infty} w V_{1p}\rho \, d\rho$ from
(\ref{gcore:exp_4}). Since $S_0=b\chi_0^2$ from
(\ref{gcore:exp_4}), (\ref{app:gm:short}) reduces to (\ref{gsimp:exp})
of Lemma \ref{lemma 5.2}.

\end{document}